\newtheorem{theorem}{Theorem}%[section]
\newtheorem{lemma}{Lemma}%[section]
\newtheorem{proposition}{Proposition}%[section]
\long\gdef\blind#1#2{\ifbld{\em \color{blue}#1}\else{#2}\fi}
\newif\ifbld \bldfalse
\title{Multiple Changepoint Detection with Partial Information on Changepoint Times}
\date{}
\author{  \blind{}{     
Yingbo Li%\footnote{e-mail: {\tt ybli@clemson.edu}.} 
\footnote{Department of Mathematical Sciences, Clemson University, Clemson, SC 29634}, 
Robert Lund\footnotemark[\value{footnote}], and Anuradha Hewaarachchi\footnote{University of Kelaniya,
Kalenaiya, Sri Lanka.}
%\footnote{Yingbo Li is Assistnat Professor, Robert Lund is Professor, and 
%Hewa A. Priyadarshani is PhD, in Department of Mathematical Sciences, Clemson University, Clemson,
%      SC 29634.}
%Yingbo Li\footnote{Assistant Professor,
%      Department of Mathematical Sciences, Clemson University, Clemson,
%      SC 29634 (e-mail: {\tt ybli@clemson.edu})},
%Robert Lund\footnote{Professor,
%      Department of Mathematical Sciences, Clemson University, Clemson,
%      SC 29634 (e-mail: {\tt lund@clemson.edu})}, 
% and Hewa A. Priyadarshani\footnote{Ph.D., 
% Department of Mathematical Sciences, Clemson University, Clemson,
%      SC 29634 (e-mail: {\tt hpriyad@g.clemson.edu})}
% } 
} 
}
\begin{document}
\maketitle

%\externaldocument{012916_BMDL_for_JASATM_appendix}[012916_BMDL_for_JASATM_appendix]

%%%%%%%%%%%%%%%%%%%%%%%%%%%%%%%%%%%%%%%%%%%%%%%%
\begin{abstract}

This paper proposes a new minimum description length procedure to detect 
multiple changepoints in time series data when some times are a priori 
thought more likely to be changepoints.  This scenario arises with 
temperature time series homogenization pursuits, our focus here. Our 
Bayesian procedure constructs a natural prior distribution for 
the situation, and is shown to estimate the changepoint locations 
consistently, with an optimal convergence rate. Our methods 
substantially improve changepoint detection power when prior information 
is available. The methods are also tailored to bivariate data, allowing 
changes to occur in one or both component series.

\end{abstract}

\noindent {\it Keywords:} breakpoints, segmentation, 
structural breaks, empirical Bayes, time series, vector autoregression.

%\end{frontmatter}

%%%%%%%%%%%%%%%%%%%%%%%%%%%%%%%%%%%%%%%%%%%%%%%%
%%%%%%%%%%%%%%%%%%%%%%%%%%%%%%%%%%%%%%%%%%%%%%%%
\section{Introduction}\label{section:intro} %% 3 pages
%%%%%%%%%%%%%%%%%%%%%%%%%%%%%%%%%%%%%%%%%%%%%%%%

Changepoints, also called structural breaks or breakpoints, are times in 
a sequential record where the data abruptly shift in some manner (mean, 
variance, autocovariance, quantile, etc.). The primary goal of a 
retrospective multiple changepoint analysis, the case considered here, 
is to estimate the number of changepoints and their location times. 
Various approaches have been developed for independent data; good recent 
references include \citet{Fryzlewicz_2014}, \citet{Pein_etal_2017}, and 
the review paper \citet{Niu_etal_2016} (and the references therein). 
When the data are correlated, such as the monthly temperature records 
studied here, this feature can greatly impede changepoint detection; in 
fact, mean shifts can often be misattributed to positive correlation 
\citep{Lund_etal_2007}.

One simple way to detect multiple changepoints is to combine an at most 
one changepoint (AMOC) technique (say a CUSUM or likelihood ratio test) 
with a binary segmentation procedure, e.g., \citet{Shao_Zhang_2010, 
Aue_Horvath_2013, Fryzlewicz_SubbaRao_2014}.  Wild binary segmentation 
techniques usually improve upon ordinary binary segmentation methods 
\citep{Fryzlewicz_2014}.  Since estimating the optimal multiple 
changepoint configuration can be formulated as a model selection 
problem, penalized likelihood methods such as BIC \citep{Yao_1988} and 
its modifications \citep{Zhang_Siegmund_2007, Zhang_Siegmund_2012}, and 
minimum description lengths (MDL) are also popular.  In this paper, an 
MDL technique is developed that takes into account prior information on 
the changepoint numbers and locations.  This scenario is shown to arise 
in the homogenization of temperature time series to account for gauge 
changes and station location moves.

The MDL principle \citep{Risanen_1989} from information theory has been 
successfully applied in statistical model selection problems 
\citep{Hansen_Yu_2001}.  MDL penalties are the sum of penalties (i.e., 
description lengths, or code lengths) of all unknown model parameters. 
In the multiple changepoint literature, the seminal work of 
\citet{Davis_etal_2006} develops an MDL penalty for piecewise 
autoregressive (AR) processes.  Here, the penalty is constructed by 
following certain automatic rules that assign different penalties to 
different parameter types: bounded integer parameters, unbounded integer 
parameters, and real-valued parameters. Since MDL penalties are not just 
simple multiples of the number of model parameters, they are believed 
superior to AIC and BIC penalties (a belief supported by simulations), 
and are shown consistent for changepoint estimation under infill 
asymptotics \citep{Davis_etal_2006, Davis_Yau_2013}. Following the 
automatic penalty rules, MDL methods have been extended to various time 
series structures, including GARCH processes \citep{Davis_etal_2008}, 
periodic ARs \citep{Lu_etal_2010}, autoregressive moving-averages 
\citep{Davis_Yau_2013}, and threshold ARs \citep{Yau_etal_2015}.

The main goal of this paper is to incorporate partial information on 
changepoint numbers and times into the MDL penalty, an aspect not 
readily handled by existing MDL methods.  Indeed, this will require us 
to revisit information theory.  The motivating example involves the 
climate homogenization \citep{Caussinus_Mestre_2004, 
Menne_WilliamsJr_2005} of monthly temperature records. Here, the aim is 
to detect abrupt mean shifts, which are often induced by artificial 
causes such as station relocations or gauge changes. Two types of {\it a 
priori} changepoint knowledge arise. First, metadata station history 
logs, which document the times of physical changes in the station, are 
sometimes available. Although metadata climate records are notoriously 
incomplete, and not all documented metadata times induce actual mean 
shifts in the series, climatologists believe that metadata times are 
more likely than non-metadata times to be changepoints.  Second, when 
multivariate records exist for the same station, changepoints may affect 
component records simultaneously. For example, with monthly maximum and 
minimum temperature averages (called Tmax and Tmin, respectively), 
moving a station to a drier location can both increase daytime highs and 
reduce nighttime lows. While changepoints in either Tmax or Tmin can 
occur by themselves, climatologists believe that it is more likely for 
changepoints to occur in both component series at the same time (these 
are called concurrent shifts).

While metadata is typically only used to verify climate changepoint 
conclusions in hindsight, Sections \ref{sec:simulation} and 
\ref{sec:Tuscaloosa} will show that use of metadata can improve 
detection power and time of estimation accuracy. This benefit is not 
limited to climatological pursuits; in other areas such as biology, 
economics, and engineering, domain expert knowledge is often available; 
e.g., knowledge from previous experiments on possible copy number 
variation locations, or the impact of certain political policy or regime 
changes on financial series.

Of course, Bayesian methods account for {\it a priori} knowledge via the 
construction of prior distributions. From a Bayesian model selection 
perspective, the optimal model (i.e., multiple changepoint 
configuration) is the one with the highest posterior probability 
\citep{Clyde_George_2004}. This maximum {\it a posteriori} (MAP) rule 
can be loosely viewed as a penalization method, where the posterior 
density is a penalized likelihood and the prior density is the penalty. 
Compared to frequentist approaches, one advantage of Bayesian posterior 
analysis is that it can also provide a measure of uncertainty for model 
parameters and changepoint locations. Bayesian approaches have been 
proposed for retrospective multiple changepoint detection --- see 
\citet{Barry_Hartigan_1993, Chib_1998, Fearnhead_2006, Giron_etal_2007, 
Zhang_Siegmund_2007, Giordani_Kohn_2008, Fearnhead_Vasileiou_2009, 
Hannart_Naveau_2012}. However, theoretical studies of large sample 
performance of Bayesian methods are in general lacking; while
\citet{Du_etal_2016} study asymptotic consistency of changepoint 
locations, they only consider independent data. 

More importantly, existing Bayesian changepoint approaches are typically 
derived under non-informative prior distributions; they rarely explicate 
how to incorporate real subjective prior knowledge. BIC-based 
changepoint detection methods cannot readily handle subjective prior 
information: from a Bayesian model selection perspective, BIC is a large 
sample approximation of the marginal likelihood. Thus, comparing models 
directly based on their BICs imposes an implicit assumption that the 
prior probabilities of the models are the same, which is not appropriate 
when one wants to incorporate metadata information.

The only exception to the above is \citet{Li_Lund_2015}, which accounts 
for metadata in a univariate precipitation time series.  That work was 
written for a climate audience and was largely void of statistical and 
technical detail.  This paper complements that work by dealing with the 
statistical and technical issues.  It has a different focus and content, 
aiming to develop a general MDL framework that can handle prior 
information on changepoint times in a wide range of changepoint 
problems.  For example, multivariate series, which involve the more 
challenging problem of borrowing information across component series, 
are pursued. In this sense, \citet{Li_Lund_2015} is a special case of 
the current paper. This paper also includes a thorough investigation of 
the asymptotic consistency of the proposed methods.

Changepoint detection for multivariate data has received significant 
attention in recent years, e.g., \citet{Cho_Fryzlewicz_2015, 
Kirch_etal_2015, Preuss_etal_2015, Ma_Yau_2016}. In 
\citet{Davis_etal_2006}, the automatic MDL is applied to multivariate AR 
series, where changepoints affect all component series. However, for 
many applications, a changepoint may not affect all component series. 
The automatic MDL does not directly accommodate this case, probably 
because it is unclear whether a change affecting all components should 
receive the same penalty as one that affects a subset of components. On 
the other hand, Bayesian approaches such as \citet{Zhang_Siegmund_2012} 
and \citet{Bardwell_Fearnhead_2017} can handle this problem, but only 
for independent data over time and components. Since these works are 
developed under non-informative prior distributions, they are not ready 
applicable to handle multivariate temperature homogenization, where 
concurrent changes in Tmax and Tmin should be encouraged.

In this paper, a new class of flexible MDL methods is proposed that 
incorporates domain experts' {\it a priori} knowledge for multiple 
changepoint detection, in both univariate and multivariate time series. 
Multiple changepoint configurations are reformulated as vectors of 
zero/one indicators, thus permitting natural construction of subjective 
prior distributions, with straightforward hyper-parameter elicitation. 
To account for correlation in time and across components, AR processes 
for univariate data, and vector autoregressive (VAR) processes for 
multivariate data are employed. Our MDL method is termed a Bayesian MDL 
(BMDL) because it can be viewed as an empirical Bayes model selection 
approach. While our main focus is to improve and generalize conventional 
MDL changepoint detection approaches, to the best of our knowledge, this 
paper is the first Bayesian multiple changepoint work to establish 
asymptotic consistency with correlated observations. Under infill 
asymptotics, the estimated changepoint locations are shown to converge 
in probability to their true values; moreover, estimators of the number 
of changepoints and model parameters such as regime means and AR 
coefficients are also consistent.

We choose to work within the MDL framework rather than extending 
BIC-based approaches due to the following considerations. First, the BIC 
approximation to the marginal likelihood is usually precise only up to 
an $O(1)$ error. Although it is asymptotically consistent for model 
selection, it often does not work well when the sample size is small or 
moderate \citep{Grunwald_2007}. Second and perhaps more importantly, in 
the changepoint detection literature, MDL penalties have been 
demonstrated to be more flexible and have better empirical performance 
than BIC penalties \citep{Davis_etal_2006}. Therefore, MDL methods to 
are exclusively pursued here.

The rest of this paper is organized as follows. Section 
\ref{section:MDL_review} briefly reviews MDL principles. Section 
\ref{section:BMDL} develops a BMDL penalty to detect mean shifts in 
univariate series. This work incorporates metadata, while allowing for a 
confounding seasonal mean cycle and AR errors.  Section 
\ref{section:bivariateBMDL} extends the BMDL to the multivariate 
setting, where Tmax and Tmin series are modeled jointly. Section 
\ref{sec:simulation} presents simulation examples. Section 
\ref{sec:Tuscaloosa} moves to an application to 114 years of monthly 
temperatures from Tuscaloosa, Alabama. Section \ref{section:asymptotics} 
studies the frequentist large sample performance of the univariate BMDL. 
Comments close the paper in Section \ref{sec:discussion}.  Technical 
results and proofs are delegated to an appendix.

%%%%%%%%%%%%%%%%%%%%%%%%%%%%%%%%%%%%%%%%%%%%%%%%
\section{A Brief Review of MDL}\label{section:MDL_review} % 3 pages
%%%%%%%%%%%%%%%%%%%%%%%%%%%%%%%%%%%%%%%%%%%%%%%%

In information theory, a code length is the number of binary storage 
units required to transmit a random number or code. To reduce storage 
costs, one wants to assign shorter (longer) code lengths to common 
(rare) outcomes. Competing probability models can be compared by their 
code lengths; the true data generating distribution (i.e., the true 
model) should have the shortest expected code length. The MDL principle 
\citep{Risanen_1989} states that given the observed data, the model with 
the shortest code length is optimal.

For a discrete random variable $X$ with probability mass function 
$f(\cdot)$, \citet{Shannon_1948} states that the encoding with code 
length
\begin{equation}\label{eq:basic_MDL}
\mathcal{L}(X) = -\log_2 \{f(X)\}
\end{equation}
has the shortest expected code length. The existing MDL approach for 
multiple changepoint detection \citep{Davis_etal_2006} is developed 
under the automatic rules that the code length of a positive random 
integer $X$ bounded above by $N$ is $\log_2(N)$, and that of an 
unbounded positive random integer $X$ is $\log_2(X)$. The former rule 
implies a uniform distribution over the set $\{1, 2, \ldots, N \}$, 
which leads to the code length $\mathcal{L}(X) = -\log_2(1/N) = 
\log_2(N)$, while the latter implies an improper power law distribution with 
the probability mass function $f(X) \propto 1/X$.

For a continuous random variable, say $X \in \mathbb{R}^k$ with density 
function $f(\cdot)$, after discretizing each dimension into equal cells 
of size $\delta$ (often viewed as the machine precision), one can mimic 
the discrete case to obtain $\mathcal{L}(X) = -\log_2 \{f(X)\delta^k\} = 
-\log_2 f(X) - k\log_2(\delta)$. Because $k$ and $\delta$ do not vary 
with $X$, the term $-k\log_2(\delta)$ does not affect comparison between 
different outcomes of $X$ and is hence often omitted.  Thus, the MDL for 
a continuous variable can also be expressed as in \eqref{eq:basic_MDL}. 
In the rest of this paper, the natural logarithm is substituted for the 
base two logarithm --- this does not affect model comparisons since 
$\log_2(x)/\log(x)$ is constant in $x$.

Now suppose that a dataset $\mathbf{X} = (X_1, \ldots, X_N)'$, believed 
to be generated from a certain parametric model $\mathcal{M}$ with 
density $f(\mathbf{X} \mid \theta, \mathcal{M})$, is to be transmitted 
along with a possibly unknown parameter $\theta \in \Theta$. As reviewed 
in \citet{Hansen_Yu_2001}, two types of MDL approaches, the two-part MDL 
and the mixture MDL, are commonly used.

%%%%%%%%%%%%%%%%%%%%%%%%%%%%%%%%%%%%%%%%%%%%%%%%
\subsection{Two-part MDLs}

The two-part MDL, also called the two-stage MDL, considers the 
transmission of $\mathbf{X}$ and $\theta$ in two steps.  If both the 
sender and receiver know $\theta$, the MDL of $\mathbf{X}$ is 
$\mathcal{L}(\mathbf{X} \mid \theta, \mathcal{M}) = -\log \{f(\mathbf{X} 
\mid \theta, \mathcal{M})\}$. Here, notations such as $\mathcal{L}(\cdot 
\mid \cdot)$ are analogous to the usual conditional distribution 
notations that emphasize dependence. Should $\theta$ also be unknown to 
the receiver, an additional cost of $\mathcal{L}(\theta \mid 
\mathcal{M})$ is incurred in transmitting it. Hence, the two-part MDL is
\[
\mathcal{L}(\mathbf{X}, \theta \mid \mathcal{M}) = 
\mathcal{L}(\mathbf{X} \mid \theta, \mathcal{M}) + \mathcal{L}(\theta 
\mid \mathcal{M}).
\]

Suppose that $\mathcal{L}(\mathbf{X}, \theta \mid \mathcal{M})$ is 
minimized at $\hat{\theta}$, an estimator of $\theta$ based on the data 
$\mathbf{X}$. If $\theta$ is a $k$-dimensional continuous parameter and 
$\hat{\theta}$ is a $\sqrt{N}$-consistent estimator, then one can set 
the machine precision to be $\delta = c/\sqrt{N}$, where $c$ is a 
positive constant. Under a uniform encoder $\pi(\theta \mid \mathcal{M}) 
\propto 1$, the code length needed to transmit $\theta$ (including 
$\hat{\theta}$) is hence $\mathcal{L}(\theta\mid \mathcal{M}) = -\log 
\{\pi(\theta\mid \mathcal{M})\} -k \log(c/\sqrt{N}) = k\log(N) / 2 - k 
\log(c)$, which does not depend on $\theta$. Hence, the maximum 
likelihood estimator (MLE) minimizes $\mathcal{L}(\mathbf{X}, \theta 
\mid \mathcal{M})$, and the two-part MDL coincides with the BIC 
\citep{Schwarz_1978}. In fact, $\hat{\theta}$ need not be the MLE; any 
$\sqrt{N}$-consistent estimator is justifiable. Again the constant term 
$k\log(c)$ can be dropped and the remaining code length 
$\mathcal{L}(\hat{\theta} \mid \mathcal{M}) = k \log(N)/2$ is adopted by 
\citet{Davis_etal_2006} as the automatic MDL rule for a $k$-dimensional 
continuous parameter.

If there exists a discrete set of candidate models, to account for model 
uncertainty, the two-part MDL can be modified to include an additional 
code length for the model $\mathcal{M}$, i.e.,
\begin{equation}
\label{eq:MDL_two-part3}
\mathcal{L}(\mathbf{X}, \hat{\theta}, \mathcal{M}) 
= \mathcal{L}(\mathbf{X} \mid \hat{\theta}, \mathcal{M}) + 
\mathcal{L}(\hat{\theta} \mid \mathcal{M}) + 
\mathcal{L}(\mathcal{M}),
\end{equation}
where $\hat{\theta}$ is model dependent, $\mathcal{L}(\mathcal{M}) = 
-\log\{\pi(\mathcal{M})\}$, and $\pi(\mathcal{M})$ is the prior 
distribution over the model space. The model with the smallest MDL in 
\eqref{eq:MDL_two-part3} is deemed optimal.

All existing automatic MDL methods for multiple changepoint detection 
are based on two-part MDLs. However, for a finite sample size $N$, the 
two-part MDL is problematic when the dimension of $\theta$ changes 
across models, as in the multiple changepoint case. Consider a setting 
of two competing models $\mathcal{M}_1$ and $\mathcal{M}_2$, whose 
parameters $\theta_j$ are $k_j$-dimensional continuous parameters, for 
$j = 1, 2$, and $k_1 \neq k_2$. Model $\mathcal{M}_1$ is favored if 
$\mathcal{L}(\mathbf{X}, \hat{\theta}_1, \mathcal{M}_1) - 
\mathcal{L}(\mathbf{X}, \hat{\theta}_2, \mathcal{M}_2)$ is negative; 
otherwise, model $\mathcal{M}_2$ is favored. Note that the code length 
difference for the parameters $\mathcal{L}(\hat{\theta}_1 \mid 
\mathcal{M}_1) - \mathcal{L}(\hat{\theta}_2 \mid \mathcal{M}_2)$ 
contains the term $(k_1 - k_2) \{\log(N) - 2\log(c)\}/2$. This term, and 
hence also $\mathcal{L}(\mathbf{X}, \hat{\theta}_1, \mathcal{M}_1) - 
\mathcal{L}(\mathbf{X}, \hat{\theta}_2, \mathcal{M}_2)$, could be either 
positive or negative depending on $N$ and the arbitrary constant $c$. 
One cannot judge either model superior without knowledge of $c$.  Of 
course, this issue does not conflict with the asymptotic consistency of 
BIC or automatic MDLs: as $N$ increases, $\log(N)$ dominates the 
constant $\log(c)$. Mixture MDLs, reviewed next, do not suffer from such 
a problem for a finite $N$.

%%%%%%%%%%%%%%%%%%%%%%%%%%%%%%%%%%%%%%%%%%%%%%%%
\subsection{Mixture MDLs}

By \citet{Hansen_Yu_2001}, the mixture MDL is defined to be based on the 
marginal likelihood $f(\mathbf{X} \mid \mathcal{M})$:
\[
\mathcal{L}(\mathbf{X} \mid \mathcal{M}) = 
-\log \{f(\mathbf{X} \mid \mathcal{M})\}, \quad \text{where }
f(\mathbf{X} \mid \mathcal{M}) 
= \int_{\Theta} f(\mathbf{X} \mid \theta, \mathcal{M}) \pi(\theta \mid 
\mathcal{M})d\theta
\] 
averages the likelihood $f(\mathbf{X} \mid \theta, \mathcal{M})$ over 
$\theta$ under its prior density $\pi(\theta \mid \mathcal{M})$. If this 
prior distribution depends on an unknown hyper-parameter $\psi$, then a 
two-part MDL can be used to account for the additional cost needed to 
transmit $\psi$. In this case, the overall mixture MDL, for any 
$\sqrt{N}$-consistent estimator of $\psi$, is
\[
\mathcal{L}(\mathbf{X}, 
\hat{\psi} \mid \mathcal{M}) = - \log \left\{ \int_{\Theta} f(\mathbf{X} \mid 
\theta, \mathcal{M}) \pi(\theta \mid \hat{\psi}, \mathcal{M})d\theta \right\} 
+ \mathcal{L}(\hat{\psi}\mid \mathcal{M}).
\]

The mixture MDL for the model $\mathcal{M}$ is thus 
$\mathcal{L}(\mathbf{X}, \hat{\psi}, \mathcal{M}) = 
\mathcal{L}(\mathbf{X}, \hat{\psi} \mid \mathcal{M}) + \mathcal{L}( 
\mathcal{M})$, which is related to empirical Bayes (EB) approaches 
\citep{Carlin_Louis_2000}. If the prior probabilities of two models are 
the same, i.e., $\pi(\mathcal{M}_1) = \pi(\mathcal{M}_2)$, and the 
hyper-parameter $\psi$ is transmitted under the uniform encoder 
$\pi(\psi \mid \mathcal{M}_j) \propto 1$ for $j = 1,2$, then the 
difference of the two mixture MDLs, $\mathcal{L}(\mathbf{X}, 
\hat{\psi}_1, \mathcal{M}_1) - \mathcal{L} (\mathbf{X}, \hat{\psi}_2, 
\mathcal{M}_2)$, equals the logarithm of their Bayes factor 
$\text{BF}_{\mathcal{M}_2: \mathcal{M}_1}$ \citep{Kass_Raftery_1995}. 
Similarly, in EB settings, while the estimator $\hat{\psi}$ is often 
chosen to maximize the marginal likelihood $f(\mathbf{X} \mid \psi, 
\mathcal{M})$, other consistent estimators (moments for example) can be 
used.

%%%%%%%%%%%%%%%%%%%%%%%%%%%%%%%%%%%%%%%%%%%%%%%%
\section{Bayesian Minimum Description Lengths for a Univariate Time Series}\label{section:BMDL} %% 10 pages
%%%%%%%%%%%%%%%%%%%%%%%%%%%%%%%%%%%%%%%%%%%%%%%%

Consider a univariate time series $\mathbf{X}_{1:N}= (X_1, \ldots, 
X_N)'$ with a seasonal mean cycle with fundamental period $T$. For 
monthly data, $T = 12$. A model with autoregressive errors describing 
this situation is
\begin{equation} 
\label{eq:likelihood1} 
X_t  = s_{v(t)} + \mu_{r(t)} + \epsilon_t,  \quad
\epsilon_t  = \sum_{j = 1}^p \phi_j \epsilon_{t-j} + Z_t.
\end{equation} 
Here, $v(t) = t - T \lfloor (t-1)/T \rfloor \in \{ 1, 2, \ldots, T\}$ is 
the season corresponding to time $t$, where $\lfloor x \rfloor$ is the 
largest integer less than or equal to $x$. The seasonal means 
$\mathbf{s} = (s_1, \ldots, s_T)'$ are unknown. The errors $\{ \epsilon_ 
t \}_{t=1}^N$ are a causal zero mean AR process. Here, we assume that 
the AR order $p$ is known; if unsure, picking a slightly larger value 
for $p$ is advised. The AR coefficients $\boldsymbol{\phi} = (\phi_1, 
\ldots, \phi_p)'$ and the white noise variance $\text{Var}(Z_t) = 
\sigma^2$ are assumed unknown. For likelihood computations, following 
\citet{Davis_etal_2006}, white noises are assumed iid normal. This can 
be justified as a quasi-likelihood approach; furthermore, in climate 
applications, monthly averaged temperatures are approximately normally 
distributed \citep{Wilks_2011}.

Suppose a multiple changepoint configuration (i.e., a model) contains 
$m$ changepoints at the times $\tau_1 < \cdots < \tau_m \leq N$. These 
times partition the observations $\{ 1, \ldots, N \}$ into $m+1$ 
distinct regimes (segments), where the series' overall mean (neglecting 
its seasonal component), $\mu_{r(t)}$, changes across regimes. To avoid 
trite work with edge effects of the autoregression, we assume that no 
changepoints occur during the first $p$ observations. For notation, set 
$\tau_0=1$ and $\tau_{m+1}=N+1$. The regime indicator $r(t)$ in 
\eqref{eq:likelihood1} satisfies $r(t)=r$ when $\tau_{r-1} \leq t < 
\tau_r$. To ensure identifiability, $\mu_1$ is set to zero; hence, 
$E(X_t)=s_{v(t)}$ when $t$ lies in the first regime. The other regime 
means $\boldsymbol{\mu} = (\mu_2, \ldots, \mu_{m+1})'$ are unknown.

Following \citet{Li_Lund_2015}, the multiple changepoint configuration 
$(m; \boldsymbol{\tau})$ is reformulated as an $(N-p)$-dimensional 
vector of zero/one indicators: $\boldsymbol{\eta} = (\eta_{p+1}, \ldots, 
\eta_N)'$. Here, $\eta_t=1$ indicates that time $t$ is a changepoint in 
this model; $\eta_t=0$ means that time $t$ is not a changepoint. The 
total number of changepoints in model $\boldsymbol\eta$ is thus 
$m=\sum_{t=p+1}^N \eta_t$.

Our idea is to apply the mixture MDL to the continuous parameter 
$\boldsymbol{\mu}$, whose dimension varies across models, and use the 
two-part MDL for the parameters $\mathbf{s}, \boldsymbol\sigma^2, 
\boldsymbol\phi$, and the model $\boldsymbol\eta$. In the rest of this 
section, subsection \ref{subsection:prior} introduces our priors on 
$\boldsymbol{\eta}$ and $\boldsymbol{\mu}$, subsection 
\ref{subsection:bmdl_univariate} derives the BMDL formula 
\eqref{eq:BMDL_univariate}, and subsection 
\ref{subsection:computation} concludes with computational strategies.
 Asymptotic studies are included in section \ref{section:asymptotics}.

%%%%%%%%%%%%%%%%%%%%%%%%%%%%%%%%%%%%%%%%%%%%%%%%
\subsection{Prior specifications}
\label{subsection:prior}

Our prior distribution for the changepoint model $\boldsymbol{\eta}$ 
assumes that, in the absence of metadata, each time $t$ has an equal 
probability $\rho$ of being a changepoint, independently of all other 
times, i.e., \begin{equation}\label{eq:eta_prior_bernoulli} \eta_t 
\stackrel{\text{iid}}{\sim} \text{Bernoulli} (\rho), \quad t = p+1, 
\ldots, N. \end{equation} This independent Bernoulli prior has been used 
in previous Bayesian multiple changepoint detection works 
\citep{Chernoff_Zacks_1964, Yao_1984, Barry_Hartigan_1993}.  From a 
hidden Markov perspective, this prior is equivalent to $\tau_r \mid 
\tau_{r-1} \sim \text{Geometric}(\rho)$ for $r = 1, \ldots, m$ 
\citep{Fearnhead_Vasileiou_2009}, and thus is a special case of the 
negative Binomial prior \citep{Hannart_Naveau_2012}. The uniform prior 
$\pi(\boldsymbol\eta) \propto 1$ adopted in \citet{Du_etal_2016} is a 
special case of the Bernoulli prior with $\rho = 0.5$.  For applications 
where knowledge beyond metadata is unavailable, an iid prior on $\{ 
\eta_t \}$ seems reasonable. In other applications, 
$\pi(\boldsymbol{\eta})$ is allowed to have different success 
probabilities in different regimes \citep{Chib_1998}; correlation across 
different changepoint times can also be achieved using Ising priors 
\citep{Li_Zhang_2010}.

To account for uncertainty in the success probability $\rho$, a 
hyper-prior is placed on it. \citet{Barry_Hartigan_1993} let $\rho$ have 
a uniform prior on the interval $(0, \rho_0)$, where $\rho_0 < 1$. For 
additional flexibility, we use the Beta distribution
\begin{equation}\label{eq:eta_prior_beta}
\rho \sim \text{Beta}(a, b), 
\end{equation} 
where $a,b > 0$ are fixed hyper-parameters. The Beta-Binomial 
hierarchical priors in \eqref{eq:eta_prior_bernoulli} and 
\eqref{eq:eta_prior_beta} are widely used in Bayesian model selection 
\citep{Scott_Berger_2010}, and have been adopted to detect changepoints 
\citep{Giordani_Kohn_2008, Li_Lund_2015}. Due to conjugacy, the marginal 
prior density of $\boldsymbol{\eta}$ has the following closed form, with 
$\beta(\cdot, \cdot)$ denoting the Beta function:
\begin{equation}
\label{eq:prior_eta} 
\pi(\boldsymbol\eta) = \int_0^1 \pi(\rho) \prod_{t = p + 1}^N \pi(\eta_t 
\mid \rho)  d\rho = \frac{\beta(a + m, b + N-p - m)}
{\beta(a, b)}.
\end{equation}

Note that here, the Beta-Binomial density in \eqref{eq:prior_eta} depends 
on $\boldsymbol\eta$ through $m$, the total number of changepoints in 
the multiple changepoint model $\boldsymbol\eta$. In common changepoint 
detection problems, changepoints are usually relatively sparse ($m \ll 
N$). Suppose our prior belief on $\rho$ reflects this sparsity 
assumption, say, $E(\rho) = a/(a+b) \leq 1/2$, i.e., $a\leq b$.  Then 
\eqref{eq:prior_eta} decreases as $m$ increases until $m$ reaches a 
relatively large value (at least $(N - p)/2$). Thus, the Beta-Binomial 
prior can be viewed as a prior preference on smaller models, or 
equivalently, a penalty on the number of changepoints.

For hyper-parameter choices, an objective Bayesian option 
\citep{Giron_etal_2007} is $a=b=1$. In this case, $\pi(\boldsymbol\eta) 
= \left\{{N-p \choose m} (N-p+1)\right\}^{-1}$, which implies that 
marginally, the number of changepoints $m$ has a uniform prior on the 
set $\{0, 1, \ldots, N-p\}$, and all models containing the same number 
of changepoints have the same prior probabilities. The Beta-Binomial 
prior can be tuned to accommodate subjective knowledge from domain 
experts. For temperature homogenization, \citet{Mitchell_1953} estimates 
an average of six station relocations and gauge changes per century in 
United States temperature series; this long-term rate is 0.005 
changepoints per month and can be produced with $a=1$ and $b=199$; with 
these parameters, $E(\rho)=a/(a+b)=0.005$.

This prior is now modified to accommodate metadata. Suppose that during 
the times $\{p+1, \ldots, N\}$, there are $N^{(2)}$ documented times 
(times listed in the metadata) and $N^{(1)}=N-p-N^{(2)}$ undocumented 
times. For notation, all quantities superscripted with $(1)$ refer to 
undocumented times; quantities superscripted with $(2)$ refer to 
documented times. Following \citet{Li_Lund_2015}, we posit that the 
undocumented times have a Beta-Binomial$(a, b^{(1)})$ prior, and 
independently, the documented times have a Beta-Binomial$(a, b^{(2)})$ 
prior. To make the metadata times more likely to induce true mean 
shifts, we impose $b^{(1)} > b^{(2)}$ so that
\[
E\left(\rho^{(1)}\right) = \frac{a}{a + b^{(1)}} < \frac{a}{a + b^{(2)}} = E\left(\rho^{(2)}\right).
\]  

For monthly data, default values are $a=1, b^{(1)}=239$, and 
$b^{(2)}=47$, making $E(\rho^{(1)}) = 0.0042$, i.e., an average of one 
changepoint about every 20 years for non-metadata times, and 
$E(\rho^{(2)})=0.0208$, i.e., on average, one changepoint in every 4 
years for metadata times. In other words, {\it a priori}, a documented 
time is roughly five times more likely to be a changepoint than an 
undocumented time. For different problems, one may need to modify 
$b^{(1)}$ and $b^{(2)}$ to reflect specific domain knowledge. Our 
previous paper \citep{Li_Lund_2015} gives a detailed sensitivity 
analysis on the choice of Beta-Binomial hyper-parameters. It suggests 
that changepoint detection results are relatively stable under a range 
of $E(\rho^{(2)})/E(\rho^{(1)})$ values. For applications that lack any 
subjective information, the non-informative Beta-Binomial$(1,1)$ prior 
can serve as a default choice.  In this paper, this prior is referred to 
as ``oBMDL'', with ``o'' standing for objective. Empirical comparison 
will be provided in the univariate simulation examples in Section 
\ref{sec:simulation_univariate}.

Following \eqref{eq:prior_eta} and writing Beta integrals via their 
Gamma function representations, a changepoint configuration 
$\boldsymbol{\eta}$ with $m^{(2)}$ documented changepoints and $m^{(1)}$ 
undocumented changepoints ($m=m^{(1)} + m^{(2)}$) has a marginal prior 
density (up to a normalizing constant)
\[
\pi(\boldsymbol{\eta}) \propto \prod_{k=1}^2
\Gamma\left(a + m^{(k)}\right) \Gamma\left(b^{(k)} + N^{(k)} - m^{(k)}\right).
\]

For a changepoint model with $m > 0$ changepoints, priors for the 
$m$-dimensional regime means $\boldsymbol{\mu}$ are posited to have 
independent normal prior distributions:
\begin{equation}\label{eq:mu_prior_normal}
\boldsymbol{\mu} \mid \sigma^2, \boldsymbol{\eta} 
\sim \text{N}(\mathbf{0}, \nu \sigma^2 \mathbf{I}_{m}).
\end{equation}
Here, $\nu$ is a pre-specified non-negative parameter that is relatively 
large (making the variances of the regime means large multiples of the 
white noise variances). Similar to the sensitivity analysis in 
\citet{Du_etal_2016}, our experience suggests that model selection 
results are stable under a wide range of $\nu$ values. Our default takes 
$\nu = 5$.

In fact, $\pi(\boldsymbol{\mu})$ can be any zero mean continuous 
distribution. For example, if mean shifts are expected to be large, 
heavy-tailed distributions such as the Student-$t$ may be preferable. 
When $\boldsymbol{\mu}$ cannot be tractably integrated out, inferences 
can be based on Laplace approximations or posterior sampling with a 
reversible-jump MCMCs \citep{Green_1995}. Due to conjugacy under 
Gaussian likelihoods, the normal prior leads to closed form marginal 
likelihoods. Hence, for computational ease in the rest of this paper, 
the normal regime mean priors in \eqref{eq:mu_prior_normal} are used.

%%%%%%%%%%%%%%%%%%%%%%%%%%%%%%%%%%%%%%%%%%%%%%%%
\subsection{The BMDL expression}
\label{subsection:bmdl_univariate}

To derive the BMDL expression in \eqref{eq:BMDL_univariate}, the data 
likelihood is first obtained.  This is then integrated over 
$\boldsymbol\mu$ to obtain the mixture MDL; finally, two-part MDLs are 
obtained for the rest of the parameters.

Given a changepoint model $\boldsymbol{\eta}$, the sampling distribution 
\eqref{eq:likelihood1} has the regression representation
\begin{equation}
\label{eq:likelihood3}
\mathbf{X}_{1:N} = \mathbf{A}_{1:N} \mathbf{s} + 
\mathbf{D}_{1:N}\boldsymbol\mu + \boldsymbol\epsilon_{1:N},
\end{equation}
with $\mathbf{A}_{1:N}\in \mathbb{R}^{N \times T}$ and 
$\mathbf{D}_{1:N} \in \mathbb{R}^{N \times m}$ as seasonal and regime 
indicator matrices, respectively:
\begin{align*}
&\left[ \mathbf{A}_{1:N} \right]_{t,v} = 
	\mathbf{1}( \text{time } t \text{ is in season } v ), ~~~ v = 1,  \ldots, T,\\
&\left[ \mathbf{D}_{1:N} \right]_{t,r-1} = 
	\mathbf{1}( \text{time } t \text{ is in regime } r ), ~~~ r = 2, \ldots, m + 1,
\end{align*}
where $\mathbf{1}(A)$ denotes the indicator of the event $A$. In 
\eqref{eq:likelihood3}, the subscript $1:N$, or in general ${t_1:t_2}$, 
signifies that only rows $t_1$ through $t_2$ are used in the quantities. 
The normal white noises $\{ Z_t \}$ in the AR process imply the 
distributional result $\boldsymbol\epsilon_{(p+1):N} - \sum_{j = 1}^p 
\phi_j \boldsymbol\epsilon_{{(p+1-j):(N-j)}} \sim \text{N}(\mathbf{0}, 
\sigma^2 \mathbf{I}_{N-p})$, where $\mathbf{I}_k$ denotes the $k \times 
k$ identity matrix. Now define
\begin{align}
\label{eq:X_tilde}
\widetilde{\mathbf{X}} 
&= \mathbf{X}_{(p+1):N} 
	- \sum_{j = 1}^p \phi_j \mathbf{X}_{(p+1-j):(N-j)}, \\ \label{eq:A_D_tilde}
\widetilde{\mathbf{A}} 
&= \mathbf{A}_{(p+1):N} 
	- \sum_{j = 1}^p \phi_j \mathbf{A}_{(p+1-j):(N-j)}, \quad
\widetilde{\mathbf{D}} 
= \mathbf{D}_{(p+1):N} 
	- \sum_{j = 1}^p \phi_j \mathbf{D}_{(p+1-j):(N-j)},
\end{align}
and observe that 
\begin{equation}
\label{eq:likelihood5}
\widetilde{\mathbf{X}} - \widetilde{\mathbf{A}} \mathbf{s} 
- \widetilde{\mathbf{D}} \boldsymbol\mu \sim
\text{N}(\mathbf{0}, \sigma^2 \mathbf{I}_{N-p}).
\end{equation}
Note that all terms superscripted with $\sim$ depend on the unknown AR 
parameter $\boldsymbol{\phi}$. To avoid AR edge effects, a likelihood 
conditional on the initial observations $\mathbf{X}_{1:p}$ is used. In 
the change of variable computations, the Jacobian $|\partial 
(\widetilde{\mathbf{X}} - \widetilde{\mathbf{A}} \mathbf{s} - 
\widetilde{\mathbf{D}} \boldsymbol\mu) / \partial \mathbf{X}_{(p + 
1):N}| = 1$ and the likelihood has the multivariate normal form
\begin{equation*}
f\left(\mathbf{X}_{(p+1):N} \mid 
\boldsymbol\mu, \mathbf{s}, \sigma^2, \boldsymbol{\phi}, \boldsymbol{\eta}\right)
= \left(2\pi\sigma^2\right)^{-\frac{N-p}{2}} e^{-\frac{1}{2\sigma^2}
(\widetilde{\mathbf{X}} - \widetilde{\mathbf{A}} \mathbf{s} - 
\widetilde{\mathbf{D}}\boldsymbol\mu)'
(\widetilde{\mathbf{X}} - \widetilde{\mathbf{A}} \mathbf{s} - 
\widetilde{\mathbf{D}}\boldsymbol\mu)}.
\end{equation*}
Innovation forms of the likelihood \citep{Brockwell_Davis_1991} can be 
used if one wants a moving-average or long-memory component in $\{ 
\epsilon_t \}$.

We now obtain a BMDL for the changepoint model $\boldsymbol{\eta}$. If 
$m > 0$, we first use the mixture MDL on $\boldsymbol\mu$. The marginal 
likelihood, after integrating $\boldsymbol\mu$ out, has the closed form
\begin{align*} 
&	f (\mathbf{X}_{(p+1):N} \mid 
	\mathbf{s}, \sigma^2, \boldsymbol\phi, \boldsymbol\eta)
= 	\int_{\mathbb{R}^{m}} f\left(\mathbf{X}_{(p+1):N} \mid 
	\boldsymbol\mu, \mathbf{s}, \sigma^2, \boldsymbol\phi, \boldsymbol\eta
	\right)
	\pi(\boldsymbol\mu \mid \sigma^2, \boldsymbol\eta) d\boldsymbol\mu\\ 
=~&	(2\pi\sigma^2)^{-\frac{N-p}{2}} \nu^{-\frac{m}{2}} 
\left|\widetilde{\mathbf{D}}' 
\widetilde{\mathbf{D}}+\frac{\mathbf{I}_{m}}{\nu}\right|^{-\frac{1}{2}}
e^{-\frac{1}{2\sigma^2} (\widetilde{\mathbf{X}} - 
\widetilde{\mathbf{A}}\mathbf{s})' 
\widetilde{\mathbf{B}}
(\widetilde{\mathbf{X}} - \widetilde{\mathbf{A}}\mathbf{s})
},
\end{align*}
where the notation has
\begin{equation}
\label{eq:B_tilde}
\widetilde{\mathbf{B}} = \mathbf{I}_{N-p} - \widetilde{\mathbf{D}}
\left(\widetilde{\mathbf{D}}' 
\widetilde{\mathbf{D}}+\frac{\mathbf{I}_{m}}{\nu}\right)^{-1}
\widetilde{\mathbf{D}}'.
\end{equation}
If the parameters $\mathbf{s}$, $\sigma^2$, and $\boldsymbol{\phi}$ are 
known, the mixture MDL is simply $\mathcal{L}(\mathbf{X}_{(p+1):N} \mid 
\mathbf{s}, \sigma^2, \boldsymbol\phi, \boldsymbol\eta) = -\log \{f 
(\mathbf{X}_{(p+1):N} \mid \mathbf{s}, \sigma^2, \boldsymbol{\phi}, 
\boldsymbol{\eta})\}$.

Under a given changepoint model $\boldsymbol\eta$, the two-part MDL is used 
to quantify the cost of transmitting the parameters $\mathbf{s}$, $\sigma^2$, 
and $\boldsymbol{\phi}$. The 
optimal $\mathbf{s}$ and $\sigma^2$ that minimize the mixture MDL have 
closed forms:
\begin{align}
\label{eq:s_hat}
\hat{\mathbf{s}} 
& 	= \arg\min_{\mathbf{s}} 
	\mathcal{L}(\mathbf{X}_{(p+1):N} \mid \mathbf{s}, \sigma^2, 
	\boldsymbol\phi, \boldsymbol\eta)
	= (\widetilde{\mathbf{A}}'\widetilde{\mathbf{B}}
	\widetilde{\mathbf{A}})^{-1}
	(\widetilde{\mathbf{A}}'\widetilde{\mathbf{B}}\widetilde{\mathbf{X}}),
\\ \label{eq:sigmasq_hat}
\hat{\sigma}^2 
& 	= \arg\min_{\sigma^2} 
	\mathcal{L}(\mathbf{X}_{(p+1):N} \mid \hat{\mathbf{s}}, \sigma^2, 
	\boldsymbol\phi, \boldsymbol\eta)
	= \frac{1}{N-p} \widetilde{\mathbf{X}}' \left\{
 	\widetilde{\mathbf{B}}
	- \widetilde{\mathbf{B}} \widetilde{\mathbf{A}}
	\left(\widetilde{\mathbf{A}}' 
\widetilde{\mathbf{B}}\widetilde{\mathbf{A}}\right)^{-1}
	\widetilde{\mathbf{A}}' \widetilde{\mathbf{B}}
	\right\}\widetilde{\mathbf{X}}.
\end{align}
These estimators depend on $\boldsymbol{\phi}$; however, the 
$\boldsymbol{\phi}$ that minimizes $\mathcal{L}(\mathbf{X}_{(p+1):N} 
\mid \hat{\mathbf{s}}, \hat{\sigma}^2, \boldsymbol{\phi}, 
\boldsymbol{\eta})$ is intractable.  In general, likelihood estimators 
for autoregressive models do not have closed forms. Hence, simple 
Yule-Walker moment estimators, which are asymptotically most efficient 
and $\sqrt{N}$-consistent under the true changepoint model, are used. 
There is actually little difference between moment and likelihood 
estimators for autoregressions \citep{Brockwell_Davis_1991}.

In the linear model \eqref{eq:likelihood3}, the ordinary least squares 
residuals are
\begin{equation}
\label{eq:Y}
\boldsymbol\epsilon_{1:N}^{\text{ols}} = (\mathbf{I}_N - 
\mathcal{P}_{[\mathbf{A}_{1:N}|\mathbf{D}_{1:N}]})\mathbf{X}_{1:N},
\end{equation}
where $[\mathbf{A}_{1:N}|\mathbf{D}_{1:N}]$ denotes the block matrix formed 
by $\mathbf{A}_{1:N}$ and $\mathbf{D}_{1:N}$, and 
$\mathcal{P}_{[\mathbf{A}_{1:N}|\mathbf{D}_{1:N}]}$ is the 
orthogonal projection matrix onto its column space.  The sample 
autocovariance of the residuals are 
$\hat{\gamma}(h) = N^{-1} \sum_{t = h + 1}^N \epsilon_t^{\text{ols}} 
\epsilon_{t-h}^{\text{ols}}$, at lag $h = 0, 1, \ldots, p$. The Yule-Walker estimator of 
$\boldsymbol{\phi}$ is $\hat{\boldsymbol{\phi}} = 
\hat{\boldsymbol{\Gamma}}_p^{-1}\hat{\boldsymbol{\gamma}}_p$, where 
$\hat{\boldsymbol{\gamma}}_p = (\hat{\gamma}(1), \ldots, 
\hat{\gamma}(p))'$ and $\hat{\boldsymbol{\Gamma}}_p$ is a $p \times p$ 
matrix whose $(i,j)$th entry is $\hat{\gamma}(|i-j|)$.  This matrix is 
invertible whenever the data are non-constant 
\citep{Brockwell_Davis_1991}. Next, the Yule-Walker estimator 
$\hat{\boldsymbol{\phi}}$ is substituted for $\boldsymbol{\phi}$ in 
$\widetilde{\mathbf{X}}$, $\widetilde{\mathbf{A}}$, 
$\widetilde{\mathbf{D}}$, $\widetilde{\mathbf{B}}$, and 
$\hat{\sigma}^2$.  The resulting quantities are denoted by 
$\widehat{\mathbf{X}}$, $\widehat{\mathbf{A}}$, $\widehat{\mathbf{D}}$, 
$\widehat{\mathbf{B}}$, and $\hat{\sigma}^2$, respectively.  In 
particular, $\widehat{\mathbf{X}}$ contains estimated one-step-ahead 
prediction residuals (innovations).

By \eqref{eq:MDL_two-part3}, the BMDL for transmitting the data $\mathbf{X}_{(p+1):N}$, 
the model $\boldsymbol{\eta}$, and its parameters is (up to a constant)
\begin{align} \nonumber
\text{BMDL}(\boldsymbol\eta)  
&	= \mathcal{L}(\mathbf{X}_{(p+1):N} \mid \hat{\mathbf{s}}, \hat{\sigma}^2, 
\hat{\boldsymbol\phi}, \boldsymbol\eta) 
+ \mathcal{L}(\hat{\mathbf{s}}, \hat{\sigma}^2,\hat{\boldsymbol\phi} \mid \boldsymbol\eta)
+ \mathcal{L}(\boldsymbol\eta)\\ \label{eq:BMDL_sum}
&	= -\log \left\{f(\mathbf{X}_{(p+1):N} \mid \hat{\mathbf{s}}, \hat{\sigma}^2, 
\hat{\boldsymbol\phi}, \boldsymbol\eta)\right\} -\log \left\{\pi(\boldsymbol\eta)\right\}.
\end{align}
The second equality holds because under a uniform encoder 
$\pi(\mathbf{s}, \sigma^2, \boldsymbol{\phi}) \propto 1$, the two-part 
MDL $\mathcal{L}(\hat{\mathbf{s}}, \hat{\sigma}^2,\hat{\boldsymbol\phi} 
\mid \boldsymbol\eta) = (T + 1 + p)\log(N-p)/2$ is constant across 
models and hence can be omitted. Therefore, for a model with $m > 0$ 
changepoints, its BMDL is (up to a constant)
\begin{align}
\label{eq:BMDL_univariate}
\text{BMDL}(\boldsymbol\eta) = ~&
	 \frac{N-p}{2}\log \left( \hat{\sigma}^2 \right) + \frac{m}{2}\log(\nu) 
	+ \frac{1}{2}\log\left( \left|\widehat{\mathbf{D}}' 
	\widehat{\mathbf{D}}+\frac{\mathbf{I}_{m}}{\nu}\right| \right)\\ \nonumber
&	-\sum_{k=1}^2\log\left\{\Gamma\left(a + m^{(k)}\right) 
\Gamma\left(b^{(k)} + N^{(k)} - m^{(k)}\right) \right\}.
\end{align}

For a model with no changepoints ($m=0$), denoted by 
$\boldsymbol{\eta}_{\o}$, the above procedure needs modification. Since 
$\boldsymbol{\eta}_{\o}$ does not involve $\boldsymbol{\mu}$, the 
mixture MDL step can be skipped. As $\mathbf{D}$ has no columns, 
$\widetilde{\mathbf{B}}$ in \eqref{eq:B_tilde} is reduced to 
$\mathbf{I}_{N-p}$, and hence \eqref{eq:sigmasq_hat} still holds.  With 
the convention that the determinant of a $0 \times 0$ matrix is unity, 
$\log\left( \left|\widehat{\mathbf{D}}' \widehat{\mathbf{D}}+ 
\mathbf{I}_{m}/ \nu\right| \right) = 0$. Therefore, 
\eqref{eq:BMDL_univariate} also holds for $\boldsymbol{\eta}_{\o}$. This 
resolves the issue of evaluating $\log(m)$ at $m = 0$ with some existing 
MDL methods.

%%%%%%%%%%%%%%%%%%%%%%%%%%%%%%%%%%%%%%%%%%%%%%%%
\subsection{BMDL optimization}
\label{subsection:computation} 

The optimal changepoint model $\hat{\boldsymbol{\eta}}$ is selected as 
the one with the smallest BMDL score. However, exhaustively searching 
the changepoint configuration space is formidable since the total number 
of admissible models, $2^{N-p}$, is extremely large. To overcome this, 
genetic algorithms are used as optimization tools in 
\citet{Davis_etal_2006} and \citet{Lu_etal_2010}.  Genetic algorithms 
efficiently explore the model space, only evaluating the penalized 
likelihood at a relatively small number of promising models.

The following connection to empirical Bayes (EB) methods allow us to 
borrow MCMC model search algorithms that are commonly used in Bayesian 
model selection. The BMDL under model $\boldsymbol\eta$ represented in 
\eqref{eq:BMDL_sum} is equivalent to the negative logarithm of an EB 
estimator of the posterior probability of $\boldsymbol{\eta}$:
\[
p_{\text{EB}}(\boldsymbol\eta \mid \mathbf{X}_{(p+1):N})
\propto~	 \pi(\boldsymbol\eta)
	 \int_{\mathbb{R}^{m}} f\left(\mathbf{X}_{(p+1):N} \mid 
	\boldsymbol\mu, \hat{\mathbf{s}}, \hat{\sigma}^2, \hat{\boldsymbol\phi}, 
	\boldsymbol\eta \right)
	\pi(\boldsymbol\mu \mid  \hat{\sigma}^2, \boldsymbol\eta) 
        d\boldsymbol\mu.
\]

As our BMDL formula \eqref{eq:BMDL_univariate} is tractable, Bayesian 
stochastic model search algorithms can be used; see 
\citet{Garcia-Donato_Martinez-Beneito_2013} and the references therein.
Here, we modify the Metropolis-Hastings algorithm in 
\citet{George_McCulloch_1997} by intertwining two types of proposals: a 
component-wise flipping at a random location and a simple random 
swapping between a changepoint and a non-changepoint. This algorithm is 
described in detail in \citet{Li_Lund_2015} and can be implemented by the 
R package {\tt BayesMDL} (\url{https://github.com/yingboli/BayesMDL}).

%%%%%%%%%%%%%%%%%%%%%%%%%%%%%%%%%%%%%%%%%%%%%%%%
\section{Extensions to Multivariate Time Series}\label{section:bivariateBMDL} %% 4 pages
%%%%%%%%%%%%%%%%%%%%%%%%%%%%%%%%%%%%%%%%%%%%%%%%

Mimicking the univariate setup, this section develops a BMDL for 
multivariate time series. While the details are illustrated for 
bivariate series, similar extensions apply to multivariate series of 
more than two components. The BMDL penalty constructed here allows 
changepoints to occur in one or both component series. Furthermore, it 
can accommodate domain experts' knowledge that encourage concurrent 
changes, i.e., changes affecting both series at the same time.

In temperature homogenization, to model Tmax and Tmin series jointly, 
both series are concatenated via $\mathbf{X}_{1:N} = 
(\mathbf{X}_{1:N,1}', \mathbf{X}_{1:N,2}')'$ $\in \mathbb{R}^{2N}$, 
where $\mathbf{X}_{1:N,i} = (X_{1,i}, \ldots, X_{N,i})'$ is the record 
for Tmax ($i=1$) or Tmin ($i=2$). Again, each time in $\{ p+1, \ldots, N 
\}$ is allowed to be a changepoint in either the Tmax or Tmin series, or 
both. A multiple changepoint configuration is denoted by 
$\boldsymbol{\eta} = (\boldsymbol{\eta}_1', \boldsymbol{\eta}_2')'$, 
where $\boldsymbol{\eta}_i = (\eta_{p+1,i}, \ldots, \eta_{N,i})' \in 
\{0, 1\}^{N-p}$ is defined as in the univariate case. Given a bivariate 
changepoint model $\boldsymbol{\eta}$, series $i$ has $m_i = 
\sum_{t=p+1}^N \eta_{t,i}$ changepoints.  As in the univariate case, the 
seasonal means are denoted by $\mathbf{s}_i = (s_{1,i}, \ldots, 
s_{T,i})' \in \mathbb{R}^T$; regime means are denoted by 
$\boldsymbol{\mu}_i = (\mu_{2,i}, \ldots, \mu_{m_i+1,i})' \in 
\mathbb{R}^{m_i}$.  The seasonal and regime indicator matrices 
$\mathbf{A}_{1:N,i} \in \mathbb{R}^{N \times T}$ and $\mathbf{D}_{1:N, 
i} \in \mathbb{R}^{N \times m_i}$ are constructed analogously to their 
univariate counterparts.

The regression representation \eqref{eq:likelihood3} holds for the 
bivariate case, with $\mathbf{s} = (\mathbf{s}_1', \mathbf{s}_2')'$, 
$\boldsymbol{\mu} = (\boldsymbol{\mu}_1', \boldsymbol{\mu}_2')'$, 
$\boldsymbol{\epsilon}_{1:N} = (\boldsymbol\epsilon_{1:N,1}', 
\boldsymbol\epsilon_{1:N,2}')'$ denoting the concatenated seasonal 
means, regime means, and regression errors, respectively.  The seasonal 
indicator matrix has the block diagonal form 
$\mathbf{A}_{1:N} = \text{diag}\left(\mathbf{A}_{1:N, 1}, \mathbf{A}_{1:N, 2}\right)$, 
and similarly the regime indicator matrix $\mathbf{D}_{1:N} = 
\text{diag}\left(\mathbf{D}_{1:N, 1}, \mathbf{D}_{1:N, 2} \right)$. Note 
that the seasonal indicators for Tmax and Tmin coincide, i.e., 
$\mathbf{A}_{1:N, 1} = \mathbf{A}_{1:N, 2}$, while $\mathbf{D}_{1:N,1}$ 
and $\mathbf{D}_{1:N,2}$ differ unless all changepoints are concurrent.

As Tmax and Tmin temperature series tend to fluctuate about the seasonal 
mean in tandem (positive correlation), the errors $\{ 
\boldsymbol{\epsilon}_t = (\epsilon_{t, 1}, \epsilon_{t, 2})' \}$ need 
to be correlated across components.  For this, a vector autoregressive 
model (VAR) of order $p$ is employed:
\[
\boldsymbol\epsilon_t = \sum_{j=1}^p 
\boldsymbol\Phi_j \boldsymbol\epsilon_{t-j} 
+ \mathbf{Z}_t, 
\quad 
\mbox{Cov}(\mathbf{Z}_t) =  \boldsymbol\Sigma,
\]
where $\boldsymbol{\Phi}_1, \ldots, \boldsymbol{\Phi}_p$ are $2 \times 
2$ VAR coefficient matrices.  The VAR model allows for correlation in 
time and between components.

As \eqref{eq:likelihood5} holds after replacing 
$\sigma^2\mathbf{I}_{N-p}$ with $\boldsymbol{\Sigma} \otimes 
\mathbf{I}_{N-p}$, the likelihood of $\mathbf{X}_{(p+1):N}$, conditional 
on the initial observations $\mathbf{X}_{1:p}$, is (up to a 
multiplicative constant)
\[
	f(\mathbf{X}_{(p+1):N} \mid \mathbf{s}, \boldsymbol\mu, \boldsymbol\Sigma, 
	\boldsymbol{\Phi}_{1:p}, \boldsymbol\eta)
\propto  
\left| \boldsymbol\Sigma \right|^{-\frac{N-p}{2}}
	e^{
	-\frac{1}{2} (\widetilde{\mathbf{X}} 
	- \widetilde{\mathbf{A}} \mathbf{s} - \widetilde{\mathbf{D}} \boldsymbol\mu)'
	(\boldsymbol\Sigma^{-1} \otimes \mathbf{I}_{N-p}) (\widetilde{\mathbf{X}} 
	- \widetilde{\mathbf{A}}\mathbf{s}- \widetilde{\mathbf{D}} \boldsymbol\mu) 
	}.
\]
Here, $\otimes$ denotes a Kronecker product and the terms 
$\widetilde{\mathbf{X}}, \widetilde{\mathbf{A}}, \widetilde{\mathbf{D}}$ 
are modified by replacing $\phi_j$ with $\boldsymbol{\Phi}_j \otimes 
\mathbf{I}_{N-p}$ in \eqref{eq:X_tilde} and \eqref{eq:A_D_tilde}, for $j 
= 1, \ldots, p$.

%%%%%%%%%%%%%%%%%%%%%%%%%%%%%%%%%%%%%%%%%%%%%%%%

\subsection{Prior specifications} For $t = p + 1, \ldots, N$, the 
indicator $\boldsymbol{\eta}_{t} = (\eta_{t,1}, \eta_{t,2})'$ takes 
values in one of the four categories: $(1,1)'$, mean shifts in both Tmax 
and Tmin; $(1,0)'$, a mean shift in Tmax but not in Tmin; $(0,1)'$, a mean 
shift in Tmin but not in Tmax; and $(0,0)'$, no mean shifts. As a natural 
extension of the Beta-Binomial prior, a Dirichlet-Multinomial prior is 
put on $\boldsymbol{\eta}_{t}$:
\[
\boldsymbol\eta_{t} \mid \boldsymbol\rho
\stackrel{\text{iid}}{\sim} \text{Multinomial}(1; \boldsymbol{\rho}), 
\quad 
\boldsymbol{\rho} \sim \text{Dirichlet}(\boldsymbol{\alpha}),
\]
where $\boldsymbol{\rho} = (\rho_1, \ldots, \rho_4)'$ are the 
probabilities of the four categories satisfying $\sum_{\ell=1}^4 
\rho_\ell = 1$, and $\boldsymbol{\alpha} = (\alpha_1, \ldots, 
\alpha_4)'$ are the Dirichlet parameters with $\alpha_\ell > 0$ for each 
$\ell = 1, \ldots, 4$. Suppose that the changepoint configuration 
$\boldsymbol{\eta}$ has $m_\ell$ times in category $\ell$.  Due to 
Dirichlet-multinomial conjugacy, the marginal prior of 
$\boldsymbol{\eta}$ has a closed form after integrating out 
$\boldsymbol{\rho}^{(1)}$ and $\boldsymbol{\rho}^{(2)}$:
\[
\pi(\boldsymbol\eta) \propto
 	\prod_{k = 1}^2\prod_{\ell=1}^4 
	\Gamma\left(\alpha_\ell^{(k)} + m_\ell^{(k)}\right).
\]
Again, superscripts $(1)$ and $(2)$ refer to non-metadata and metadata
related terms, respectively.

The choice of the hyper-parameter $\boldsymbol{\alpha}$ should reflect 
our belief that concurrent changepoints are more likely to occur than 
when the component series are independent. The ratios between the prior 
expectations satisfy $E(\rho_1) : E(\rho_2) : E(\rho_3) : E(\rho_4) = 
\alpha_1 : \alpha_2 : \alpha_3 : \alpha_4$. If changepoints in the Tmax 
and Tmin series at time $t$ are independent events, then $\rho_1 = 
P(\eta_{t,1} = 1, \eta_{t,2} = 1) = P(\eta_{t,1} = 1) P(\eta_{t,2} = 1) 
= (\rho_1 + \rho_2)(\rho_1 + \rho_3)$. To encourage concurrent shifts, 
$\boldsymbol{\alpha}$ is hence chosen such that
\[ 
E(\rho_1) = 
\frac{\alpha_1}{\sum_{\ell=1}^4 \alpha_{\ell}} > \frac{\alpha_1 + 
\alpha_2} {\sum_{\ell=1}^4 \alpha_{\ell}} ~ \frac{\alpha_1 + 
\alpha_3}{\sum_{\ell=1}^4 \alpha_{\ell}} = E(\rho_1 +\rho_2) E(\rho_1 + 
\rho_3). 
\] 
In addition, the prior probability of not obtaining a changepoint at a 
time is set to its counterpart in the univariate case, i.e., $\alpha_4/ 
\sum_{\ell=1}^4 \alpha_{\ell} = b / (a+b)$. After consulting 
climatologists, default hyper-parameters are set to 
$\boldsymbol{\alpha}^{(1)} = \left(3/7, 2/7, 2/7, 239\right)'$ and 
$\boldsymbol{\alpha}^{(2)} = \left(3/7, 2/7, 2/7, 47\right)'$ for 
monthly data.

To obtain the mixture MDL in a closed form, for a bivariate model with 
$m = m_1+m_2 > 0$ changepoints, the regime means $\boldsymbol{\mu}$ 
again are taken to have independent normal priors
\[
\boldsymbol\mu \mid \boldsymbol\Sigma, \boldsymbol\eta
	\sim \text{N}(\mathbf{0}, \boldsymbol\Omega), \quad
\boldsymbol\Omega = \nu ~ \text{diag}
	\left(\underbrace{\sigma_1^2, \ldots, \sigma_1^2}_{m_1}, 
	\underbrace{\sigma_2^2, \ldots, \sigma_2^2}_{m_2}\right),
\]
where $\sigma_1^2$ and $\sigma_2^2$ are the diagonal entries 
of the white noise covariance $\boldsymbol{\Sigma}$.

%%%%%%%%%%%%%%%%%%%%%%%%%%%%%%%%%%%%%%%%%%%%%%%%
\subsection{The bivariate BMDL}

For a model $\boldsymbol\eta$ with $m > 0$, the marginal 
likelihood, after integrating $\boldsymbol{\mu}$ out, has a closed form:
\begin{align*}
	&f(\mathbf{X}_{(p+1):N} \mid \mathbf{s}, \boldsymbol\Sigma, 
	\boldsymbol\Phi_{1:p}, \boldsymbol\eta)\\
\propto &
	\left| \boldsymbol\Sigma \right|^{-\frac{N-p}{2}} \left| \boldsymbol\Omega \right|^{-\frac{1}{2}}
	\left| \widetilde{\mathbf{D}}'  (\boldsymbol\Sigma^{-1}\otimes \mathbf{I}_{N-p})
	\widetilde{\mathbf{D}} + \boldsymbol\Omega^{-1} \right|^{-\frac{1}{2}}	
	 e^{-\frac{1}{2} (\widetilde{\mathbf{X}} - 
	\widetilde{\mathbf{A}}\mathbf{s})' 
	\widetilde{\mathbf{B}}
	(\widetilde{\mathbf{X}} - \widetilde{\mathbf{A}}\mathbf{s})},
\end{align*}
where $\widetilde{\mathbf{B}}$ is modified to
\[
\widetilde{\mathbf{B}} = 
	(\boldsymbol\Sigma^{-1}\otimes \mathbf{I}_{N-p}) \times 
	\left[\mathbf{I}_{2(N-p)} - \widetilde{\mathbf{D}}
	\left\{\widetilde{\mathbf{D}}' (\boldsymbol\Sigma^{-1}\otimes \mathbf{I}_{N-p})
	\widetilde{\mathbf{D}}+\boldsymbol\Omega^{-1}\right\}^{-1}
	\widetilde{\mathbf{D}}'(\boldsymbol\Sigma^{-1}\otimes \mathbf{I}_{N-p})
	\right].
\]
The maximum marginal likelihood estimator $\tilde{\mathbf{s}}$ is 
unaltered from \eqref{eq:s_hat}. However, after plugging 
$\hat{\mathbf{s}}$ back into the likelihood, the maximum likelihood 
estimators of $\boldsymbol{\Sigma}$ and $\boldsymbol{\Phi}_1, \ldots, 
\boldsymbol{\Phi}_p$ do not have closed forms. Again, Yule-Walker 
estimators are used.

To find Yule-Walker estimators for the time series regression 
\eqref{eq:likelihood3}, generalized least squares residuals of the mean 
fit, denoted by $\boldsymbol\epsilon_{1:N}^\text{gls} = 
((\boldsymbol\epsilon_{1:N, 1}^{\text{gls}})', 
(\boldsymbol\epsilon_{1:N, 2}^\text{gls})')' \in \mathbb{R}^{2N}$, are 
computed via
\[ 
\boldsymbol\epsilon_{1:N}^\text{gls} = \left[\mathbf{I}_{2N} - \mathbf{G}
	\left\{ \mathbf{G}'\left(\hat{\boldsymbol{\Gamma}}^\text{ols}(0)^{-1} \otimes 
	\mathbf{I}_N\right)\mathbf{G} \right\}^{-1}
	\mathbf{G}'\left(\hat{\boldsymbol{\Gamma}}^\text{ols}(0)^{-1} \otimes 
	\mathbf{I}_N\right)\right] 
	\mathbf{X}_{1:N},
\]
where 
\[
\mathbf{G} = \left[ \begin{array}{cccc} 
					\mathbf{A}_{1:N, 1} 	& \mathbf{D}_{1:N,1} 	& \mathbf{0}				& \mathbf{0}				\\
					\mathbf{0}			& \mathbf{0}				& \mathbf{A}_{1:N,2} 	& \mathbf{D}_{1:N,2} 	\\
					\end{array} \right].
\]
Here, $\hat{\boldsymbol{\Gamma}}^\text{ols}(0) = N^{-1}\sum_{t=1}^N 
\boldsymbol\epsilon_t^{\text{ols}} 
(\boldsymbol\epsilon_t^{\text{ols}})^\prime$ is a $2 \times 2$ 
covariance matrix of the ordinary (unweighted) least squares residuals 
$\boldsymbol\epsilon_t^{\text{ols}} = (\epsilon_{t, 1}^{\text{ols}}, 
\epsilon_{t, 2}^{\text{ols}})^\prime$, where $\epsilon_{t, 
1}^{\text{ols}}$ and $\epsilon_{t, 2}^{\text{ols}}$ are computed 
analogously to \eqref{eq:Y} with the design matrices 
$[\mathbf{A}_{1:N,1}| \mathbf{D}_{1:N,1}]$ and $[\mathbf{A}_{1:N,2} | 
\mathbf{D}_{1:N,2}]$, respectively. The sample autocovariances at lag $h 
= 0, 1, \ldots, p$ of the generalized least squares residuals 
$\boldsymbol\epsilon_t^\text{gls} = (\epsilon_{t, 1}^\text{gls}, 
\epsilon_{t, 2}^\text{gls})^\prime, t =1, \ldots, N$ are computed as 
$\hat{\boldsymbol{\Gamma}}(h) = N^{-1}\sum_{t = h+1}^N 
\boldsymbol\epsilon_t^\text{gls} 
(\boldsymbol\epsilon_{t-h}^{\text{gls}})^\prime$. The Yule-Walker 
estimators thus obey
\[ 
\left( 
\hat{\boldsymbol{\Phi}}_1, \ldots, \hat{\boldsymbol{\Phi}}_p \right) = 
\left( \widehat{\boldsymbol\Gamma}(1), \ldots,
		\widehat{\boldsymbol\Gamma}(p) \right)
		\left[
		\begin{array}{cccc}
		\widehat{\boldsymbol\Gamma}(0)	& \widehat{\boldsymbol\Gamma}(1)
		& \cdots							& \widehat{\boldsymbol\Gamma}(p-1)\\
		\widehat{\boldsymbol\Gamma}(1)'	& \widehat{\boldsymbol\Gamma}(0)
		& \cdots							& \widehat{\boldsymbol\Gamma}(p-2)\\
		\vdots							& \vdots
		& \ddots							& \vdots								\\
		\widehat{\boldsymbol\Gamma}(p-1)'	& \widehat{\boldsymbol\Gamma}(p-2)'
		& \cdots							& \widehat{\boldsymbol\Gamma}(0)\\
	\end{array}
	\right]^{-1}
\]
and $\widehat{\boldsymbol{\Sigma}} = \hat{\boldsymbol{\Gamma}}(0) - 
\sum_{j = 1}^p \hat{\boldsymbol{\Phi}}_j \hat{\boldsymbol{\Gamma}}(j)'$.

After plugging $\widehat{\boldsymbol\Sigma}$ and 
$\widehat{\boldsymbol\Phi}_1, \ldots, \widehat{\boldsymbol\Phi}_p$ back 
into the marginal likelihood, the terms $\widetilde{\mathbf{X}}, 
\widetilde{\mathbf{A}}, \widetilde{\mathbf{D}}, \widetilde{\mathbf{B}}$, 
and $\boldsymbol\Omega$, which depend on $\boldsymbol\Sigma$ and 
$\boldsymbol\Phi_1, \cdots, \boldsymbol\Phi_p$, are denoted by 
$\widehat{\mathbf{X}}, \widehat{\mathbf{A}}, \widehat{\mathbf{D}}, 
\widehat{\mathbf{B}}, \widehat{\boldsymbol\Omega}$, respectively. Hence, 
the Bayesian MDL for $\boldsymbol{\eta}$ is (up to a constant)
\begin{align*}
&\text{BMDL}(\boldsymbol\eta)\\
=~& \frac{N-p}{2}\log \left( \left| \widehat{\boldsymbol\Sigma} \right| \right) 
	+ \frac{1}{2}\sum_{i=1}^2 m_i \log (\nu\hat{\sigma}_i^2)  
	+ \frac{1}{2}\log \left( \left| \widehat{\mathbf{D}}'  
	(\widehat{\boldsymbol\Sigma}^{-1}\otimes \mathbf{I}_{N-p})
	\widehat{\mathbf{D}} + \widehat{\boldsymbol\Omega}^{-1} \right|\right)\\
&	+ \frac{1}{2}\widehat{\mathbf{X}}' \left\{
 	\widehat{\mathbf{B}}
	- \widehat{\mathbf{B}} \widehat{\mathbf{A}}
	\left(\widehat{\mathbf{A}}' 
	\widehat{\mathbf{B}}\widehat{\mathbf{A}}\right)^{-1}
	\widehat{\mathbf{A}}' \widehat{\mathbf{B}}
	\right\}\widehat{\mathbf{X}} - \sum_{k = 1}^2\sum_{\ell=1}^4 
	\log \left\{ \Gamma\left(\alpha_{\ell}^{(k)} + m_{\ell}^{(k)}\right) \right\}.
\end{align*}
Under the null model $\boldsymbol\eta_{\o}$, since $\widehat{\mathbf{B}} 
= \widehat{\boldsymbol\Sigma}^{-1}\otimes \mathbf{I}_{N-p}$, with the 
convention that the determinant of a $0 \times 0$ matrix is unity, the 
above BMDL still holds.

%%%%%%%%%%%%%%%%%%%%%%%%%%%%%%%%%%%%%%%%%%%%%%%%
\section{Simulation Studies}\label{sec:simulation} %% 8 pages
%%%%%%%%%%%%%%%%%%%%%%%%%%%%%%%%%%%%%%%%%%%%%%%%

This section studies changepoint detection performance under finite 
samples via simulation. Our simulation parameters are selected to 
roughly resemble the bivariate Tuscaloosa data, which will be studied in 
Section \ref{sec:Tuscaloosa}. Specifically, the bivariate error series 
$\{ \boldsymbol\epsilon_t \}$ is chosen to follow a zero mean Gaussian 
VAR model with $p=3$.  The VAR parameters are taken as
\[
\boldsymbol\Phi_1 = \left(
  \begin{array}{cc}
	0.2 & 0.02 \\ 
	0.02 & 0.2 
  \end{array} \right),
\boldsymbol\Phi_2 = \left(
  \begin{array}{cc}
	0.1 & 0.01 \\ 
	0.01 & 0.1 
  \end{array} \right),
\boldsymbol\Phi_3 = \left(
  \begin{array}{cc}
	0.05 & 0.005 \\ 
	0.005 & 0.05 
  \end{array} \right),
\]
and 
\[
\boldsymbol\Sigma = \left(
  \begin{array}{cc}
	9 & 2 \\ 
	2 & 9 
 \end{array} \right).
\]
In each of 1000 independent runs, 50 year monthly Tmax and Tmin series 
($N = 600$) are simulated with $m = 3$ changepoints in each series.  For 
the Tmax series, mean shifts are placed at the times $150, 300$, and 
$450$.  The regime means have form $\boldsymbol{\mu}_1 = (0, \Delta, 2 
\Delta, 3 \Delta)'$ where $\Delta > 0$ will be varied.  For the Tmin 
series, mean shifts are placed at times $150, 300$, and $375$.  The 
regime means are $\boldsymbol{\mu}_2 = (0, -\Delta, \Delta, 0)'$. Here, 
Tmax has monotonic ``up, up, up'' shifts of equal shift magnitudes; Tmin 
shifts in a ``down, up, down'' fashion and the second shift is twice as 
large as the other two shifts. The shifts at times 150 and 300 are 
concurrent in both series.

Seasonal means are set to $\mathbf{s} = (0, 3, 10, 18, 26, 33, 36, 36, 
31, 20, 8, 2)'$ in both series. Seasonal mean parameters are not 
critical, but the $\Delta$ parameter controlling the mean shift size is. 
Our detection powers will be reported under different signal to noise 
ratios, measured by $\kappa = \Delta / \sigma$. Our study examines 
$\kappa \in \{ 1, 1.5, 2 \}$, with $\sigma = 3$ agreeing with the 
diagonal elements of $\boldsymbol\Sigma$. For metadata, a record 
containing four documented changes at the times $75, 150, 250$, and 
$550$ is posited. Among the documented times, only time 150 is a true 
changepoint.

A simulated series with $\kappa = 1.5$ is shown in Figure 
\ref{fg:simulation_sample1}.  Figure \ref{fg:simulation_sample2} in the 
Appendix shows the same series after subtraction of sample monthly 
means.

\begin{figure}
\centering
\includegraphics[width = 0.6\textwidth, angle = 270]
{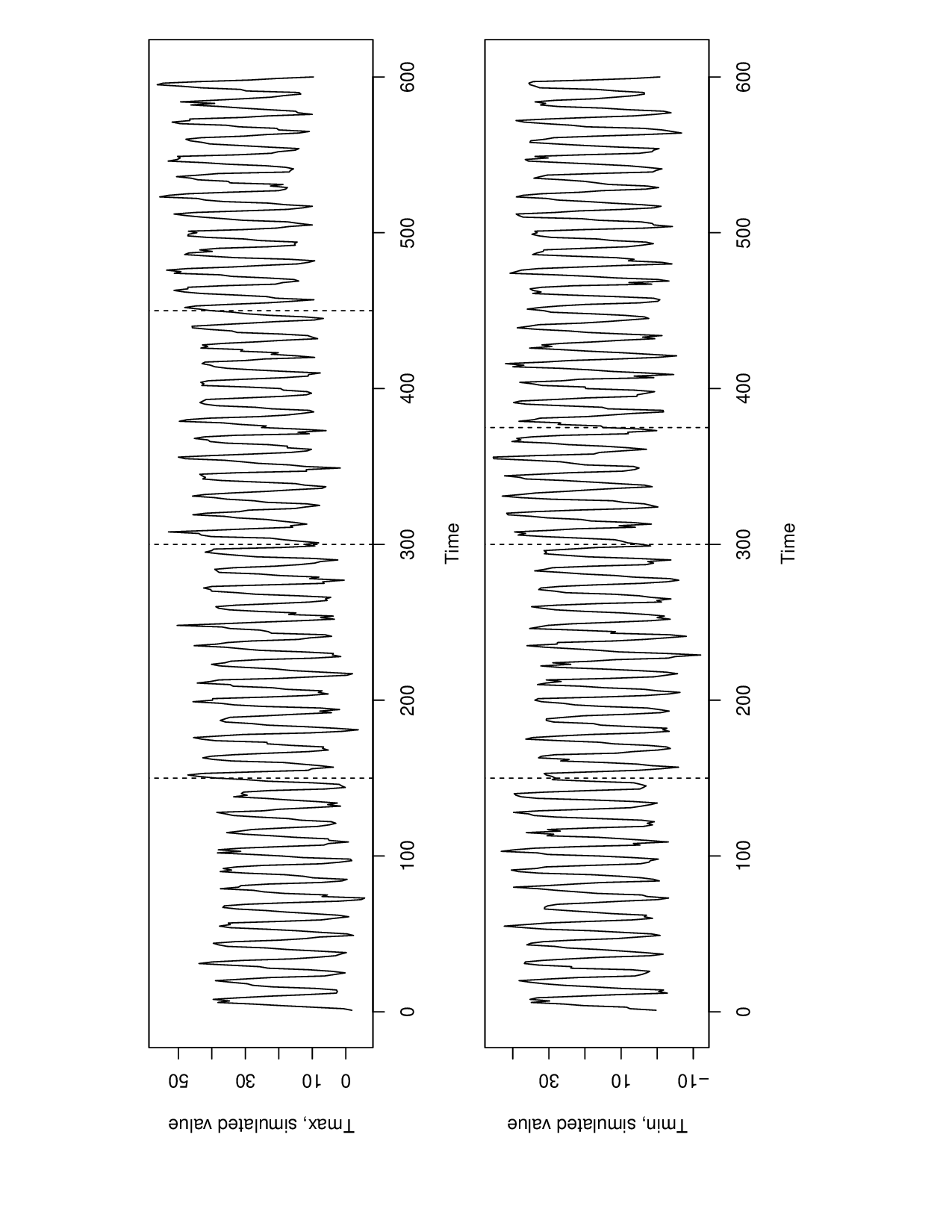}
\caption{\label{fg:simulation_sample1}
A simulated dataset with the signal to noise ratio $\kappa = 1.5$, 
which has three changepoints in Tmax (top panel) 
and three changepoints in Tmin (bottom panel). Vertical dashed lines 
demarcate the true changepoint times.}
\end{figure}

%%%%%%%%%%%%%%%%%%%%%%%%%%%%%%%%%%%%%%%%%%%%%%%%
\subsection{Univariate simulations}\label{sec:simulation_univariate}

First, the Tmax and Tmin series are analyzed separately, each fitted by 
univariate BMDL methods with default parameters, once with the 
fictitious metadata and once without metadata.  We also compare various 
methods without metadata, including a BMDL under the objective Bayes 
parameters $a=b=1$ (denoted by oBMDL), the automatic MDL 
 (denoted by MDL), and the BIC.
The MDL obtained 
when the automatic code length rules in \citet{Davis_etal_2006} are 
applied to our multiple mean shift problem:
\begin{equation}
\label{eq:MDL_univariate}
\text{MDL}(\boldsymbol\eta) = 
	 \frac{N-p}{2}\log \left( \hat{\sigma}^2_{\nu = \infty} \right) 
	 + \frac{1}{2}\sum_{r=2}^{m+1}\log(N_r) + \log(m + 1) + (m+1) \log(N-p).
\end{equation}
The first term in \eqref{eq:MDL_univariate} approximates the negative 
logarithm of the maximum likelihood, where the Yule-Walker estimator of 
$\sigma^2$ is used, which equals \eqref{eq:sigmasq_hat} with $\nu = 
\infty$ after $\boldsymbol\phi$ is replaced by $\hat{\boldsymbol\phi}$. 
This estimator is denoted by $\hat{\sigma}^2_{\nu = \infty}$ here. The 
other terms in \eqref{eq:MDL_univariate} are the two-part MDLs for the 
regime means $\mu_2, \ldots, \mu_{m+1}$, the number of changepoints $m$ 
(the original penalty of $\log(m)$ is undefined for the null model with 
$m=0$; the ad-hoc fix to this simply uses $m+1$ in the logarithm), and 
the regime lengths $N_1, \ldots, N_{m+1}$, respectively. The two-part 
MDLs of the global parameters $\mathbf{s}$, $\sigma^2$, and 
$\boldsymbol\phi$ are constants and hence omitted. An MDL for the AR 
order $p$ is not needed as $p$ is tacitly assumed known. 
BIC, up to a constant, is
\[
\text{BIC} (\boldsymbol\eta) = \frac{N-p}{2}\log \left( 
\hat{\sigma}^2_{\nu = \infty} \right) + m \log(N-p).
\] 

In each fit, an MCMC chain of 100,000 iterations is generated.  The optimal 
multiple changepoint model is taken as the one that minimizes the objective 
function.

%%%%%%%%%%  Tmax  %%%%%%%%%
\begin{table}
    \caption{\label{tb:simulation_Tmax_uni}
    Univariate results for Tmax, aggregated from 1000 simulated 
datasets. The detection rates for the documented change when metadata is 
used are in bold.}
  \centering
\fbox{%
\begin{tabular}{ c  c l | c c c | c | c }
%  \hline
\multirow{2}{*}{$\kappa$}	&  \multirow{2}{*}{Metadata}	& \multirow{2}{*}{Method}	& \multicolumn{3}{c|}{True positive detection $(\%)$} 	&Average false positive   	&  \multirow{2}{*}{$\hat{m}$ $(se)$}	 \\ 
 			& 			& 		 	& $t = 150$ & $t = 300$ & $t = 450$ 					& detection	$(\%)$ 			&   \\ 
  \hline
  & yes & BMDL & {\bf 58.8} & 16.8 & 14.5 & 0.29 & 2.65 (0.56) \\ 
   & no & BMDL & 15.4 & 16.3 & 16.4 & 0.36 & 2.61 (0.61) \\ 
  1.0 & no & oBMDL & 14.4 & 16.9 & 16.1 & 0.37 & 2.68 (0.59) \\ 
   & no & MDL & 14.9 & 17.2 & 16.2 & 0.36 & 2.64 (0.62) \\ 
   & no & BIC & 17.0 & 17.4 & 18.3 & 0.43 & 3.07 (0.54) \\ 
  \hline
   & yes & BMDL & {\bf 75.7} & 41.7 & 37.9 & 0.25 & 3.02 (0.13) \\ 
   & no & BMDL & 36.3 & 40.8 & 37.1 & 0.31 & 3.02 (0.13) \\ 
  1.5 & no & oBMDL & 36.5 & 41.3 & 37.2 & 0.31 & 3.03 (0.17) \\ 
   & no & MDL & 37.6 & 41.3 & 37.0 & 0.31 & 3.02 (0.15) \\ 
   & no & BIC & 37.0 & 40.2 & 36.3 & 0.33 & 3.12 (0.38) \\ 
  \hline
   & yes & BMDL & {\bf 84.1} & 59.3 & 57.6 & 0.17 & 3.02 (0.14) \\ 
   & no & BMDL & 54.2 & 59.7 & 57.2 & 0.22 & 3.02 (0.15) \\ 
  2.0 & no & oBMDL & 54.4 & 59.4 & 57.3 & 0.22 & 3.03 (0.18) \\ 
   & no & MDL & 54.7 & 59.4 & 58.0 & 0.22 & 3.02 (0.16) \\ 
   & no & BIC & 53.4 & 59.1 & 56.9 & 0.24 & 3.11 (0.36) \\ 
%   \hline
\end{tabular}
 }
\end{table}

For Tmax series, Table \ref{tb:simulation_Tmax_uni} reports empirical 
detection percentages, including true positive rates at the exact times 
of changepoints and average false positive rates at non-changepoint 
times, along with estimated number of changepoints $\hat{m}$ and its 
standard error. When metadata is ignored, since the three shifts are of 
equal size $\Delta$, their detection rates are similar. False detection 
rates are very low; even when $\kappa = 1$, on average, a 
non-changepoint is flagged 0.43\% of the time or less.

Among different methods without metadata, detection rates of true 
changepoints are similar, while BIC flags slightly more false positives 
than MDL-based methods (BMDL, oBMDL, and MDL). When $\kappa = 1$, the 
number of changepoints $m=3$ is underestimated by the MDL-based methods 
and better estimated by BIC penalties; when $\kappa = 1.5$ and $2$, $m$ 
is better estimated by the MDL-based methods, and overestimated by BIC. 
Overall, BIC tends to favor models with more changepoints than the 
MDL-based methods. As suggested by Theorem \ref{prop:BMDL_MDL} below, 
the BMDL performs similarly to the automatic MDL.

Interestingly, without metadata, the BMDL under the default parameters 
$a = 1$ and $b = 239$ and the objective choices $a=b=1$ perform 
similarly. Figure \ref{fg:BMDL_penalties} in the Appendix reveals that 
as functions of $m$, the code lengths $\mathcal{L}(\boldsymbol\eta) = 
-\log\{\pi(\boldsymbol\eta)\}$ under BMDL and oBMDL have similar shapes, 
with a nearly constant difference over the region where $m$ is small. In 
this case, if knowledge of changepoint frequency is not available, a 
BMDL can still be used with objective parameters.

%\begin{figure}
%\centering
%\includegraphics[width = 0.42\textwidth, angle = 270]
%{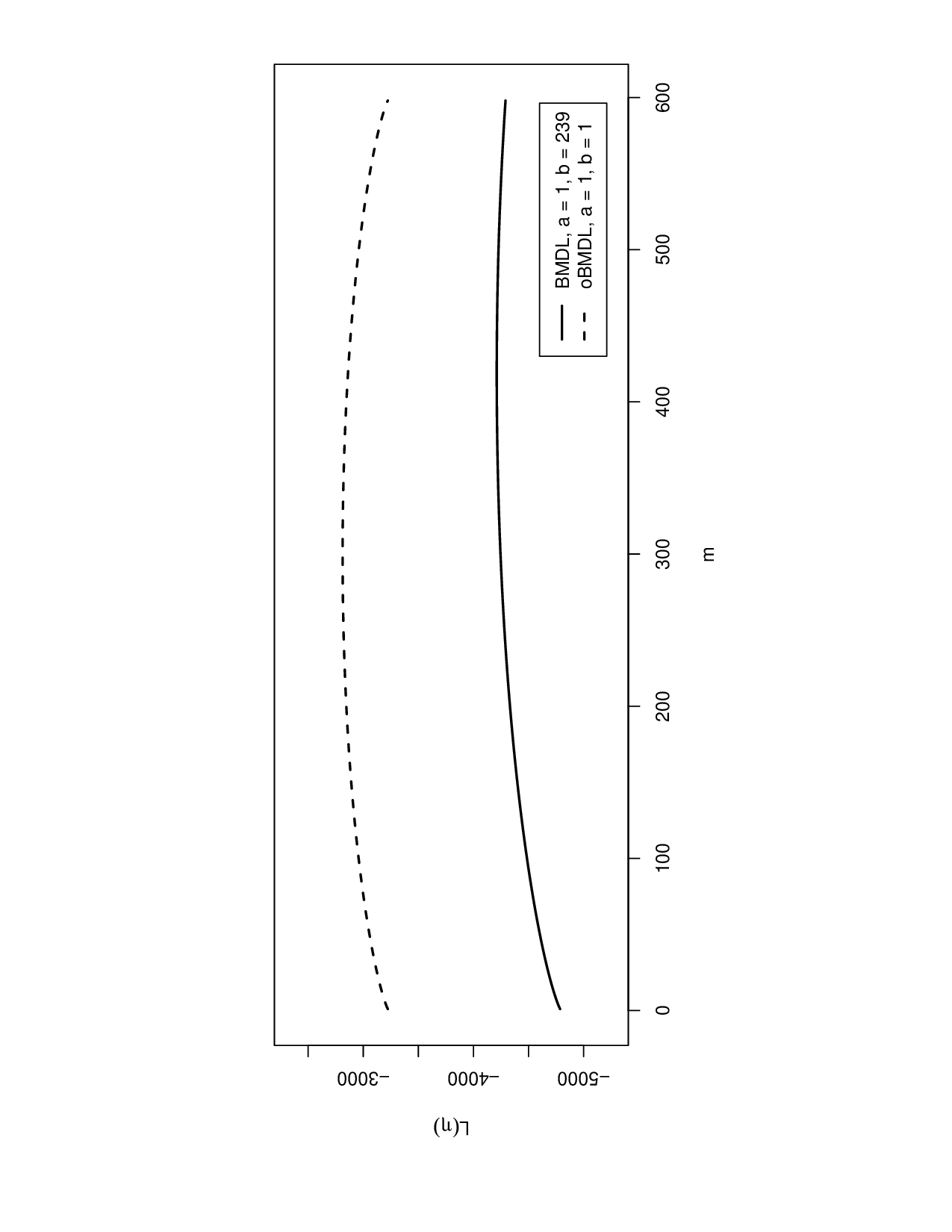}
%\caption{Model code lengths $\mathcal{L}(\boldsymbol\eta) = -\log\Gamma\left(a + m\right) 
%-\log\Gamma\left(b + N -p - m\right)$ between the BMDL and the oBMDL.}
%\label{fg:BMDL_penalties}
%\end{figure}

Metadata use substantially increases detection power for the BMDL. In 
Figure \ref{fg:simulation_Tmax}, the true documented change at time 150 
is detected $75.7\%$ of the time when metadata is used, more than twice 
as high ($36.3\%$) when metadata is eschewed. Moreover, times near the 
changepoint at time 150 are less likely to be erroneously flagged as 
changepoints.  Our prior belief that metadata times are more likely to 
be changepoints is especially important when the mean shift is small: 
when $\kappa = 1$, using metadata increases the detection rate of the 
time 150 changepoint from $15.4\%$ to $58.8\%$. For false positives, 
Figure \ref{fg:simulation_Tmax} shows that using metadata does not 
increase false detection rates at the documented times 75, 250, and 550 
(where no shifts occur).  This suggests that the prior distribution does 
not ``overwhelm" the data. Table \ref{tb:simulation_Tmax_uni} shows that 
average false positive rates even drop after using metadata.

\begin{figure}
\centering
\includegraphics[width = 0.6\textwidth, angle = 270]
{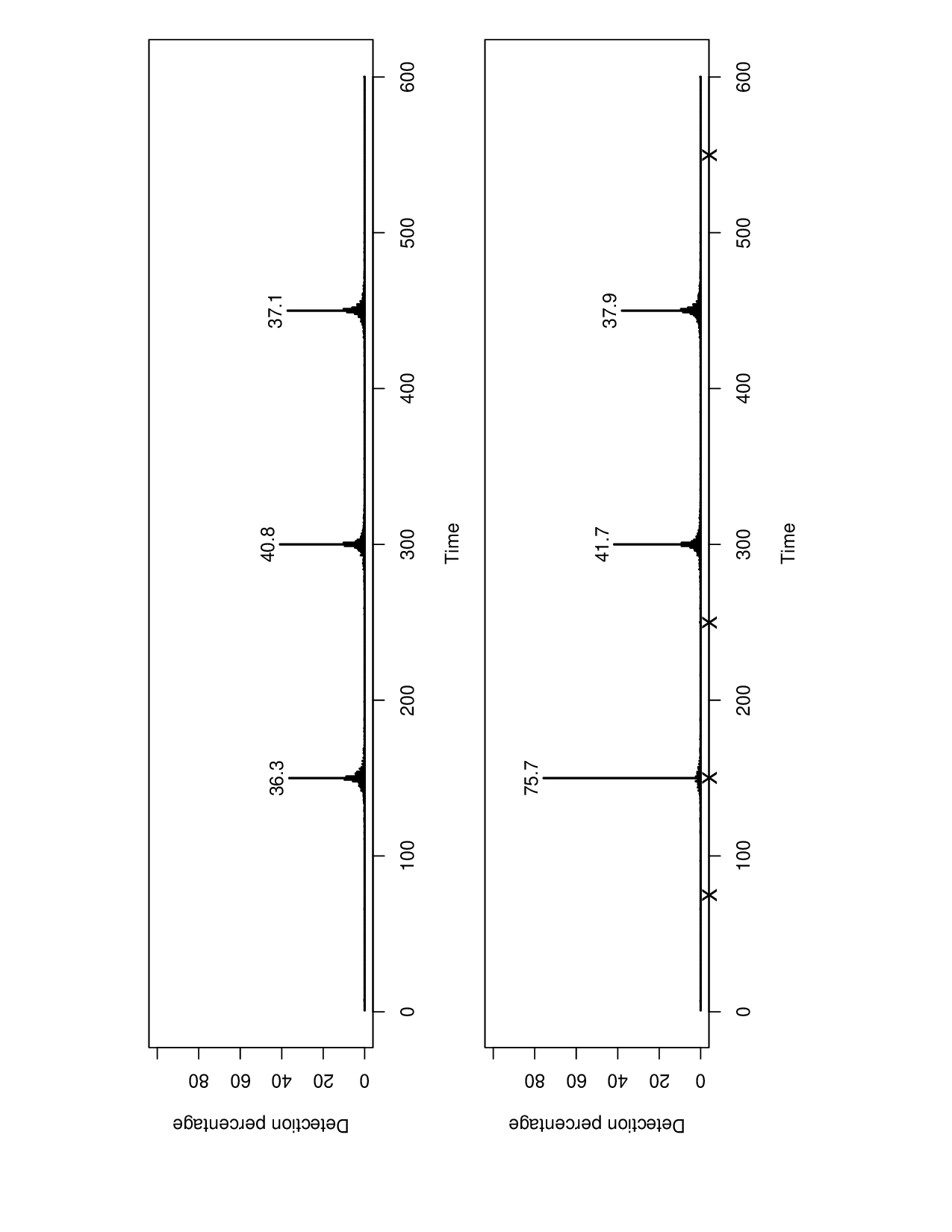}
\caption{\label{fg:simulation_Tmax}
Detection times and percentages of changepoints in Tmax series 
using univariate BMDL.  The top panel ignores the four metadata times; 
the bottom panel uses the metadata (metadata times are marked as crosses 
on the axis). Numerical percentages on the graphic are for detection at 
``their exact times". The results are aggregated from 1000 independent 
simulated datasets with $\kappa = 1.5$.}
\end{figure}

%%%%%%%%%%  Tmin  %%%%%%%%%
\begin{table}%[ht]
\caption{\label{tb:simulation_Tmin_uni}
Univariate results for Tmin, aggregated from 1000 simulated 
datasets. Detection rates for the documented change when metadata is 
used are in bold.}
\centering
\fbox{
\begin{tabular}{c  c l | c c c | c | c}
%  \hline
\multirow{2}{*}{$\kappa$}	&  \multirow{2}{*}{Metadata}	& \multirow{2}{*}{Method}	& \multicolumn{3}{c|}{True positive detection $(\%)$} 	&Average false positive   	&  \multirow{2}{*}{$\hat{m}$  $(se)$}\\ 
 			& 			& 		 	& $t = 150$ & $t = 300$ & $t = 375$ 					& detection	$(\%)$ 			&   \\ 
  \hline
   & yes & BMDL & {\bf 62.0} & 53.5 & 14.3 & 0.23 & 2.69 (0.77) \\ 
   & no & BMDL & 18.0 & 52.4 & 14.1 & 0.30 & 2.63 (0.86) \\ 
  1.0 & no & oBMDL & 18.7 & 54.9 & 14.6 & 0.31 & 2.76 (0.71) \\ 
   & no & MDL & 17.4 & 50.5 & 13.6 & 0.28 & 2.50 (0.99) \\ 
   & no & BIC & 19.5 & 55.0 & 15.8 & 0.36 & 3.07 (0.52) \\ 
  \hline
   & yes & BMDL & {\bf 77.3} & 84.4 & 38.2 & 0.17 & 3.01 (0.15) \\ 
   & no & BMDL & 37.4 & 84.7 & 39.5 & 0.24 & 3.02 (0.17) \\ 
  1.5 & no & oBMDL & 37.5 & 84.3 & 38.9 & 0.24 & 3.03 (0.20) \\ 
   & no & MDL & 37.2 & 84.3 & 38.6 & 0.24 & 3.01 (0.15) \\ 
   & no & BIC & 36.5 & 83.3 & 38.0 & 0.26 & 3.13 (0.44) \\ 
  \hline
   & yes & BMDL & {\bf 85.2} & 95.4 & 56.1 & 0.11 & 3.01 (0.13) \\ 
   & no & BMDL & 58.2 & 95.4 & 56.4 & 0.15 & 3.02 (0.13) \\ 
  2.0 & no & oBMDL & 58.2 & 95.2 & 56.5 & 0.16 & 3.03 (0.18) \\ 
   & no & MDL & 58.0 & 95.5 & 56.9 & 0.15 & 3.01 (0.12) \\ 
   & no & BIC & 57.7 & 95.5 & 55.7 & 0.17 & 3.12 (0.43) \\ 
%   \hline
\end{tabular}
}
\end{table}

For Tmin series, the non-monotonic shift aspect (down, up, down) that 
troubles some at most one change (AMOC) binary segmentation approaches 
\citep{Li_Lund_2012} is well handled by all multiple changepoint 
detection methods examined. Table \ref{tb:simulation_Tmin_uni} shows 
that when metadata is ignored, the larger shift at time 300 is more 
easily detected than the two smaller shifts at times 150 and 375. When 
metadata is used, the detection rate of the time 150 shift becomes 
comparable to the detection rate of time 300 shift, which is twice as 
large in size, but is not a metadata time. False positive rates are 
uniformly low, and $m$ is well-estimated by MDL-based methods when 
$\kappa$ is not too small. Again, without metadata, the MDL-based 
methods are similar, while BIC tends to favor models with larger $m$.

%%%%%%%%%%%%%%%%%%%%%%%%%%%%%%%%%%%%%%%%%%%%%%%%
\subsection{Bivariate simulations}

Since the BMDL is flexible enough to handle non-concurrent shifts for 
bivariate series, we now apply it to Tmax and Tmin series jointly. Each 
bivariate series is fitted by an MCMC chain of 50,000 iterations, once 
without metadata, and once with metadata. Metadata impacts are similar 
to the univariate case, increasing detection of true mean shifts at 
metadata times and also slightly decreasing average false positive rates 
(see Tables \ref{tb:simulation_Tmax_bi} and 
\ref{tb:simulation_Tmin_bi}). Figure \ref{fg:simulation_txtn} shows 
bivariate detection rates with metadata when $\kappa = 1.5$. For the 
non-concurrent shift times at 375 and 450, detection rates for the 
component series are very different; in most runs, concurrent shifts are 
not erroneously signaled.

\begin{figure}
\centering
\includegraphics[width = 0.6\textwidth, angle = 270]
{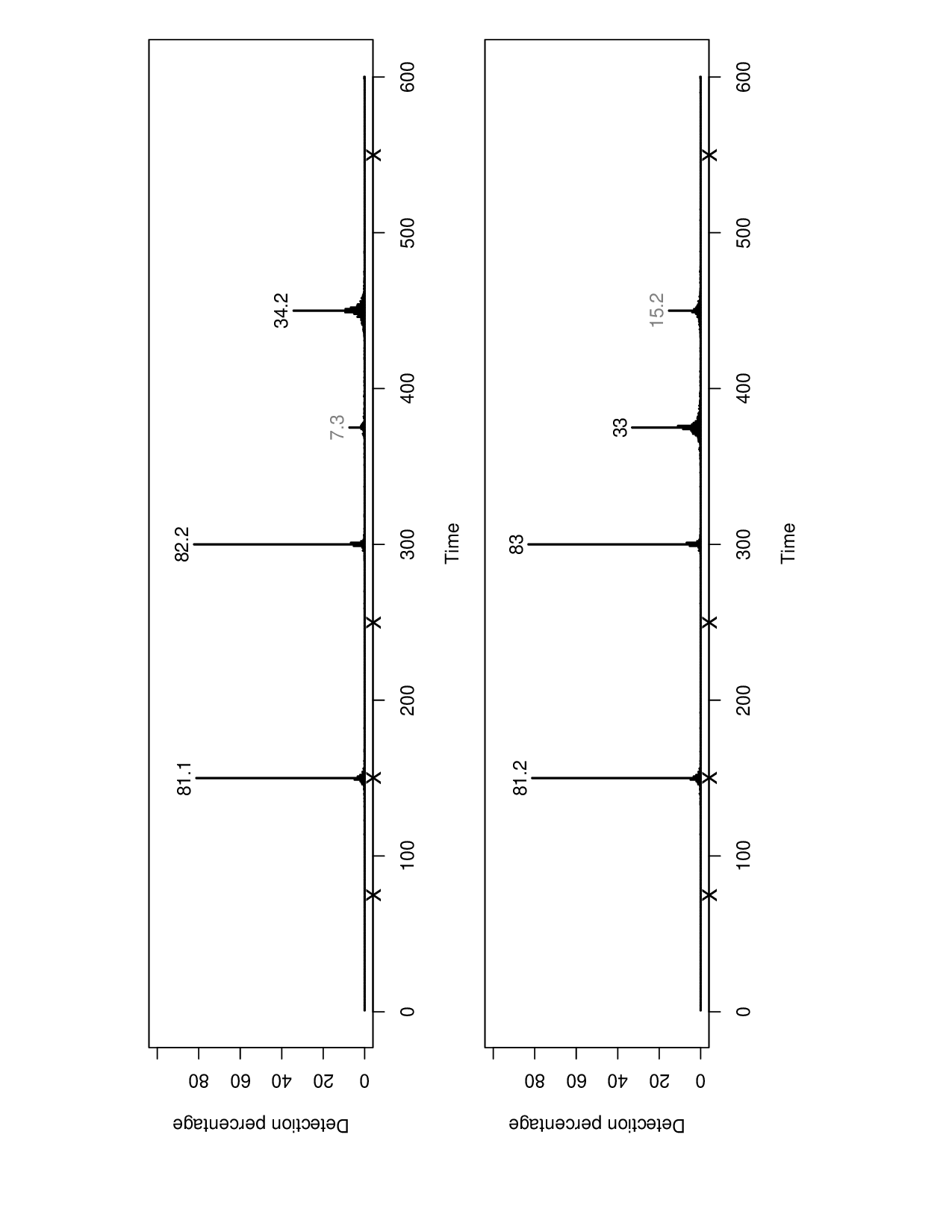}
\caption{\label{fg:simulation_txtn}
Detection percentages of Tmax (top panel) and Tmin (bottom 
panel) using bivariate BMDL methods with metadata (metadata times are 
marked as crosses on the axis). Numerical percentages on the graphic are 
for detection at ``their exact times". The results are aggregated from 
1000 independent simulated datasets with $\kappa = 1.5$.}
\end{figure}

\begin{table}%[ht]
\caption{\label{tb:simulation_Tmax_bi}
Bivariate results for Tmax by BMDL, aggregated from 1000 simulated datasets.}
\centering
\fbox{
\begin{tabular}{c  c  | c c c | c c | c}
%  \hline
\multirow{2}{*}{$\kappa$}	&  \multirow{2}{*}{Metadata}	& \multicolumn{3}{c|}{True positive detection $(\%)$} 	&\multicolumn{2}{c|}{False positive   detection	$(\%)$}	&  \multirow{2}{*}{$\hat{m}$  $(se)$}	 \\ 
 			& 			& $t = 150$ & $t = 300$ & $t = 450$ 			& $t=375$	& average  			&   \\ 
  \hline
\multirow{2}{*}{1.0} & yes & 60.7 & 54.5 & 11.5 & 6.8 & 0.31  & 3.12 (0.45) \\ 
   & no & 36.5 & 55.2 & 11.4 & 8.3 & 0.36 & 3.19 (0.48) \\ 
  \hline
  \multirow{2}{*}{1.5} & yes & 81.1 & 82.2 & 34.2 & 7.3 & 0.20 & 3.18 (0.43) \\ 
   & no & 66.7 & 82.9 & 33.9 & 10.8 & 0.24 & 3.29 (0.47) \\ 
  \hline
  \multirow{2}{*}{2.0} & yes & 92.1 & 93.5 & 55.9 & 3.7 & 0.11 & 3.07 (0.28) \\ 
   & no & 84.7 & 94.8 & 55.6 & 6.2 & 0.13 & 3.13 (0.35) \\ 
%   \hline
\end{tabular}
}
\end{table}

\begin{table}
\caption{\label{tb:simulation_Tmin_bi}
Bivariate results for Tmin by BMDL, aggregated from 1000 
simulated datasets.}
\centering
\fbox{
\begin{tabular}{c  c  | c c c | c c | c}
%  \hline
\multirow{2}{*}{$\kappa$}	&  \multirow{2}{*}{Metadata}	& \multicolumn{3}{c|}{True positive detection $(\%)$} 	&\multicolumn{2}{c|}{False positive   detection	$(\%)$}	&  \multirow{2}{*}{$\hat{m}$  $(se)$}	 \\ 
 			& 			& $t = 150$ & $t = 300$ & $t = 375$ 			& $t=450$	& average  			&   \\ 
  \hline
\multirow{2}{*}{1.0} & yes & 60.1 & 54.9 & 9.5 & 8.7 & 0.31 & 3.10 (0.57) \\ 
   & no & 36.2 & 55.3 & 10.2 & 9.6 & 0.36 & 3.17 (0.55) \\ 
  \hline
  \multirow{2}{*}{1.5} & yes & 81.2 & 83.0 & 33.0 & 15.2 & 0.24 & 3.38 (0.54) \\ 
   & no & 66.4 & 83.4 & 34.2 & 21.3 & 0.30 & 3.61 (0.54) \\ 
  \hline
  \multirow{2}{*}{2.0} & yes & 92.0 & 94.8 & 57.8 & 16.2 & 0.14 & 3.28 (0.46) \\ 
   & no & 84.8 & 95.1 & 54.9 & 32.1 & 0.21 & 3.59 (0.53) \\ 
%   \hline
\end{tabular}
}
\end{table}

While concurrent shifts are not always the case, they are believed to be 
more likely in our parameter elicitation settings. Compared to the 
univariate BMDL, the bivariate method enhances detection power of 
concurrent changepoints. When $\kappa = 1.5$, at time 150, where Tmax 
(Tmin) shifts $\Delta$ ($-\Delta$), the bivariate BMDL increases the 
univariate detection rates from both series from about $77\%$ to above 
$81\%$.  At time 300, where Tmax (Tmin) shifts by $\Delta$ ($2\Delta$), 
the detection rate increases from $41.1\%$ to $82.2\%$ for Tmax, while 
it remains roughly the same for Tmin. Tables \ref{tb:simulation_Tmax_bi} and 
\ref{tb:simulation_Tmin_bi} show that detection power gains under the 
bivariate approach are greater for small signals $\kappa = 1$, without 
metadata. An interesting phenomenon is observed: bivariate BMDL improves 
univariate methods more when the concurrent shifts move the series in 
opposite directions than in the same direction (detection rates at time 
300 do not increase for Tmin). Because Tmax and Tmin are positively 
correlated series, concurrent shifts in the same direction may be 
misattributed to positively correlated errors; this cannot happen for 
shifts in opposite directions.

Overall, while bivariate detection does not induce more false positives, 
it tends to flag more false positives at locations where the mean in the 
other series shifts. Figure \ref{fg:simulation_txtn} shows that at time 
375, a changepoint time in Tmin but not in Tmax, a false detection rate 
of $7.3\%$ for Tmax is obtained.  At time 450, a changepoint in Tmax but 
not Tmin, a false detection rate of $15.2\%$ is obtained for Tmin. These 
false positive rates slightly degrade inferences at nearby changepoints; 
for example, at time 450 for Tmax and time 375 for Tmin, detection rates 
are $34.2\%$ and $33.0\%$, respectively, slightly lower than the 
$37.9\%$ and $38.2\%$ reported in the univariate case.  Finally, Tables 
\ref{tb:simulation_Tmax_bi} and \ref{tb:simulation_Tmin_bi} show that 
the bivariate approach tends to overestimate $m$, which differs from the 
univariate case.

%%%%%%%%%%%%%%%%%%%%%%%%%%%%%%%%%%%%%%%%%%%%%%%%
\section{The Tuscaloosa Data}
\label{sec:Tuscaloosa} %% 5 pages
%%%%%%%%%%%%%%%%%%%%%%%%%%%%%%%%%%%%%%%%%%%%%%%%

Monthly Tmax and Tmin series from Tuscaloosa, Alabama (the target 
station) over the 114 year period January, 1901 -- December, 2014 are 
plotted in Figure \ref{fg:Tuscaloosa1}.  \cite{Lu_etal_2010} study 
annually averaged values of this series from 1901-2000. The Tuscaloosa 
metadata lists station relocations in November 1921, March 1939, June 
1956, and May 1987; November 1956 and May 1987 are listed as instrument 
change times.

\begin{figure}
\centering
\includegraphics[width = 0.6\textwidth, angle = 270]{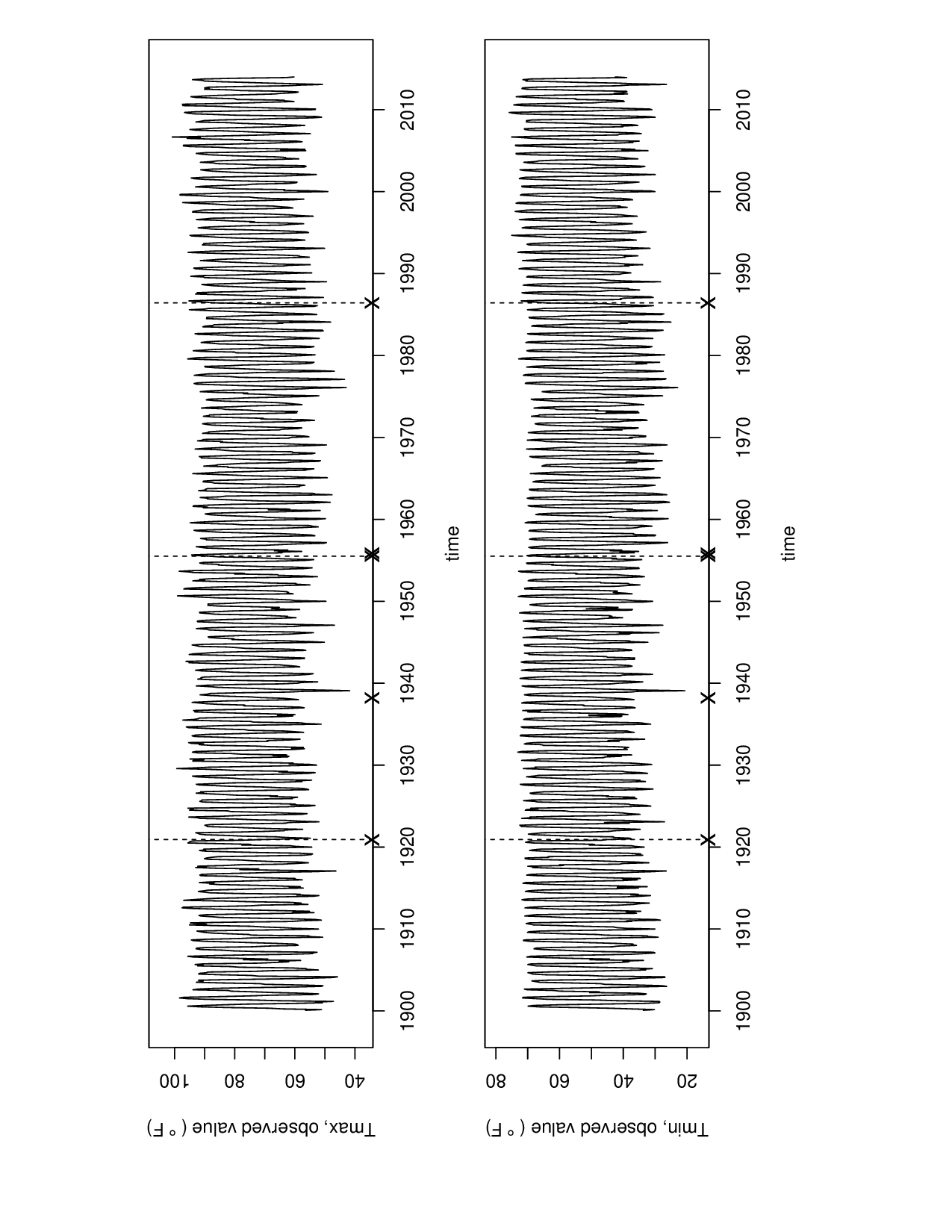}
\caption{\label{fg:Tuscaloosa1}
Tuscaloosa monthly Tmax (top panel) and Tmin (bottom panel) series. Metadata times 
are marked with crosses on the axis. Vertical dashed lines show estimated changepoint times 
from bivariate BMDL with metadata.}
\end{figure}

In this section, the Tmax and Tmin series will be analyzed from both 
univariate and bivariate perspectives via the penalization methods of 
Section \ref{sec:simulation}. All parameters are set to default values; 
the AR order $p = 2$ is judged as appropriate: by Figure 
\ref{fg:Tuscaloosa_acf} in the Appendix, almost all sample 
autocorrelations of residuals fitted with $p=2$ lie inside pointwise 
$95\%$ confidence bands.

To ensure convergence in the MCMC search algorithm, for each fit, 50 
Markov chains are generated from different starting points, each 
containing 1,000,000 (univariate) or 100,000 (bivariate) iterations. 
Among all changepoint models visited by the 50 Markov chains, the one 
with the smallest BMDL is reported as the optimal model.

%%%%%%%%%%%%%%%%%%%%%%%%%%%%%%%%%%%%%%%%%%%%%%%%
\subsection{Univariate fits}

The top half of Table \ref{tb:Tuscaloosa_analysis} displays estimated 
changepoints for the univariate fits. When metadata is ignored, all 
methods (BMDL, oBMDL, MDL, and BIC) estimate the same optimal 
changepoint configuration: Tmax has two estimated changepoints and Tmin 
has three; of these, only January 1990 is a concurrent change. Another 
changepoint is approximately concurrent: March 1957 for Tmax and July 
1957 for Tmin.  The 1918 changepoint flagged for Tmin is close to the 
station relocation in November 1921; the station relocation in June 1956 
and the equipment change in November 1956 are near the two estimated 
changepoints in 1957.  The metadata time in May 1987 is about three 
years from the concurrent changepoints flagged in January 1990. Of 
course, when metadata is ignored, estimated changepoint times may not 
coincide (exactly) with metadata times.

\begin{table}
\caption{\label{tb:Tuscaloosa_analysis}
Estimated changepoints for the Tuscaloosa data.}
\centering
\fbox{
\begin{tabular}{c | c | l}
%  \hline
Metadata	 		& 	Series	& Estimated changepoints \\ 
   \hline
\multicolumn{3}{c}{Univariate}\\
   \hline
\multirow{2}{*}{yes}	&   Tmax & 1956 Nov, 1987 May \\ 
	 				&   Tmin & 1921 Nov, 1956 Jun, 1987 May \\ 
  \hline
 \multirow{2}{*}{no}	& 	Tmax 	& 1957 Mar, 1990 Jan \\ %\cline{2-3}
	 				& 	Tmin 	& 1918 Feb, 1957 Jul, 1990 Jan \\ 
  \hline
\multicolumn{3}{c}{Bivariate}\\
   \hline
\multirow{2}{*}{yes}	&   Tmax & 1921 Nov, 1956 Jun, 1987 May \\ 
	 				&   Tmin & 1921 Nov, 1956 Jun, 1987 May \\ 
  \hline
 \multirow{2}{*}{no}	& 	Tmax 	& 1918 Feb, 1957 Jul, 1988 Jul  \\ 
	 				& 	Tmin 	& 1918 Feb, 1957 Jul, 1988 Jul \\ 
%  \hline
\end{tabular}
}
\end{table}

Repeating the above analysis with metadata, two changepoints are found 
in Tmax and three in Tmin. All estimated changepoint times now coincide 
with metadata times. Only the May 1987 changepoint is concurrent. 
Between Tmax and Tmin, the two estimated changepoints in 1956 (i.e., the 
two metadata times in 1956) are just a few months apart. As parameter 
estimates are similar with or without metadata, only estimates for the 
optimal changepoint model with metadata are reported.  For Tmax, 
estimated regime means are (one standard error is in parentheses) 
$\hat{\mu}_2 = -1.50~(0.24)$ and $\hat{\mu}_3 = 0.66~(0.25)$ (recall 
that $\mu_1 = 0$); estimated AR(2) coefficients are $\hat{\phi}_1 = 
0.21, \hat{\phi}_2 = 0.05$, and $\hat{\sigma}^2 = 11.59$. For Tmin, the 
estimated parameters are $\hat{\mu}_2 = 1.76~(0.21), \hat{\mu}_3 = 
-1.06~(0.22), \hat{\mu}_4 = 2.35~(0.24), \hat{\phi}_1 = 0.18, 
\hat{\phi}_2 = 0.05$, and $\hat{\sigma}^2 = 10.81$. The concurrent May 
1987 changepoint shifts both series to warmer regimes.

%%%%%%%%%%%%%%%%%%%%%%%%%%%%%%%%%%%%%%%%%%%%%%%%
\subsection{Bivariate fits}

Both Tmax and Tmin series are now analyzed in tandem with our methods. 
Three changepoints are detected in both series, with or without 
metadata, and all are concurrent (see the bottom half of Table 
\ref{tb:Tuscaloosa_analysis}).  Figure \ref{fg:Tuscaloosa1} illustrates 
the optimal bivariate BMDL changepoint configuration. When metadata is 
used, all estimated changepoint times migrate to metadata times. 
Comparing to the univariate results, the bivariate approach yields the 
same changepoint configuration for Tmin; for Tmax, a new changepoint in 
November 1921 is flagged and the November 1956 changepoint moves to June 
1956, thus becoming a concurrent change.  For this changepoint 
configuration, the estimated VAR parameters are
\[
\widehat{\boldsymbol{\Phi}}_1 = 
\left(
\begin{array}{cc}
	0.21 & -0.01 \\ 
	-0.02 & 0.20 
\end{array} 
\right),
~
\widehat{\boldsymbol{\Phi}}_2 = 
\left(
\begin{array}{cc}
	0.06 & -0.02 \\ 
       -0.04 & 0.08 
  \end{array} \right), 
~
\widehat{\boldsymbol{\Sigma}} = 
\left(
\begin{array}{cc}
	11.56 & 8.13 \\ 
	8.13 & 10.81 
\end{array} 
\right).
\]

In temperature homogenization problems, the goal is often to detect (and 
then adjust for) ``artificial" changes.  Naturally occurring climate 
shifts should be left in the record if possible.  Because of this, 
analyses often consider target minus reference series, where a reference 
series is a record from a nearby station that shares similar weather 
with the target station. A changepoint detection analysis using 
bivariate BMDL is performed on target minus reference data, and is 
included in the Appendix Section \ref{subsec:target_minus_reference}.

%%%%%%%%%%%%%%%%%%%%%%%%%%%%%%%%%%%%%%%%%%%%%%%%
\section{Asymptotic Properties of the Univariate BMDL}
\label{section:asymptotics}

Infill asymptotics, which assume regime lengths tend to infinity with 
the sample size $N$, have been widely adopted to study consistency of 
multiple changepoint detection procedures \citep{Davis_etal_2006, 
Davis_Yau_2013, Du_etal_2016}. Under infill asymptotics, a relative 
changepoint configuration with $m$ changepoints is denoted by 
$\boldsymbol{\lambda} = (\lambda_1, \ldots, \lambda_{m})'$, where $0 < 
\lambda_1 < \cdots < \lambda_m < 1$. Here, time is scaled to $[0,1]$ by 
mapping time $t$ to $t/N$. For the edges, set $\lambda_0=0$ and 
$\lambda_{m+1}=1$.  For a given $N$, the $r$th changepoint location 
$\tau_r$ can be recovered from $\boldsymbol{\lambda}$ via $\tau_r = 
\lfloor \lambda_r N \rfloor$. The length of the $r$th regime, $N_r = 
\lfloor \lambda_r N \rfloor - \lfloor \lambda_{r-1} N \rfloor$, 
satisfies $\lim_{N\rightarrow \infty} N_r / N = \lambda_r - 
\lambda_{r-1}$, for $r = 1, \ldots, m+1$. For any 
$\boldsymbol{\lambda}$, no changepoints occur in time $\{1, \ldots, p 
\}$ when $N$ is large.

Suppose that the true relative changepoint configuration is 
$\boldsymbol{\lambda}^0 = (\lambda_1^0, \ldots, \lambda_{m^0}^0)'$, 
where true parameter values are superscripted with zero. Our goal is to 
identify $\boldsymbol{\lambda}^0$ over many candidate models. In fact, 
for a (fixed) large integer $M$, all relative changepoint configurations 
in
\[
\boldsymbol{\Lambda} = \{ 
\boldsymbol{\lambda}: 0 \leq m \leq M, 
\min_{r = 1, 2, \ldots, m + 1} \lambda_r - \lambda_{r-1} \geq d \}
\]
are considered, where $d$ is a small positive constant, smaller than 
$\lambda_r^0 - \lambda_{r-1}^0$ for all $r =1, \ldots, m^0 + 1$. We 
assume that $m^0 \leq M$ such that $\boldsymbol{\lambda}^0 \in 
\boldsymbol{\Lambda}$ and $M \leq 1/d$.

Under the same assumptions, the automatic MDL for piece-wise AR 
processes \citep{Davis_etal_2006} has been shown to consistently 
estimate relative changepoint locations and model parameters 
\citep{Davis_Yau_2013}. The following two theorems show that the BMDL 
\eqref{eq:BMDL_univariate} also achieve the same large sample 
consistency.

%%%%%%%%%%%%%%%%%%%%%%%%%%%%%%%%%%%%%%%%%%%%%%%%
%% New theoretical results
\begin{theorem}[Consistency of changepoint configuration]\label{thm:lambda_convergence}
Given the observed time series of length $N$, denote the estimated 
relative changepoint model as
\begin{equation}
\label{eq:lambda_hat}
\hat{\boldsymbol\lambda}_N = \arg \min_{\boldsymbol\lambda \in \boldsymbol\Lambda}
~ \text{BMDL}(\boldsymbol\lambda),
\end{equation}
with $\hat{m}_N = |\hat{\boldsymbol\lambda}_N|$ changepoints. 
Then as $N \rightarrow \infty$,
\begin{equation}\label{eq:lambda_convergence}
\hat{m}_N \stackrel{P}{\longrightarrow} m^0 \quad \text{and} \quad 
\hat{\boldsymbol\lambda}_N \stackrel{P}{\longrightarrow} \boldsymbol\lambda^0.
\end{equation}
Furthermore, the convergence rate for each $r = 1, \ldots, m^0$ is
\begin{equation}
\label{eq:lambda_convergence_rate}
\left| \hat{\lambda}_r - \lambda^0_r \right| = O_P\left(\frac{1}{N}\right).
\end{equation}
\end{theorem}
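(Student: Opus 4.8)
The plan is to exploit a separation of scales in \eqref{eq:BMDL_univariate}: the goodness-of-fit term $\frac{N-p}{2}\log(\hat\sigma^2)$ is of order $N$, whereas the three penalty terms are each of order $\log N$. Writing $\sigma^2_0$ and $\boldsymbol\phi^0$ for the true white-noise variance and AR coefficients, I would show that with probability tending to one $\text{BMDL}(\boldsymbol\lambda)>\text{BMDL}(\boldsymbol\lambda^0)$ for every $\boldsymbol\lambda\in\boldsymbol\Lambda$ outside a shrinking neighborhood of $\boldsymbol\lambda^0$, and then localize to obtain the rate. It is convenient to partition $\boldsymbol\Lambda$ into \emph{underfitting} configurations, which fail to place a changepoint within a bounded distance of some true $\tau^0_r$, and \emph{overfitting} configurations, which contain every true changepoint (up to an $O(1)$ displacement) but carry additional spurious ones. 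For each fixed $N$ the continuum of $\boldsymbol\lambda$ collapses to the finitely many integer vectors $\tau=\lfloor\lambda N\rfloor$ with $m\le M$, so a union bound over polynomially many configurations will suffice throughout.

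The first and hardest step is to pin down $\hat\sigma^2(\boldsymbol\lambda)$ from \eqref{eq:sigmasq_hat} uniformly. For a correctly specified or overfit configuration the Yule--Walker estimator $\hat{\boldsymbol\phi}$ is $\sqrt N$-consistent for $\boldsymbol\phi^0$, the filtered regression in \eqref{eq:likelihood5} captures the true mean structure, and a law of large numbers gives $\hat\sigma^2\xrightarrow{P}\sigma^2_0$. For an underfit configuration at least one fitted regime straddles a true mean shift; since the spacing constraint $\lambda_r-\lambda_{r-1}\ge d$ forces that regime to have length at least $dN\gg T$, neither a single regime mean nor the $T$-periodic seasonal mean can absorb the shift, and the minimized residual variance is inflated to $\hat\sigma^2(\boldsymbol\lambda)\ge\sigma^2_0+\Delta$ for some $\Delta>0$ bounded away from zero over the underfitting set. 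The delicate point is that under misspecification $\hat{\boldsymbol\phi}$ need not converge, so I would confine the sample autocovariances --- and hence $\hat{\boldsymbol\phi}$ --- to a compact set with high probability and establish the positive bias uniformly over all admissible filters simultaneously.

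Granting this, underfitting is immediately ruled out: for any such $\boldsymbol\lambda$,
\[
\text{BMDL}(\boldsymbol\lambda)-\text{BMDL}(\boldsymbol\lambda^0)\ \ge\ \tfrac{N-p}{2}\bigl\{\log(\sigma^2_0+\Delta)-\log\sigma^2_0\bigr\}-O(\log N)\ \longrightarrow\ +\infty .
\]
Overfitting is controlled by expanding the penalties. Since $\widehat{\mathbf D}'\widehat{\mathbf D}$ is, to leading order, diagonal with entries proportional to the (filtered) regime lengths, $\frac12\log\bigl|\widehat{\mathbf D}'\widehat{\mathbf D}+\mathbf I_m/\nu\bigr|=\frac{m}{2}\log N+O_P(1)$; by Stirling's formula the Gamma-function prior contributes a further term of order $\log N$ per changepoint, because inserting one changepoint multiplies the marginal prior by a factor of order $1/N$. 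Thus each spurious changepoint inflates the penalty by at least $c\log N$ with $c>0$, whereas it lowers $\frac{N-p}{2}\log\hat\sigma^2$ by only $O_P(1)$ (a $\chi^2$-type reduction of the residual sum of squares when the mean is already correct). Hence overfit models are also strictly inferior with high probability, and together the two cases give $\hat m_N\xrightarrow{P}m^0$ and $\hat{\boldsymbol\lambda}_N\xrightarrow{P}\boldsymbol\lambda^0$.

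Finally, for the rate I would restrict to the event $\hat m_N=m^0$ with each $\hat\tau_r$ already near $\tau^0_r$ and compare $\text{BMDL}$ at $\hat\tau_r$ against the true location. Displacing a changepoint by $k$ time units misassigns $k$ observations, raising the residual sum of squares by a deterministic drift of order $k\,\delta_r^2$, where $\delta_r$ is the filtered mean jump, against a stochastic cross term of order $|\delta_r|\sqrt k\,O_P(1)$; the penalty terms change by at most $O(1)$ and do not affect this balance. Balancing the linear drift against the $\sqrt k$ fluctuation forces the minimizer to satisfy $|\hat\tau_r-\tau^0_r|=O_P(1)$, equivalently $|\hat\lambda_r-\lambda^0_r|=O_P(1/N)$. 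The crux of the whole argument remains the uniform lower bound on $\hat\sigma^2$ under underfitting, precisely because the Yule--Walker filter is itself estimated from shift-contaminated residuals.
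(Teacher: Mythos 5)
Your architecture matches the paper's: split $\boldsymbol\Lambda$ into underfitting and overfitting configurations, exploit the $O(N)$ versus $O(\log N)$ separation between the goodness-of-fit term and the three penalty terms (your expansions of the determinant term and the Gamma-prior term as $\tfrac{m}{2}\log N+O_P(1)$ and $\log N$ per changepoint agree with Lemmas \ref{lemma:det} and \ref{lemma:Gamma_pairdiff}), and obtain the $O_P(1/N)$ rate by a local drift-versus-fluctuation comparison, which is in substance the contradiction argument of Lemma \ref{lemma:lambda_convergence}. You also correctly identify the crux: controlling $\hat{\sigma}^2$ under underfitting when the AR filter is itself estimated from shift-contaminated residuals.

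The gap is that your proposed mechanism for that crux --- confine $\hat{\boldsymbol\phi}$ to a compact set $K$ and prove the variance inflation uniformly over all filters in $K$ --- fails as stated. A filter with $\sum_{j=1}^p\phi_j=1$ annihilates piecewise-constant mean signals except within $p$ steps of each changepoint, so the residual-variance inflation it produces is $O(1/N)$, not bounded below by a positive $\Delta$; hence the infimum over any compact $K$ meeting a neighborhood of the hyperplane $\sum_j\phi_j=1$ is zero. This danger is not hypothetical: the mean misfit inflates every sample autocovariance lag by the \emph{same} amount $\delta^2$ (Lemma \ref{lemma:gamma_h}), which pushes the Yule--Walker solution toward that hyperplane (its limiting coefficient sum is $(b+a\delta^2)/(1+a\delta^2)$ with $a,b$ as in the appendix). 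The paper's resolution is precisely to exploit that lag-invariance: $\hat{\boldsymbol\phi}$ converges to the explicit perturbed solution $(\boldsymbol\Gamma_p+\delta^2\mathbf{J}_p)^{-1}(\boldsymbol\gamma_p+\delta^2\mathbf{1}_p)$ (Proposition \ref{prop:YW_phi}), and the resulting innovation variance $f(\delta^2)$ is shown strictly increasing by direct differentiation, $f'(\delta^2)=(1-\sum_k\phi_k)^2/(1+a\delta^2)^2>0$ (Proposition \ref{prop:sigmasq2}); positivity of the inflation for $\delta^2>0$ comes from this computation, not from a uniform bound over filters. Separately, your union bound over the $O(N^M)$ integer configurations needs deviation probabilities of order $o(N^{-M})$, i.e.\ exponential concentration for quadratic forms of the linear process, which is more than the WLLN rates the rest of your argument supplies; the paper instead obtains uniformity by sequential compactness of $\boldsymbol\Lambda$ and a subsequence contradiction, using continuity of $\delta^2_{\boldsymbol\lambda}$ in $\boldsymbol\lambda$.
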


\begin{theorem}[Consistency of parameter estimation]
\label{thm:parameter_convergence} 
Suppose that under the true model $\boldsymbol\lambda^0$, the true model 
parameters are $\boldsymbol\mu^0, \mathbf{s}^0, (\sigma^{2})^0$, and 
$\boldsymbol\phi^0$. Under the estimated relative changepoint model 
$\hat{\boldsymbol\lambda}_N$ in \eqref{eq:lambda_hat}, the BMDL 
estimator for $\boldsymbol\phi$, denoted by $\hat{\boldsymbol\phi}_N$, 
is given by the Yule-Walker estimator described in Section 
\ref{subsection:bmdl_univariate}; the BMDL estimator for $\mathbf{s}$ 
and $\sigma^2$, denoted by $\hat{\mathbf{s}}_N$ and 
$\hat{\sigma}^{2}_N$, are given by \eqref{eq:s_hat} and 
\eqref{eq:sigmasq_hat} after replacing all terms containing 
$\boldsymbol\phi$ by $\hat{\boldsymbol\phi}_N$, respectively; the BMDL 
estimator for $\boldsymbol\mu$ is taken as its conditional posterior 
mean
\begin{equation}\label{eq:mu_hat}
\hat{\boldsymbol\mu}_N = 
E\left(\boldsymbol\mu \mid \hat{\mathbf{s}}_N, \hat{\sigma}^{2}_N, \hat{\boldsymbol\lambda}_N, 
\hat{\boldsymbol\lambda}_N, \mathbf{X}_{1:N}\right)
= \left(\widehat{\mathbf{D}}' 
\widehat{\mathbf{D}}+\frac{\mathbf{I}_{m}}{\nu}\right)^{-1}
\widehat{\mathbf{D}}' \left( \widehat{\mathbf{X}} - \widehat{\mathbf{A}}\hat{\mathbf{s}}_N \right).
\end{equation}
Then as $N\rightarrow \infty$, all estimators converge to their true values
in probability, i.e., 
\begin{equation}\label{eq:parameter_convergence}
\hat{\boldsymbol\mu}_N \stackrel{P}{\longrightarrow} \boldsymbol\mu^0, \quad 
\hat{\mathbf{s}}_N \stackrel{P}{\longrightarrow} \mathbf{s}^0, \quad 
\hat{\sigma}^{2}_N \stackrel{P}{\longrightarrow} (\sigma^{2})^0, \quad 
\hat{\boldsymbol\phi}_N \stackrel{P}{\longrightarrow} \boldsymbol\phi^0.
\end{equation}
\end{theorem}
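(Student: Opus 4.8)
The plan is to lean on Theorem~\ref{thm:lambda_convergence} to reduce everything to a correctly specified model up to a vanishing perturbation, and then to peel off the estimators in the order $\hat{\boldsymbol\phi}_N \to (\hat{\mathbf{s}}_N, \hat\sigma^2_N) \to \hat{\boldsymbol\mu}_N$, matching the order in which they are defined. By Theorem~\ref{thm:lambda_convergence}, $P(\hat m_N = m^0)\to 1$ and the $O_P(1/N)$ relative rate translates to $|\hat\tau_r-\tau_r^0| = O_P(1)$ on the integer time scale. Hence it suffices to argue on the event $\mathcal{E}_N = \{\hat m_N = m^0\}$, on which the estimated design matrices $\widehat{\mathbf{A}},\widehat{\mathbf{D}}$ have the correct dimensions and agree with their true-configuration counterparts $\mathbf{A}_{1:N}, \mathbf{D}_{1:N}^0$ in all but $O_P(1)$ rows, namely the observations lying strictly between an estimated and a true changepoint.

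First I would establish $\hat{\boldsymbol\phi}_N \stackrel{P}{\longrightarrow}\boldsymbol\phi^0$. Since the Yule--Walker estimator is a continuous function of the residual sample autocovariances $\hat\gamma(h)$, $h=0,\ldots,p$ (the limiting Toeplitz matrix $\boldsymbol\Gamma_p^0$ is nonsingular for a causal AR process), it is enough to show $\hat\gamma(h)\stackrel{P}{\longrightarrow}\gamma^0(h)$, the true AR autocovariance. Writing the OLS residuals of \eqref{eq:Y}, with design evaluated at $\hat{\boldsymbol\lambda}_N$, as $\boldsymbol\epsilon^{\text{ols}} = \boldsymbol\epsilon+\boldsymbol\delta$ and letting $\mathcal{P}$ denote the projection onto the column space of $[\widehat{\mathbf{A}}|\widehat{\mathbf{D}}]$ and $\mathbf{m}^0 = \mathbf{A}_{1:N}\mathbf{s}^0 + \mathbf{D}_{1:N}^0\boldsymbol\mu^0$ the true mean vector, I would bound $\|\boldsymbol\delta\|^2 = O_P(1)$: the term $\mathcal{P}\boldsymbol\epsilon$ is $O_P(1)$ because $\mathcal{P}$ has fixed rank $T+m^0$ and the AR covariance operator is bounded, while the mean mismatch $(\mathbf{I}_N-\mathcal{P})\mathbf{m}^0$ is supported on the $O_P(1)$ misclassified rows and is itself $O_P(1)$ in squared norm. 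The dominant term $N^{-1}\sum_t\epsilon_t\epsilon_{t-h}\stackrel{P}{\longrightarrow}\gamma^0(h)$ by the ergodic theorem for the stationary error process, the cross term is $O_P(N^{-1}\|\boldsymbol\epsilon\|\,\|\boldsymbol\delta\|)=O_P(N^{-1/2})$ by Cauchy--Schwarz, and the $\boldsymbol\delta$--$\boldsymbol\delta$ term is $O_P(N^{-1})$; both vanish, and the continuous mapping theorem delivers $\hat{\boldsymbol\phi}_N$.

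With $\hat{\boldsymbol\phi}_N$ consistent, the prewhitening in \eqref{eq:X_tilde}--\eqref{eq:A_D_tilde} is asymptotically exact, so the hatted quantities $\widehat{\mathbf{X}},\widehat{\mathbf{A}},\widehat{\mathbf{D}},\widehat{\mathbf{B}}$ differ from their $\boldsymbol\phi^0$-prewhitened analogues by terms negligible after the $N^{-1}$ normalization. I would then show that the normalized Gram quantities $N^{-1}\widehat{\mathbf{A}}'\widehat{\mathbf{B}}\widehat{\mathbf{A}}$ and $N^{-1}\widehat{\mathbf{A}}'\widehat{\mathbf{B}}\widehat{\mathbf{X}}$ converge to deterministic positive-definite limits, whence \eqref{eq:s_hat} yields $\hat{\mathbf{s}}_N\stackrel{P}{\longrightarrow}\mathbf{s}^0$; substituting into \eqref{eq:sigmasq_hat} and invoking a law of large numbers for the prewhitened innovations gives $\hat\sigma^2_N\stackrel{P}{\longrightarrow}(\sigma^2)^0$. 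Finally, in the posterior mean \eqref{eq:mu_hat}, the diagonal-type entries of $\widehat{\mathbf{D}}'\widehat{\mathbf{D}}$ grow like the regime lengths ($\sim N$), so the ridge $\mathbf{I}_m/\nu$ is asymptotically negligible and $(\widehat{\mathbf{D}}'\widehat{\mathbf{D}}+\mathbf{I}_m/\nu)^{-1}\widehat{\mathbf{D}}'(\widehat{\mathbf{X}}-\widehat{\mathbf{A}}\hat{\mathbf{s}}_N)$ reduces asymptotically to the regime-wise sample means of the prewhitened de-seasonalized data, delivering $\hat{\boldsymbol\mu}_N\stackrel{P}{\longrightarrow}\boldsymbol\mu^0$.

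I expect the main obstacle to be that the design matrices are \emph{random and slightly misspecified}: because the estimated changepoints match the truth only to within $O_P(1)$ observations, the regressor columns of $\widehat{\mathbf{D}}$ are data-dependent and the fitted mean is biased on the $O_P(1)$ misclassified observations near each changepoint. The crux is to show that these boundary effects, together with the ridge regularization, contribute only $O_P(1)$ to unnormalized quadratic and bilinear forms and hence wash out under the $N^{-1}$ scaling; controlling $\|\mathcal{P}\boldsymbol\epsilon\|$ uniformly over $\boldsymbol\lambda\in\boldsymbol\Lambda$ and carrying the plug-in error of $\hat{\boldsymbol\phi}_N$ through the prewhitening filter without inflating these bounds is the delicate part of the argument.
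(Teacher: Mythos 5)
Your proposal is correct and follows essentially the same route as the paper: invoke Theorem~\ref{thm:lambda_convergence} to reduce to an asymptotically correct configuration, then establish consistency in the order $\hat{\boldsymbol\phi}_N \to (\hat{\mathbf{s}}_N,\hat\sigma^2_N) \to \hat{\boldsymbol\mu}_N$, with the Yule--Walker step handled through the residual sample autocovariances and the mean parameters identified as asymptotic least-squares estimators of the prewhitened linear model. The only cosmetic difference is that the paper routes the $\boldsymbol\phi$ and $\sigma^2$ steps through its Appendix machinery (Propositions~\ref{prop:YW_phi} and~\ref{prop:sigmasq2}, using continuity in $\delta^2$ together with $\delta^2_{\hat{\boldsymbol\lambda}_N}\to 0$), whereas you re-derive the autocovariance limit directly from the $O_P(1)$ bound on misclassified observations --- the same estimate in different packaging.
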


Proofs of Theorem \ref{thm:lambda_convergence} and 
\ref{thm:parameter_convergence} are given in the Appendix Section 
\ref{PROOFthm:lambda_convergence} and 
\ref{PROOFthm:parameter_convergence}, respectively. The convergence rate 
$O_P(1/N)$ in \eqref{eq:lambda_convergence_rate} is viewed as the 
optimal rate in the multiple changepoint detection literature 
\citep{Niu_etal_2016}. From a Bayesian model selection perspective, a 
model selection criterion is consistent if the ratio of posterior 
probabilities between the true model $\boldsymbol{\lambda}^0$ and any 
other model $\boldsymbol{\lambda} \in \boldsymbol{\Lambda}$ tends to 
infinity \citep{Clyde_George_2004}. This is equivalent to the BMDL 
difference $\text{BMDL}\left(\boldsymbol\lambda\right) 
-\text{BMDL}\left(\boldsymbol\lambda^0\right) \longrightarrow \infty$, 
which is shown to hold in Proposition \ref{prop:pairwise_BMDL_OpN} and 
\ref{prop:pairwise_BMDL_OplogN} in the Appendix.
 
To better understand our BMDL penalty, we compare it to the MDL 
\eqref{eq:MDL_univariate}.
Under a given 
relative changepoint model $\boldsymbol\lambda$, 
\eqref{eq:MDL_univariate} increases linearly with $N$. The following 
theorem states that the difference between the BMDL in 
\eqref{eq:BMDL_univariate} and the automatic MDL in 
\eqref{eq:MDL_univariate} is asymptotically bounded.

\begin{theorem}
\label{prop:BMDL_MDL}
For any relative changepoint model 
$\boldsymbol{\lambda} \in \boldsymbol{\Lambda}$, as $N \rightarrow \infty$, 
up to an additive constant,
\[
\text{BMDL}(\boldsymbol\lambda) -  \text{MDL}(\boldsymbol\lambda) = O_P(1).
\]
\end{theorem}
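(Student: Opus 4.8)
The plan is to subtract the automatic MDL \eqref{eq:MDL_univariate} from the BMDL \eqref{eq:BMDL_univariate} term by term, collect the pieces into three groups, and show that each group is $O_P(1)$; any quantity not depending on $\boldsymbol\lambda$ (normalizing constants, the lone $\log(N-p)$, etc.) is swept into the additive constant the statement allows. Throughout, $\boldsymbol\lambda$ is fixed, so $m,m^{(1)},m^{(2)}$ are bounded by $M$ and each regime length obeys $N_r/N\to\lambda_r-\lambda_{r-1}\ge d>0$. I will use the following facts, each a routine consequence of the strong law applied to the stationary AR-filtered process on regimes of length $\Theta(N)$: the Yule--Walker estimator $\hat{\boldsymbol\phi}$ converges in probability, with $\hat\phi(1):=1-\sum_{j=1}^p\hat\phi_j$ bounded in probability away from $0$ (causality forces the limit $\phi(1)\neq0$); the least squares regime means $\hat{\boldsymbol\mu}_{\mathrm{ols}}$ are $O_P(1)$; and $\hat\sigma^2_{\nu=\infty}$ converges to a strictly positive constant.

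\emph{Variance terms.} The cleanest route recognizes that \eqref{eq:sigmasq_hat} is a ridge-penalized residual sum of squares: completing the square in $\boldsymbol\mu$ shows $(N-p)\hat\sigma^2=\min_{\mathbf s,\boldsymbol\mu}\{\|\widetilde{\mathbf X}-\widetilde{\mathbf A}\mathbf s-\widetilde{\mathbf D}\boldsymbol\mu\|^2+\|\boldsymbol\mu\|^2/\nu\}$, whereas $(N-p)\hat\sigma^2_{\nu=\infty}$ is the same minimization without the penalty. Evaluating both optima at the ordinary least squares fit sandwiches the gap, $0\le(N-p)(\hat\sigma^2-\hat\sigma^2_{\nu=\infty})\le\|\hat{\boldsymbol\mu}_{\mathrm{ols}}\|^2/\nu=O_P(1)$; note the seasonal block $\widetilde{\mathbf A}$ is handled automatically since both minimizations include $\mathbf s$. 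Dividing by the positive limit $\hat\sigma^2_{\nu=\infty}$ and expanding the logarithm gives $\frac{N-p}{2}\log(\hat\sigma^2/\hat\sigma^2_{\nu=\infty})=\frac{N-p}{2}\,O_P(1/N)=O_P(1)$, so the two leading $\frac{N-p}{2}\log(\cdot)$ terms agree up to $O_P(1)$.

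\emph{Determinant versus log regime lengths.} Next I would expand $\frac m2\log\nu+\frac12\log|\widehat{\mathbf D}'\widehat{\mathbf D}+\mathbf I_m/\nu|$ against $\frac12\sum_{r=2}^{m+1}\log N_r$. The point is that the filtered Gram matrix is diagonally dominant: its $r$th diagonal entry is $\hat\phi(1)^2N_r+O_P(1)$, while its off-diagonal entries are supported only near the $m$ regime boundaries and are $O_P(1)$. Factoring $\widehat{\mathbf D}'\widehat{\mathbf D}+\mathbf I_m/\nu=\boldsymbol\Delta(\mathbf I_m+\boldsymbol\Delta^{-1}\mathbf E)$ with $\boldsymbol\Delta=\hat\phi(1)^2\mathrm{diag}(N_2,\dots,N_{m+1})$ and $\mathbf E=O_P(1)$ gives $\log|\mathbf I_m+\boldsymbol\Delta^{-1}\mathbf E|=O_P(1/N)$, so this group collapses to $\frac m2\log\nu+\frac m2\log\hat\phi(1)^2+O_P(1/N)=O_P(1)$.

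\emph{Prior versus integer code lengths, and the main obstacle.} Finally, using $\log\Gamma(b^{(k)}+N^{(k)})-\log\Gamma(b^{(k)}+N^{(k)}-m^{(k)})=\sum_{i=1}^{m^{(k)}}\log(b^{(k)}+N^{(k)}-i)=m^{(k)}\log N^{(k)}+O(1/N)$, and discarding the bounded factors $\log\Gamma(a+m^{(k)})$ and the model-independent $\log\Gamma(b^{(k)}+N^{(k)})$, the BMDL prior contributes $\sum_{k=1}^2 m^{(k)}\log N^{(k)}$ up to a constant, while the automatic MDL contributes $\log(m+1)+(m+1)\log(N-p)$, whose $\log(N-p)$ is model-independent and whose $\log(m+1)$ is bounded; under the infill regime in which both documented and undocumented counts grow proportionally, $\log N^{(k)}=\log N+O(1)$, so $\sum_k m^{(k)}\log N^{(k)}-m\log(N-p)=O(1)$. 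Summing the three $O_P(1)$ groups yields $\mathrm{BMDL}(\boldsymbol\lambda)-\mathrm{MDL}(\boldsymbol\lambda)=O_P(1)$. I expect the crux to be the diagonal-dominance bookkeeping behind the determinant group: one must show that AR filtering turns the block-diagonal regime Gram matrix into $\hat\phi(1)^2\mathrm{diag}(N_r)$ plus boundary terms that are genuinely $O_P(1)$, and must keep $\hat\phi(1)$ bounded away from $0$ so that $\log\hat\phi(1)^2$ stays $O_P(1)$. These same non-degeneracy estimates for the filtered design are what guarantee the positive limit of $\hat\sigma^2_{\nu=\infty}$ and the boundedness of $\hat{\boldsymbol\mu}_{\mathrm{ols}}$ used in the other two groups, so they are the load-bearing part of the argument.
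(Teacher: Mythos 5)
Your proof is correct, and its skeleton is the same as the paper's: both arguments pair off the corresponding terms of \eqref{eq:BMDL_univariate} and \eqref{eq:MDL_univariate} and show each pairwise difference is $O_P(1)$ (goodness-of-fit terms, determinant versus $\frac12\sum\log N_r$, prior code length versus $m\log N$), absorbing model-independent quantities into the additive constant. Where you genuinely diverge is in how two of the three supporting estimates are obtained. For the variance comparison the paper's Lemma \ref{lemma:sigmasq1} expands $\widehat{\mathbf X}'\widehat{\mathbf B}\widehat{\mathbf X}/N$ directly and tracks the $\mathbf I_m/(N\nu)$ perturbation through the matrix inverses; your ridge-regression sandwich --- recognizing $(N-p)\hat\sigma^2$ as a penalized RSS and bounding the gap to the unpenalized RSS by $\|\hat{\boldsymbol\mu}_{\mathrm{ols}}\|^2/\nu$ --- reaches the same $O_P(1/N)$ conclusion more transparently, at the cost of having to certify separately that $\hat{\boldsymbol\mu}_{\mathrm{ols}}=O_P(1)$ and that $\hat\sigma^2_{\nu=\infty}$ has a positive limit (both of which the paper's appendix supplies via Proposition \ref{prop:sigmasq2}). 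For the prior term, you replace the Stirling asymptotics of Lemma \ref{lemma:Gamma_pairdiff} with the telescoping identity $\log\Gamma(b^{(k)}+N^{(k)})-\log\Gamma(b^{(k)}+N^{(k)}-m^{(k)})=\sum_{i=1}^{m^{(k)}}\log(b^{(k)}+N^{(k)}-i)$, which is more elementary and gives the sharper $O(1/N)$ remainder; you also correctly flag that this step needs $N^{(k)}=O(N)$ for both documented and undocumented times, the same hypothesis the paper imposes. Your determinant analysis is essentially Lemma \ref{lemma:det} (diagonal Gram matrix scaled by $(1-\sum_k\hat\phi_k)^2$ plus $O_P(1)$ boundary terms), and you rightly identify the non-vanishing of $1-\sum_k\hat\phi_k$ for Yule--Walker fits as the load-bearing non-degeneracy condition; the paper invokes the same fact. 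In short: same decomposition, interchangeable lemmas, with your variance argument being the one substantive methodological difference.
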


A proof of Theorem \ref{prop:BMDL_MDL} is obtained by comparing the 
large sample performance of the corresponding terms in 
\eqref{eq:BMDL_univariate} and \eqref{eq:MDL_univariate} via order 
estimates derived in the Appendix. In the BMDL expression 
\eqref{eq:BMDL_univariate}, all but the last term arise from the mixture 
MDL. The term $(N-p)\log \left( \hat{\sigma}^2 \right)/2$ measures the 
model's goodness-of-fit. By Lemma \ref{lemma:sigmasq1} in the Appendix, 
$\hat{\sigma}^2 =\hat{\sigma}^2_{\nu = \infty} + O_P(1/N)$; hence, the 
difference between the first terms in \eqref{eq:BMDL_univariate} and 
\eqref{eq:MDL_univariate} obeys
\[
\frac{N-p}{2}\log \left( \hat{\sigma}^2 \right) - \frac{N-p}{2}\log 
\left( \hat{\sigma}^2_{\nu = \infty} \right) = O_P(1).
\]
In \eqref{eq:BMDL_univariate}, the second term is $O_P(1)$, while
the third term, by Lemma \ref{lemma:det} in the Appendix, satisfies
\[
\frac{1}{2}
\log\left( \left|\widehat{\mathbf{D}}' 
	\widehat{\mathbf{D}}+\frac{\mathbf{I}_{m}}{\nu}\right| \right)
= \frac{1}{2}\sum_{r=2}^{m+1}\log(N_r) + O_P\left(1\right),
\]
which interestingly suggests that the mixture MDL in 
\eqref{eq:BMDL_univariate} contains a built in penalty on 
$\boldsymbol\mu$ that performs similarly to the two-part MDL penalty on 
$\boldsymbol\mu$ in \eqref{eq:MDL_univariate}. The last term in 
\eqref{eq:BMDL_univariate} is the penalty on the changepoint 
configuration $\boldsymbol{\lambda}$. With or without metadata, Lemma 
\ref{lemma:Gamma_pairdiff} in the Appendix suggests that this term is 
asymptotically $m\log(N) + O_P(1)$, which only differs from the last 
term in \eqref{eq:MDL_univariate} by $O_P(1)$ plus a constant.

An implication of Theorem \ref{prop:BMDL_MDL} is that the model 
selection consistency results in Theorem \ref{thm:lambda_convergence} 
also hold for the automatic MDL \eqref{eq:MDL_univariate}, which gives 
alternate confirmation of the asymptotic results in 
\citet{Davis_etal_2006} and \citet{Davis_Yau_2013}. In addition, without 
metadata, the BMDL \eqref{eq:BMDL_univariate} and the automatic MDL 
\eqref{eq:MDL_univariate} perform similarly for large samples. Section 
\ref{sec:simulation} confirms this result via simulation examples, also 
demonstrating that when metadata is available and incorporated, the BMDL 
significantly increases changepoint detection power and precision
under finite samples.

%%%%%%%%%%%%%%%%%%%%%%%%%%%%%%%%%%%%%%%%%%%%%%%%
\section{Discussion}\label{sec:discussion} %% 1 page
%%%%%%%%%%%%%%%%%%%%%%%%%%%%%%%%%%%%%%%%%%%%%%%%

This paper developed a flexible MDL-based multiple changepoint detection 
approach to accommodate {\it a priori} information on changepoint times via 
prior distributional specifications. Motivated by climate homogenization 
problems, our Bayesian MDL (BMDL) method incorporates subjective 
knowledge such as metadata in mean shift detection for univariate 
autoregressive processes with seasonal means, and then extended these 
ideas to bivariate VAR settings while encouraging concurrent changes in 
the component series. Both theoretical and simulation studies show that 
without metadata, our BMDL performs similarly to the state-of-art 
automatic MDL method; with metadata, the BMDL's detection power 
significantly improves under finite samples. Our BMDL has several 
practical advantages, including simple parameter elicitation, asymptotic 
consistency, and efficient MCMC computation.

The approach can be extended to accommodate more flexible time series 
structures, including periodic autoregressions \citep{Hewa_etal_2017}, 
moving-averages, and multivariate data with more than two series.  The 
methods could also be tailored to categorical data.  For count data, the 
likelihood could be Poisson-based.  With a conjugate Gamma prior on 
means, the resulting marginal likelihoods will again have closed forms. 
There is no technical difficulty in allowing a background linear trend, 
or even piecewise linear trends. This said, linear trends can be 
mistaken for multiple mean shifts should trends be present and ignored 
in the analysis \citep{Li_Lund_2015}. In addition, with straightforward 
modification, the BMDL can handle changes in variances or autocovariances.

Non-MCMC stochastic search methods could also be used. Genetic 
algorithms, popular in multiple changepoint MDL analyses, are also 
capable of minimizing the BMDL. Pre-screening methods such as 
\citet{Chan_etal_2014, Yau_Zhao_2016} can speed up model search 
algorithms. In simple settings when no global parameters exist (i.e., 
independent observations, no seasonal cycle, error variance known), 
dynamic programming based techniques such as the PELT 
\citep{Killick_etal_2012} can further accelerate computational speed.

%%%%%%%%%%%%%%%%%%%%%%%%%%%%%%%%%%%%%%%%%%%%%%%%
\subsection*{Supplementary Materials}

{\bf Appendix}: includes more theoretical results and theorem proofs in 
Section \ref{sec:appendix_proofs}, and additional simulation and data 
examples in Section \ref{sec:appendix_examples}.

%\noindent {\bf R package {\tt BayesMDL}}:  implements
%both univariate and bivariate BMDL methods.

\subsection*{Acknowledgement}

The authors thank Matthew Menne, Jared Rennie, Claude Williams Jr., and 
Bin Yu for helpful discussions. The climate application was posed at 
SAMSI's 2014 climate homogeneity summit in Boulder, Colorado.  Robert 
Lund and Anuradha Hewaarachchi thank NSF Grant DMS 1407480 for partial 
support. Clemson University is acknowledged for generous allotment of 
computation time on its Palmetto cluster.  The comments of two referees 
substantially improved this manuscript.

%%%%%%%%%%%%%%%%%%%%%%%%%%%%%%%%%%%%%%%%%%%%%%%%
{\small
\setstretch{0.85}
\bibliographystyle{asa}
\bibliography{ChangepointTxTn}

\begin{thebibliography}{58}
\newcommand{\enquote}[1]{``#1''}
\expandafter\ifx\csname natexlab\endcsname\relax\def\natexlab#1{#1}\fi

\bibitem[{Aue and Horv\'{a}th(2013)}]{Aue_Horvath_2013}
Aue, A. and Horv\'{a}th, L. (2013), \enquote{Structural Breaks in Time Series,}
  \textit{Journal of Time Series Analysis}, 34, 1--16.

\bibitem[{Bardwell and Fearnhead(2017)}]{Bardwell_Fearnhead_2017}
Bardwell, L. and Fearnhead, P. (2017), \enquote{Bayesian Detection of Abnormal
  Segments in Multiple Time Series,} \textit{Bayesian Analysis}.

\bibitem[{Barry and Hartigan(1993)}]{Barry_Hartigan_1993}
Barry, D. and Hartigan, J.~A. (1993), \enquote{A {B}ayesian Analysis for Change
  Point Problems,} \textit{Journal of the American Statistical Association},
  88, 309--319.

\bibitem[{Billingsley(1995)}]{Billingsley_1995}
Billingsley, P. (1995), \textit{Probability and Measure}, John Wiley \& Sons,
  3rd ed.

\bibitem[{Brockwell and Davis(1991)}]{Brockwell_Davis_1991}
Brockwell, P.~J. and Davis, R.~A. (1991), \textit{Time Series: Theory and
  Methods}, Springer-Verlag, 2nd ed.

\bibitem[{Carlin and Louis(2000)}]{Carlin_Louis_2000}
Carlin, B.~P. and Louis, T.~A. (2000), \textit{Bayes and Empirical {B}ayes
  Methods for Data Analysis}, Chapman \& Hall/CRC Boca Raton.

\bibitem[{Caussinus and Mestre(2004)}]{Caussinus_Mestre_2004}
Caussinus, H. and Mestre, O. (2004), \enquote{Detection and Correction of
  Artificial Shifts in Climate Series,} \textit{Journal of the Royal
  Statistical Society: Series C (Applied Statistics)}, 53, 405--425.

\bibitem[{Chan et~al.(2014)Chan, Yau, and Zhang}]{Chan_etal_2014}
Chan, N.~H., Yau, C.~Y., and Zhang, R.-M. (2014), \enquote{Group {LASSO} for
  Structural Break Time Series,} \textit{Journal of the American Statistical
  Association}, 109, 590--599.

\bibitem[{Chernoff and Zacks(1964)}]{Chernoff_Zacks_1964}
Chernoff, H. and Zacks, S. (1964), \enquote{Estimating the Current Mean of a
  Normal Distribution which is Subjected to Changes in Time,} \textit{The
  Annals of Mathematical Statistics}, 35, 999--1018.

\bibitem[{Chib(1998)}]{Chib_1998}
Chib, S. (1998), \enquote{Estimation and Comparison of Multiple Change-point
  Models,} \textit{Journal of Econometrics}, 86, 221--241.

\bibitem[{Cho and Fryzlewicz(2015)}]{Cho_Fryzlewicz_2015}
Cho, H. and Fryzlewicz, P. (2015), \enquote{Multiple-change-point Detection for
  High Dimensional Time Series via Sparsified Binary Segamentation,}
  \textit{Journal of the Royal Statistical Society: Series B (Statistical
  Methodology)}, 77, 475--507.

\bibitem[{Christensen(2002)}]{Christensen_2002}
Christensen, R. (2002), \textit{Plane Answers to Complex Questions: The Theory
  of Linear Models}, Springer.

\bibitem[{Clyde and George(2004)}]{Clyde_George_2004}
Clyde, M.~A. and George, E.~I. (2004), \enquote{Model Uncertainty,}
  \textit{Statistical Science}, 19, 81--94.

\bibitem[{Davis et~al.(2006)Davis, Lee, and Rodriguez-Yam}]{Davis_etal_2006}
Davis, R.~A., Lee, T. C.~M., and Rodriguez-Yam, G.~A. (2006),
  \enquote{Structural Break Estimation for Nonstationary Time Series Models,}
  \textit{Journal of the American Statistical Association}, 101, 223--239.

\bibitem[{Davis et~al.(2008)Davis, Lee, and Rodriguez-Yam}]{Davis_etal_2008}
--- (2008), \enquote{Break Detection for a Class of Nonlinear Time Series
  Models,} \textit{Journal of Time Series Analysis}, 29, 834--867.

\bibitem[{Davis and Yau(2013)}]{Davis_Yau_2013}
Davis, R.~A. and Yau, C.~Y. (2013), \enquote{Consistency of Minimum Description
  Length Model Selection for Piecewise Stationary Time Series Models,}
  \textit{Electronic Journal of Statistics}, 7, 381--411.

\bibitem[{Du et~al.(2016)Du, Kao, and Kou}]{Du_etal_2016}
Du, C., Kao, C.-L.~M., and Kou, S.~C. (2016), \enquote{Stepwise Signal
  Extraction via Marginal Likelihood,} \textit{Journal of the American
  Statistical Association}, 111, 314--330.

\bibitem[{Fearnhead(2006)}]{Fearnhead_2006}
Fearnhead, P. (2006), \enquote{Exact and Efficient {B}ayesian Inference for
  Multiple Changepoint Problems,} \textit{Statistical Computing}, 16, 203--213.

\bibitem[{Fearnhead and Vasileiou(2009)}]{Fearnhead_Vasileiou_2009}
Fearnhead, P. and Vasileiou, D. (2009), \enquote{Bayesian Analysis of
  Isochores,} \textit{Journal of the American Statistical Association}, 104,
  132--141.

\bibitem[{Fryzlewicz(2014)}]{Fryzlewicz_2014}
Fryzlewicz, P. (2014), \enquote{Wild Binary Segmentation for Multiple
  Change-Point Detection,} \textit{Annals of Statistics}, 42, 2243--2281.

\bibitem[{Fryzlewicz and Subba~Rao(2014)}]{Fryzlewicz_SubbaRao_2014}
Fryzlewicz, P. and Subba~Rao, S. (2014), \enquote{Multiple-Change-Point
  Detection for Auto-Regressive Conditional Heteroscedastic Processes,}
  \textit{Journal of the Royal Statistical Society: Series B (Statistical
  Methodology)}, 76, 903--924.

\bibitem[{Garc{\'\i}a-Donato and
  Mart{\'\i}nez-Beneito(2013)}]{Garcia-Donato_Martinez-Beneito_2013}
Garc{\'\i}a-Donato, G. and Mart{\'\i}nez-Beneito, M.~A. (2013), \enquote{On
  Sampling Strategies in {B}ayesian Variable Selection Problems with Large
  Model Spaces,} \textit{Journal of the American Statistical Association}, 108,
  340--352.

\bibitem[{George and McCulloch(1997)}]{George_McCulloch_1997}
George, E.~I. and McCulloch, R.~E. (1997), \enquote{Approaches for {B}ayesian
  Variable Selection,} \textit{Statistics Sinica}, 7, 339--373.

\bibitem[{Giordani and Kohn(2008)}]{Giordani_Kohn_2008}
Giordani, P. and Kohn, R. (2008), \enquote{Efficient {B}ayesian Inference for
  Multiple Change-Point and Mixture Innovation Models,} \textit{Journal of
  Business and Economic Statistics}, 26, 66--77.

\bibitem[{Gir{\'o}n et~al.(2007)Gir{\'o}n, Moreno, and
  Casella}]{Giron_etal_2007}
Gir{\'o}n, J., Moreno, E., and Casella, G. (2007), \enquote{Objective
  {B}ayesian Analysis of Multiple Changepoints for Linear Models,}
  \textit{Bayesian Statistics 8}.

\bibitem[{Green(1995)}]{Green_1995}
Green, Peter, J. (1995), \enquote{Reversible Jump {M}arkov Chain {M}onte
  {C}arlo Computation and {B}ayesian Model Determination,} \textit{Biometrika},
  82, 711--732.

\bibitem[{Gr{\"u}nwald(2007)}]{Grunwald_2007}
Gr{\"u}nwald, P.~D. (2007), \textit{The Minimum Description Length Principle},
  The MIT Press.

\bibitem[{Hannart and Naveau(2012)}]{Hannart_Naveau_2012}
Hannart, A. and Naveau, P. (2012), \enquote{An Improved {B}ayesian Information
  Criterion for Multiple Change-point Models,} \textit{Technometrics}, 54,
  256--268.

\bibitem[{Hansen and Yu(2001)}]{Hansen_Yu_2001}
Hansen, M.~H. and Yu, B. (2001), \enquote{Model Selection and the Principle of
  Minimum Description Length,} \textit{Journal of the American Statistical
  Association}, 96, 746--774.

\bibitem[{Harville(2008)}]{Harville_2008}
Harville, D.~A. (2008), \textit{Matrix Algebra From a Statistician's
  Perspective}, Springer-Verlag.

\bibitem[{Hewaarachchi et~al.(2017)Hewaarachchi, Li, Lund, and
  Rennie}]{Hewa_etal_2017}
Hewaarachchi, A., Li, Y., Lund, R., and Rennie, J. (2017),
  \enquote{Homogenization of Daily Temperature Data,} \textit{Journal of
  Climate}, 30, 985--999.

\bibitem[{Kass and Raftery(1995)}]{Kass_Raftery_1995}
Kass, R.~E. and Raftery, A.~E. (1995), \enquote{Bayes Factors,} \textit{Journal
  of the American Statistical Association}, 90, 773--795.

\bibitem[{Killick et~al.(2012)Killick, Fearnhead, and
  Eckley}]{Killick_etal_2012}
Killick, R., Fearnhead, P., and Eckley, I.~A. (2012), \enquote{Optimal
  Detection of Changepoints With a Linear Computational Cost,} \textit{Journal
  of the American Statistical Association}, 107, 1590--1598.

\bibitem[{Kirch et~al.(2015)Kirch, Muhsal, and Ombao}]{Kirch_etal_2015}
Kirch, C., Muhsal, B., and Ombao, H. (2015), \enquote{Detection of Changes in
  Multivariate Time Series with Application to {EEG} Data,} \textit{Journal of
  the American Statistical Association}, 110, 1197--1216.

\bibitem[{Li and Zhang(2010)}]{Li_Zhang_2010}
Li, F. and Zhang, N.~R. (2010), \enquote{Bayesian Variable Selection in
  Structured High-Dimensional Covariate Spaces with Applications in Genomics,}
  \textit{Journal of the American Statistical Association}, 105, 1202--1214.

\bibitem[{Li and Lund(2012)}]{Li_Lund_2012}
Li, S. and Lund, R. (2012), \enquote{Multiple Changepoint Detection via Genetic
  Algorithms,} \textit{Journal of Climate}, 25, 674--686.

\bibitem[{Li and Lund(2015)}]{Li_Lund_2015}
Li, Y. and Lund, R. (2015), \enquote{Multiple Changepoint Detection Using
  Metadata,} \textit{Journal of Climate}, 28, 4199--4216.

\bibitem[{Liu et~al.(2016)Liu, Shao, Lund, and Woody}]{Liu_etal_2016}
Liu, G., Shao, Q., Lund, R., and Woody, J. (2016), \enquote{Testing for
  Seasonal Means in Time Series Data,} \textit{Environmetrics}, 27, 198--211.

\bibitem[{Lu et~al.(2010)Lu, Lund, and Lee}]{Lu_etal_2010}
Lu, Q., Lund, R., and Lee, T. C.~M. (2010), \enquote{An {MDL} Approach to the
  Climate Segmentation Problem,} \textit{The Annals of Applied Statistics}, 4,
  299--319.

\bibitem[{Lund et~al.(2007)Lund, Wang, Reeves, Lu, Gallagher, and
  Feng}]{Lund_etal_2007}
Lund, R., Wang, X., Reeves, J., Lu, Q., Gallagher, C., and Feng, Y. (2007),
  \enquote{Changepoint Detection in Periodic and Autocorrelated Time Series,}
  \textit{Journal of Climate}, 20, 5178--5190.

\bibitem[{Ma and Yau(2016)}]{Ma_Yau_2016}
Ma, T.~F. and Yau, C.~Y. (2016), \enquote{A Pairwise Likelihood-based Approach
  for Changepoint Detection in Multivariate Time Series Models,}
  \textit{Biometrika}, 103, 409--421.

\bibitem[{Menne and Williams~Jr(2005)}]{Menne_WilliamsJr_2005}
Menne, M.~J. and Williams~Jr, C.~N. (2005), \enquote{Detection of Undocumented
  Changepoints Using Multiple Test Statistics and Composite Reference Series,}
  \textit{Journal of Climate}, 18, 4271--4286.

\bibitem[{Mitchell(1953)}]{Mitchell_1953}
Mitchell, J.~M. (1953), \enquote{On the Causes of Instrumentally Observed
  Secular Temperature Trends,} \textit{Journal of Meteorology}, 10, 244--261.

\bibitem[{Niu et~al.(2016)Niu, Hao, and Zhang}]{Niu_etal_2016}
Niu, Y.~S., Hao, N., and Zhang, H. (2016), \enquote{Multiple Change-Point
  Detection: A Selective Overview,} \textit{Statistical Science}, 31, 611--623.

\bibitem[{Pein et~al.(2017)Pein, Sieling, and Munk}]{Pein_etal_2017}
Pein, F., Sieling, H., and Munk, A. (2017), \enquote{Heterogeneous Change Point
  Inference,} \textit{Journal of the Royal Statistical Society: Series B
  (Statistical Methodology)}, 79, 1207--1227.

\bibitem[{Preuss et~al.(2015)Preuss, Puchstein, and Dette}]{Preuss_etal_2015}
Preuss, P., Puchstein, R., and Dette, H. (2015), \enquote{Detection of Multiple
  Structural Breaks in Multivariate Time Series,} \textit{Journal of the
  American Statistical Association}, 110, 654--668.

\bibitem[{Risanen(1989)}]{Risanen_1989}
Risanen, J. (1989), \textit{Stochastic Complexity in Statistical Inquiry}, vol.
  511, World Scientific, Singapore.

\bibitem[{Schwarz(1978)}]{Schwarz_1978}
Schwarz, G. (1978), \enquote{Estimating the Dimension of a Model,} \textit{The
  Annals of Statistics}, 6, 461--464.

\bibitem[{Scott and Berger(2010)}]{Scott_Berger_2010}
Scott, J. and Berger, J. (2010), \enquote{{B}ayes and Empirical-{B}ayes
  Multiplicity Adjustment in the Variable-selection Problem,} \textit{The
  Annals of Statistics}, 38, 2587--2619.

\bibitem[{Shannon(1948)}]{Shannon_1948}
Shannon, C.~E. (1948), \enquote{A Mathematical Theory of Communication,}
  \textit{Bell System Technical Journal}, 27, 623.

\bibitem[{Shao and Zhang(2010)}]{Shao_Zhang_2010}
Shao, X. and Zhang, X. (2010), \enquote{Testing for Change Points in Time
  Series,} \textit{Journal of the American Statistical Association}, 105,
  1228--1240.

\bibitem[{Wilks(2011)}]{Wilks_2011}
Wilks, D.~S. (2011), \textit{Statistical Methods in the Atmospheric Sciences},
  Academic Press.

\bibitem[{Yao(1984)}]{Yao_1984}
Yao, Y.-C. (1984), \enquote{Estimation of a Noisy Discrete-Time Step Function:
  {B}ayes and Empirical {B}ayes Approaches,} \textit{The Annals of Statistics},
  12, 1434--1447.

\bibitem[{Yao(1988)}]{Yao_1988}
--- (1988), \enquote{Estimating the Number of Change-Points via {S}chwarz'
  Criterion,} \textit{Statistics \& Probability Letters}, 6, 181--189.

\bibitem[{Yau et~al.(2015)Yau, Tang, and Lee}]{Yau_etal_2015}
Yau, C.~Y., Tang, C.~M., and Lee, T. C.~M. (2015), \enquote{Estimation of
  Multiple-Regime Threshold Autoregressive Models with Structural Breaks,}
  \textit{Journal of the American Statistical Association}, 110, 1175--1186.

\bibitem[{Yau and Zhao(2016)}]{Yau_Zhao_2016}
Yau, C.~Y. and Zhao, Z. (2016), \enquote{Inference for Multiple Change Points
  in Time Series via Likelihood Ratio Scan Statistics,} \textit{Journal of the
  Royal Statistical Society: Series B (Statistical Methodology)}, 78, 895--916.

\bibitem[{Zhang and Siegmund(2007)}]{Zhang_Siegmund_2007}
Zhang, N.~R. and Siegmund, D.~O. (2007), \enquote{A Modified {B}ayes
  Information Criterion with Applications to the Analysis of Comparative
  Genomic Hybridization Data,} \textit{Biometrics}, 63, 22--32.

\bibitem[{Zhang and Siegmund(2012)}]{Zhang_Siegmund_2012}
--- (2012), \enquote{Model Selection for High-Dimensional, Multi-Sequence
  Change-Point Problems,} \textit{Statistica Sinica}, 1507--1538.

\end{thebibliography}
}

%%%%%%%%%%%%%%%%%%%%%%%%%%%%%%%%%%%%%%%%%%%%%%%%
%%%%%%%%%%%%%%%%%%%%%%%%%%%%%%%%%%%%%%%%%%%%%%%%
\newpage 

\begin{center}
{\Large
%\emph{
Appendix for ``Multiple Changepoint Detection with Partial Information on Changepoint Times''
%}
}
\end{center}

\renewcommand{\thesection}{\Alph{section}}
\setcounter{section}{0}
%%%%%%%%%%%%%%%%%%%%%%%%%%%%%%%%%%%%%%%%%%%%%%%%
%%%%%%%%%%%%%%%%%%%%%%%%%%%%%%%%%%%%%%%%%%%%%%%%
\section{Theoretical Results and Proofs}\label{sec:appendix_proofs}

In this Appendix, the asymptotic limits of the Yule-Walker estimator 
$\hat{\boldsymbol\phi}$ and white noise variance $\hat{\sigma}^2$ under 
a given changepoint model $\boldsymbol\lambda$ are investigated in 
Sections \ref{subsection:YW_phi} and \ref{subsection:YW_sigmasq}, 
respectively.  In Section \ref{sec:asym_pairwise_BMDL}, the BMDL 
difference between the true model $\boldsymbol\lambda^0$ and other 
models is studied, showing that $\boldsymbol\lambda^0$ achieves the 
smallest BMDL in the limit. Last, the proofs of Theorem 
\ref{thm:lambda_convergence} and Theorem \ref{thm:parameter_convergence} 
are given in Sections \ref{PROOFthm:lambda_convergence} and 
\ref{PROOFthm:parameter_convergence}, respectively.

%%%%%%%%%%%%%%%%%%%%%%%%%%%%%%%%%%%%%%%%%%%%%%%% 

\subsection{Asymptotic behavior of the Yule-Walker estimator of the 
autoregression coefficients $\hat{\boldsymbol\phi}$} 
\label{subsection:YW_phi} For a sample size $N$, the observations obey 
the true changepoint model $\boldsymbol\lambda^0$ in 
\eqref{eq:likelihood3}: 
\[ 
\mathbf{X}= \mathbf{A} \mathbf{s} + 
\mathbf{D}^0\boldsymbol\mu^0 + \boldsymbol\epsilon. 
\] 
Here, $\boldsymbol\epsilon$ is a zero-mean causal AR$(p)$ series. When 
there is no ambiguity, we simplify the notations $\boldsymbol\mu^0, 
\mathbf{s}^0, (\sigma^{2})^0, \boldsymbol\phi^0$ to $\boldsymbol\mu, 
\mathbf{s}, \sigma^2, \boldsymbol\phi$, respectively, and omit 
subscripts such as $1:N$ on the data vector and other quantities.

For any relative changepoint model $\boldsymbol\lambda$, suppose that 
$\boldsymbol\eta$ is the corresponding changepoint configuration under 
the sample size $N$. From \eqref{eq:Y}, the ordinary least squares 
residual vector is
\begin{equation}
\label{eq:Y2}
\boldsymbol\epsilon^{\text{ols}}  =  
	(\mathbf{I}_N - \mathcal{P}_{[\mathbf{A}|\mathbf{D}]})\mathbf{X} 
	       = (\mathbf{I}_N - \mathcal{P}_{[\mathbf{A}|\mathbf{D}]})
	(\mathbf{A} \mathbf{s} + 
	\mathbf{D}^0\boldsymbol\mu+ \boldsymbol\epsilon)
		=(\mathbf{I}_N - \mathcal{P}_{[\mathbf{A}|\mathbf{D}]})
	(\mathbf{D}^0\boldsymbol\mu + \boldsymbol\epsilon). 
\end{equation} 
Here, $[\mathbf{A}|\mathbf{D}]$ is the block matrix formed by 
$\mathbf{A}$ and $\mathbf{D}$and $\mathcal{P}_{\mathbf{A}}$ is the 
orthogonal projection onto the columns of the matrix $\mathbf{A}$. The 
regime indicator matrix $\mathbf{D}$ depends on $\boldsymbol\lambda$ and 
may not equal $\mathbf{D}^0$.

%%%%%%%%%%%%%%%%%%%%%%%%%%%%%%%%%%%%%%%%%%%%%%%%%
\begin{lemma}\label{lemma:Y}
For each relative changepoint configuration $\boldsymbol\lambda \in 
\boldsymbol\Lambda$ and $t \in \{ 1, \ldots, N \}$, when $N$ is large, 
each entry of $\boldsymbol\epsilon^{\text{ols}}$ can be expressed as
\begin{equation}
\label{eq:delta_t_W_t}
\epsilon_t^{\text{ols}}=\delta_t + W_t, \quad \text{where} \quad
\delta_t = \mu_{r^0(t)} - \bar{\mu}_{r(t)}
\quad \mbox{and} \quad
W_t = \epsilon_t - \bar{\epsilon}_{r(t)} - \bar{\epsilon}_{v(t)} + \bar{\epsilon}.
\end{equation}
Here, the functions $r^0(t)$ and $r(t)$ are the regimes that time 
$t$ is in under the models $\boldsymbol\lambda^0$ and 
$\boldsymbol\lambda$, respectively. In regime $\ell$ of the changepoint 
configuration $\boldsymbol\lambda$, $\bar{\mu}_{\ell} = N_{\ell}^{-1} 
\sum_{t \in {\cal R}_\ell} \mu_t$ is the average of the true mean 
parameters, $N_\ell$ is the number of time points in this regime, and 
${\cal R}_\ell$ is the set of all time points in this regime.  Likewise, 
$\bar{\epsilon}_\ell$ is the average of errors in regime $\ell$, 
$\bar{\epsilon}_{v}$ is the average of errors during season $v$, and 
$\bar{\epsilon}$ is the average of all errors. \end{lemma}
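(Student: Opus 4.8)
The plan is to recognize $\boldsymbol\epsilon^{\text{ols}}$ as the residual of a two-way additive (season-by-regime) least-squares fit and to show that, under infill asymptotics, this fit reduces to the familiar doubly-centered residual up to a uniformly negligible remainder. First I would simplify \eqref{eq:Y2}: since every column of $\mathbf{A}$, and hence $\mathbf{A}\mathbf{s}$, lies in the column space $\mathcal{C}$ of $[\mathbf{A}|\mathbf{D}]$, it is annihilated by $\mathbf{I}_N - \mathcal{P}_{[\mathbf{A}|\mathbf{D}]}$, so it suffices to project $\mathbf{y} := \mathbf{D}^0\boldsymbol\mu + \boldsymbol\epsilon$, whose $t$th entry I write as $y_t = \mu_t + \epsilon_t$ with $\mu_t = \mu_{r^0(t)}$. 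Writing $\mathbf{a}_v$ for the indicator column of season $v$, we have $\mathbf{1} = \sum_v \mathbf{a}_v \in \mathcal{C}$, so $\mathcal{C}$ is exactly the set of additive functions $t \mapsto \alpha_{v(t)} + \beta_{r(t)}$, and $\mathcal{P}_{[\mathbf{A}|\mathbf{D}]}\mathbf{y}$ is the best additive season-plus-regime fit of $\mathbf{y}$.

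Rather than solve the normal equations, I would \emph{verify} the candidate answer. Set $\hat y_t = \bar y_{v(t)} + \bar y_{r(t)} - \bar y$ and $\hat e_t = y_t - \hat y_t$, where $\bar y_v, \bar y_r, \bar y$ are the season, regime, and grand averages of $\mathbf{y}$. Then $\hat y \in \mathcal{C}$ exactly (it is a season effect plus a regime effect), so $(\mathbf{I}_N-\mathcal{P}_{[\mathbf{A}|\mathbf{D}]})\mathbf{y} = \hat e - \mathcal{P}_{[\mathbf{A}|\mathbf{D}]}\hat e$. A direct computation of $[\mathbf{A}|\mathbf{D}]'\hat e$ shows that the season- and regime-sums of $\hat e$ would vanish identically if the layout were balanced, i.e. if the season-by-regime counts satisfied $n_{v,r} = n_v N_r / N$; in general each component equals $-\sum_r \Delta_{v,r}\bar y_r$ or $-\sum_v \Delta_{v,r}\bar y_v$ with $\Delta_{v,r} = n_{v,r} - n_v N_r/N$, so $[\mathbf{A}|\mathbf{D}]'\hat e$ has entries of order $O_P(\max_{v,r}|\Delta_{v,r}|)$.

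The two quantitative inputs are then: (i) a counting bound $\max_{v,r}|\Delta_{v,r}| = O(1)$, which holds because a regime of length $N_r$ contains $N_r/T + O(1)$ points of each season as the seasons cycle with fixed period $T$ (the $O(1)$ is a boundary effect); and (ii) an operator bound $\|M^{-1}\|_{\mathrm{op}} = O(1/N)$ for the Gram matrix $M = [\mathbf{A}|\mathbf{D}]'[\mathbf{A}|\mathbf{D}]$, which follows because under infill asymptotics, with the minimal regime gap on $\boldsymbol\Lambda$ bounded below by $d$, $N^{-1}M$ converges to a positive-definite limit. Combining these, $\mathcal{P}_{[\mathbf{A}|\mathbf{D}]}\hat e = [\mathbf{A}|\mathbf{D}]\,M^{-1}\,([\mathbf{A}|\mathbf{D}]'\hat e)$ has every entry $O_P(1/N)$ uniformly in $t$, because each row of $[\mathbf{A}|\mathbf{D}]$ has at most two nonzero entries; this gives $\epsilon_t^{\text{ols}} = \hat e_t + O_P(1/N)$.

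Finally I would split $\hat e_t$ along $y_t = \mu_t + \epsilon_t$ into
\[
\hat e_t = \bigl(\mu_t - \bar\mu_{v(t)} - \bar\mu_{r(t)} + \bar\mu\bigr) + \bigl(\epsilon_t - \bar\epsilon_{v(t)} - \bar\epsilon_{r(t)} + \bar\epsilon\bigr).
\]
The second bracket is exactly $W_t$. For the first, the key observation is that $\mu_t$ depends only on the true regime $r^0(t)$ and not on the season, so the same counting fact as in (i) gives $\bar\mu_v = \bar\mu + O(1/N)$ for every season $v$; hence $-\bar\mu_{v(t)} + \bar\mu = O(1/N)$, and the first bracket collapses to $\mu_t - \bar\mu_{r(t)} = \delta_t$ up to $O(1/N)$. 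Collecting remainders yields $\epsilon_t^{\text{ols}} = \delta_t + W_t + O_P(1/N)$ uniformly in $t$, which is the asserted decomposition for large $N$. I expect the one genuinely delicate step to be (ii) together with the uniform-in-$t$ control of $\mathcal{P}_{[\mathbf{A}|\mathbf{D}]}\hat e$; everything else is the balanced two-way ANOVA identity plus elementary boundary counting.
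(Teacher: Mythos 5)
Your proof is correct and arrives at the same decomposition, but by a genuinely different route than the paper. The paper computes the residual projector $\mathbf{I}_N - \mathcal{P}_{[\mathbf{A}|\mathbf{D}]}$ in closed form: it splits $\mathcal{P}_{[\mathbf{A}|\mathbf{D}]} = \mathcal{P}_{(\mathbf{I}_N-\mathcal{P}_{\mathbf{D}})\mathbf{A}} + \mathcal{P}_{\mathbf{D}}$, explicitly inverts $\mathbf{A}'(\mathbf{I}_N-\mathcal{P}_{\mathbf{D}})\mathbf{A} = NT^{-2}\{T\mathbf{I}_T - (1-\lambda_1)\mathbf{J}_T\}$, and assembles the projector as a sum of block-averaging matrices --- all under the explicit simplification that the season-by-regime counts are exactly balanced ($k(v,\ell)/N_\ell = 1/T$, ``ignoring edge effects''), after which applying the projector to $\boldsymbol\epsilon$ and $\mathbf{D}^0\boldsymbol\mu$ yields $W_t$ and $\delta_t$ as identities. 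You instead guess the doubly-centered fit $\hat y_t = \bar y_{v(t)} + \bar y_{r(t)} - \bar y$, verify it lies in the column space, and bound the residual's failure to be orthogonal via the imbalance $\Delta_{v,r} = O(1)$ together with $\|M^{-1}\|_{\mathrm{op}} = O(1/N)$ (which does require the minimal-gap condition on $\boldsymbol\Lambda$ so that $N^{-1}M$ has a uniformly nonsingular limit --- your identification of this as the delicate step is right). What your approach buys is a rigorous, uniform-in-$t$ quantification of the edge effects the paper waves away: you get $\epsilon_t^{\text{ols}} = \delta_t + W_t + O_P(1/N)$ rather than an identity that holds only under an idealized balanced layout, and the $O_P(1/N)$ per-entry remainder is exactly what is needed for the downstream autocovariance computations in Lemma 2, where it is absorbed into the $O_P(1/\sqrt{N})$ error. (Your observation that $\bar\mu_{v} = \bar\mu + O(1/N)$, which collapses the four-term $\mu$-bracket to $\delta_t = \mu_{r^0(t)} - \bar\mu_{r(t)}$, is also the implicit step the paper uses without comment.) What the paper's approach buys is the explicit matrix form of $\mathbf{I}_N - \mathcal{P}_{[\mathbf{A}|\mathbf{D}]}$, which it does not actually reuse later, so nothing is lost by your argument.
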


\begin{proof} Because of \eqref{eq:Y2}, our main objective is to study 
the projection residual $\mathbf{I}_N - 
\mathcal{P}_{[\mathbf{A}|\mathbf{D}]}$ under large $N$.  Since the two 
column spaces spanned by $(\mathbf{I}_N - \mathcal{P}_{\mathbf{D}}) 
\mathbf{A}$ and $\mathbf{D}$ are perpendicular, Theorem B.45 in 
\citet[pp.\ 411]{Christensen_2002} gives $\mathcal{P}_{[(\mathbf{I}_N - 
\mathcal{P}_{\mathbf{D}}) \mathbf{A}|\mathbf{D}]} = 
\mathcal{P}_{(\mathbf{I}_N - \mathcal{P}_{\mathbf{D}}) \mathbf{A}} + 
\mathcal{P}_{\mathbf{D}}$.  Projection properties give
\begin{equation}
\label{eq:Q_AD}
\mathbf{I}_N - \mathcal{P}_{[\mathbf{A}|\mathbf{D}]} 
= \mathbf{I}_N - \mathcal{P}_{[(\mathbf{I}_N - \mathcal{P}_{\mathbf{D}})\mathbf{A}|\mathbf{D}]}
= \mathbf{I}_N - \mathcal{P}_{(\mathbf{I}_N - \mathcal{P}_{\mathbf{D}}) \mathbf{A}}
- \mathcal{P}_{\mathbf{D}}.
\end{equation}
The term $\mathcal{P}_{(\mathbf{I}_N - \mathcal{P}_{\mathbf{D}}) 
\mathbf{A}}$ can be expanded as
\begin{equation}
\label{eq:P_QD_A}
\mathcal{P}_{(\mathbf{I}_N - \mathcal{P}_{\mathbf{D}}) \mathbf{A}}
= 	(\mathbf{I}_N - \mathcal{P}_{\mathbf{D}}) \mathbf{A}
	\left\{\mathbf{A}' (\mathbf{I}_N - \mathcal{P}_{\mathbf{D}}) \mathbf{A}\right\}^{-1}
	\mathbf{A}' (\mathbf{I}_N - \mathcal{P}_{\mathbf{D}}).
\end{equation}
For any $n \in \mathbb{N}$, let $\mathbf{0}_n$ be the $n$-dimensional 
vector containing all zero entries, $\mathbf{1}_n$ be the 
$n$-dimensional vector whose entries are all unity, and $\mathbf{J}_n$ 
be the $n \times n$ matrix whose entries are all unity, i.e., 
$\mathbf{J}_n = \mathbf{1}_n \mathbf{1}_n'$.

For $v \in \{ 1, \ldots, T \}$, suppose there are $k(v,\ell)$ time 
points in regime $\ell$ that are also in season $v$. Since $N_\ell$ 
increases linearly with $N$, so does $k(v,\ell)$. Moreover, when $N$ is 
large, inside each regime, the seasonal counts $k(v,\ell)$ are equal 
except for edge effects, i.e., $k(v,\ell)/N_\ell \approx 1/T$ for all 
seasons $v$.  To avoid trite work, we will ignore these edge effects in 
the ensuing calculations.  Proceeding under this simplification, the 
$v$th column in $\mathbf{A}$, denoted by $\mathbf{A}_v$, under the 
projection $\mathcal{P}_{\mathbf{D}}$, becomes
\begin{equation}
\label{eq:PD_Av}
\mathcal{P}_{\mathbf{D}}\mathbf{A}_v
= \left(\mathbf{0}_{N_1}', \frac{k(v, 2)}{N_2}\mathbf{1}_{N_2}', \ldots,
\frac{k(v, m+1)}{N_{m+1}}\mathbf{1}_{N_{m+1}}' \right)'
= \left(\mathbf{0}_{N_1}', \frac{1}{T}\mathbf{1}_{N - N_1}' \right)'.
\end{equation}

We can now obtain an expression for $\mathbf{A}' (\mathbf{I}_N - 
\mathcal{P}_{\mathbf{D}}) \mathbf{A}$. To do this, note that for $u,w 
\in \{1, 2, \ldots, T\}$,
\[
[\mathbf{A}' (\mathbf{I}_N - \mathcal{P}_{\mathbf{D}}) \mathbf{A}]_{u,w}
= 	\mathbf{A}_u' \mathbf{A}_w 
	-(\mathcal{P}_{\mathbf{D}}\mathbf{A}_u)'
	(\mathcal{P}_{\mathbf{D}}\mathbf{A}_w)
=	 \begin{cases}
 	\frac{N}{T^2}(T - (1-\lambda_1)),	& \text{if } u = w,\\
 	-\frac{N}{T^2}(1-\lambda_1),		& \text{if } u \neq w,
	\end{cases}
\]
and it follows that $\mathbf{A}' (\mathbf{I}_N - 
\mathcal{P}_{\mathbf{D}}) \mathbf{A} = NT^{-2}\{T \mathbf{I}_T - (1 - 
\lambda_1)\mathbf{J}_T\}$. The inverse of this matrix can be verified as
\[
\left\{\mathbf{A}' (\mathbf{I}_N - \mathcal{P}_{\mathbf{D}}) \mathbf{A}\right\}^{-1}
= \frac{1}{N}\left(T \mathbf{I}_T + 
\frac{1-\lambda_1}{\lambda_1}\mathbf{J}_T\right).
\]
Plugging this inverse into \eqref{eq:P_QD_A} and denoting 
$\mathcal{Q}_{\mathbf{D}} = \mathbf{I}_N - \mathcal{P}_{\mathbf{D}}$ 
produce
\begin{align}
\label{eq:P_QD_A2}
\mathcal{P}_{( \mathbf{I}_N - \mathcal{P}_{\mathbf{D}}) \mathbf{A}}
&=	 \frac{1}{N} (\mathcal{Q}_{\mathbf{D}} \mathbf{A})
	\left(T \mathbf{I}_T + 
	\frac{1-\lambda_1}{\lambda_1}\mathbf{J}_T\right)
	(\mathcal{Q}_{\mathbf{D}} \mathbf{A})'\\ \nonumber
&=	\frac{T}{N} (\mathcal{Q}_{\mathbf{D}} \mathbf{A})
	(\mathcal{Q}_{\mathbf{D}} \mathbf{A})'
+       \frac{1-\lambda_1}{N\lambda_1} 
	(\mathcal{Q}_{\mathbf{D}} \mathbf{A} \mathbf{1}_T)
	(\mathcal{Q}_{\mathbf{D}} \mathbf{A} \mathbf{1}_T)'.
\end{align}
For simplicity, we assume that regime $\ell$ starts with season one, 
ends with season $T$, and contains $n_\ell$ full cycles. Using $n = N/T= 
\sum_{r=1}^{m+1} n_r$ and \eqref{eq:PD_Av} gives
\begin{equation*}
\mathcal{Q}_{\mathbf{D}} \mathbf{A} = \left(
	\begin{array}{c}
	\mathbf{1}_{n_1} \otimes \mathbf{I}_T \\ \hdashline
	\mathbf{1}_{n - n_1} \otimes \left(\mathbf{I}_T - \frac{1}{T}\mathbf{J}_T\right)  
	\end{array}
\right), \quad
\mathcal{Q}_{\mathbf{D}}\mathbf{A}\mathbf{1}_T
= \left(
	\begin{array}{c}
	\mathbf{1}_{N_1} \\ \hdashline
	\mathbf{0}_{N - N_1} 
	\end{array}
\right).
\end{equation*}
Hence, quadratic forms of these matrices are
\[
	(\mathcal{Q}_{\mathbf{D}} \mathbf{A})
	(\mathcal{Q}_{\mathbf{D}} \mathbf{A})'
=	\left(
	\begin{array}{c:c}
	\mathbf{J}_{n_1} \otimes \mathbf{I}_T
	& \mathbf{J}_{n_1\times (n - n_1)} 
	\otimes \left(\mathbf{I}_T - \frac{1}{T}\mathbf{J}_T\right) \\ \hdashline
\mathbf{J}_{(n - n_1) \times n_1} 
	\otimes \left(\mathbf{I}_T - \frac{1}{T}\mathbf{J}_T\right) 
	& \mathbf{J}_{n - n_1} 
	\otimes \left(\mathbf{I}_T - \frac{1}{T}\mathbf{J}_T\right) 
	\end{array}
\right),
\]
and 
\[
	(\mathcal{Q}_{\mathbf{D}} \mathbf{A} \mathbf{1}_T)
	(\mathcal{Q}_{\mathbf{D}} \mathbf{A} \mathbf{1}_T)'
 = \left(
	\begin{array}{c:c}
	\mathbf{J}_{N_1} 	&	\mathbf{0}\\ \hdashline
	\mathbf{0}	&	\mathbf{0} 
	\end{array}
\right).
\]
Plugging these into \eqref{eq:P_QD_A2} produces
\[
\mathcal{P}_{(\mathbf{I}_N - \mathcal{P}_{\mathbf{D}}) \mathbf{A}}
= \frac{1}{N_1} \left(
	\begin{array}{c:c}
	\mathbf{J}_{N_1} 	&	\mathbf{0}\\ \hdashline
	\mathbf{0}	&	\mathbf{0} 
	\end{array}
\right) + \frac{T}{N} \mathbf{J}_{n} \otimes \mathbf{I}_T
-  \frac{1}{N}\mathbf{J}_{N}.
\]
Since $\mathcal{P}_{\mathbf{D}}$ is block-diagonal of form
\[
\mathcal{P}_{\mathbf{D}} = \text{diag}\left(\mathbf{0}_{N_1 \times N_1},
\frac{\mathbf{J}_{N_2}}{N_2}, \ldots, \frac{\mathbf{J}_{N_{m+1}}}{N_{m+1}}\right),
\] 
we have
\[
\mathbf{I}_N - \mathcal{P}_{[\mathbf{A}|\mathbf{D}]} 
= \mathbf{I}_N - \text{diag}\left(\frac{\mathbf{J}_{N_1}}{N_1}, 
	\frac{\mathbf{J}_{N_2}}{N_2}, \ldots, \frac{\mathbf{J}_{N_{m+1}}}{N_{m+1}}\right) 
	- \frac{T}{N}\mathbf{J}_{n} \otimes \mathbf{I}_T
	+  \frac{1}{N}\mathbf{J}_{N}.
\]
Therefore, for $t \in \{ 1, 2, \ldots, N \}$, the $t$th entries of the vectors in \eqref{eq:Y2} are
\[
W_t = [(\mathbf{I}_N - \mathcal{P}_{[\mathbf{A}|\mathbf{D}]})\boldsymbol\epsilon]_t
	 = \epsilon_t - \bar{\epsilon}_{r(t)}
	- \bar{\epsilon}_{v(t)} + \bar{\epsilon},
\]
and
\begin{equation}
\delta_t = [(\mathbf{I}_N - \mathcal{P}_{[\mathbf{A}|\mathbf{D}]})\mathbf{D}^0\boldsymbol\mu ]_t
	 = \mu_{r^0(t)} - \bar{\mu}_{r(t)}
\label{Lund!}
\end{equation}
\end{proof}
%%%%%%%%%%%%%%%%%%%%%%%%%%%%%%%%%%%%%%%%%%%%%%%%

For any changepoint configuration $\boldsymbol\lambda \in 
\boldsymbol\Lambda$, as $N \rightarrow \infty$, $N^{-1}\sum_{t=h+1}^N 
\delta_t \delta_{t-h}$ converges to a constant that does not depend on 
the lag $h \in \{ 0, 1, \ldots, p \}$. This is because for any lag $h$, 
$\delta_t = \delta_{t-h}$ for all $t \in \{ 1, \ldots, N \}$, except for 
at most $(m + m^0)h \leq (m + m^0)p$ times near the changepoints in 
$\boldsymbol\lambda$ and $\boldsymbol\lambda^0$. Hence, as $N 
\rightarrow \infty$, $N^{-1}\sum_{t=h+1}^N\delta_t \delta_{t-h}$ 
converges to its limit at rate $O\left(1/N\right)$. We denote this limit 
as
\begin{equation}
\label{eq:delta_sq_average}
\delta^2 \stackrel{\text{def}}{=}
\lim_{N\rightarrow \infty} \frac{1}{N}\sum_{t=1}^N \delta_t^2
= \lim_{N\rightarrow \infty} \frac{1}{N}\sum_{t=1}^N \left(
\mu_{r^0(t)} - \bar{\mu}_{r(t)}\right)^2,
\end{equation}
which is non-negative and depends on $\boldsymbol\lambda$, but 
not on $N$. It is not hard to see that $\delta_t = 0$ for all $t \in \{ 1, 
\ldots, N \}$ if and only if $\boldsymbol{\lambda}$ contains all relative 
changepoints in $\boldsymbol{\lambda}^0$ (denoted by $\boldsymbol\lambda 
\supset \boldsymbol\lambda^0$). Therefore, $\delta^2=0$ only for models 
$\boldsymbol\lambda$ such that $\boldsymbol\lambda \supset 
\boldsymbol\lambda^0$, including $\boldsymbol{\lambda}^0$ itself.

%%%%%%%%%%%%%%%%%%%%%%%%%%%%%%%%%%%%%%%%%%%%%%%%

\begin{lemma}\label{lemma:gamma_h} Under any relative changepoint 
configuration $\boldsymbol\lambda \in \boldsymbol\Lambda$ (which may or 
may not be the true changepoint configuration), for $h \in \{ 0, 1, \ldots, 
p \}$, as $N \rightarrow \infty$, the lag $h$ sample autocovariance
\begin{equation*}
\hat{\gamma}(h) = \frac{1}{N} \sum_{t = h + 1}^N 
\epsilon_t^{\text{ols}} \epsilon_{t-h}^{\text{ols}}
\end{equation*} 
obeys
\begin{equation}\label{eq:gamma_h}
\hat{\gamma}(h) 
= \gamma(h) +  
\delta^2 + O_P\left(\frac{1}{\sqrt{N}}\right),
\end{equation}
where $\gamma(h)$ is the true lag $h$ autocovariance for the AR$(p)$ 
series $\boldsymbol\epsilon$.
\end{lemma}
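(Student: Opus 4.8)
The plan is to substitute the residual decomposition of Lemma~\ref{lemma:Y} into the definition of $\hat\gamma(h)$ and track the resulting cross-sums. Writing $\epsilon_t^{\text{ols}}=\delta_t+W_t$ as in \eqref{eq:delta_t_W_t}, I would expand
\[
\hat\gamma(h)=\frac{1}{N}\sum_{t=h+1}^N \delta_t\delta_{t-h}
+\frac{1}{N}\sum_{t=h+1}^N \delta_t W_{t-h}
+\frac{1}{N}\sum_{t=h+1}^N W_t\delta_{t-h}
+\frac{1}{N}\sum_{t=h+1}^N W_t W_{t-h},
\]
and show that the first sum supplies the $\delta^2$ term, the last supplies $\gamma(h)$, and the two cross sums are $O_P(1/\sqrt{N})$. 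The key structural facts I would exploit are that $\delta_t=\mu_{r^0(t)}-\bar\mu_{r(t)}$ is deterministic, uniformly bounded, and piecewise constant, whereas $W_t=\epsilon_t-\bar\epsilon_{r(t)}-\bar\epsilon_{v(t)}+\bar\epsilon$ is $\epsilon_t$ minus centering averages that are each $O_P(1/\sqrt{N})$.

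For the first sum, the deterministic argument already given around \eqref{eq:delta_sq_average} shows $N^{-1}\sum_{t=h+1}^N \delta_t\delta_{t-h}=\delta^2+O(1/N)$, uniformly over the finite set $h\in\{0,\dots,p\}$, since $\delta_t=\delta_{t-h}$ except at the $O(1)$ indices near the changepoints of $\boldsymbol\lambda$ and $\boldsymbol\lambda^0$. For the fourth sum I would note that its leading piece $N^{-1}\sum_{t}\epsilon_t\epsilon_{t-h}$ is the ordinary sample autocovariance of the causal AR$(p)$ process $\boldsymbol\epsilon$, which equals $\gamma(h)+O_P(1/\sqrt{N})$ by standard linear-process theory. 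The remaining pieces of $W_tW_{t-h}$ pair an error or a centering average against a centering average; because the regime, seasonal, and global averages are piecewise constant, $O_P(1/\sqrt{N})$, and the regime and season proportions converge to positive constants, each such sum is $O_P(1/N)$. Hence the fourth sum is $\gamma(h)+O_P(1/\sqrt{N})$.

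For the two cross sums, boundedness and piecewise constancy of $\delta_t$ make $N^{-1}\sum_t\delta_t\epsilon_{t-h}$ a bounded-weight average of mean-zero correlated errors with variance $O(1/N)$ (using summability of the autocovariances), hence $O_P(1/\sqrt{N})$; the pieces of $W_{t-h}$ built from centering averages again contribute $O_P(1/\sqrt{N})$ or smaller by the same proportion-convergence argument. Collecting the four pieces then yields $\hat\gamma(h)=\gamma(h)+\delta^2+O_P(1/\sqrt{N})$, which is \eqref{eq:gamma_h}.

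The genuinely substantive step is the $\sqrt{N}$-rate control of $N^{-1}\sum_t\epsilon_t\epsilon_{t-h}-\gamma(h)$ and of the centering averages; both rest on the causality of the AR$(p)$ error process, i.e.\ its absolutely convergent MA$(\infty)$ representation, which guarantees summable autocovariances and finite fourth moments and thus the required variance bounds. A secondary bookkeeping subtlety is the treatment of the $O(1)$ boundary summands near changepoints and the seasonal edge effects already flagged in Lemma~\ref{lemma:Y}; these perturb only $O(1)$ terms in each average and so contribute $O(1/N)$, comfortably inside the stated error. Uniformity over $h\in\{0,\dots,p\}$ and over the finitely many regimes and seasons is automatic, since each is a finite collection.
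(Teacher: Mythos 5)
Your proposal is correct, and it starts from the same forced decomposition of $\hat\gamma(h)$ into the four sums $N^{-1}\sum\delta_t\delta_{t-h}$, the two cross sums, and $N^{-1}\sum W_tW_{t-h}$, with the identical treatment of the purely deterministic $\delta_t\delta_{t-h}$ piece. Where you genuinely diverge from the paper is in the control of the stochastic terms. The paper's device is to write $W_t$ itself as a linear process with time-varying coefficients, $W_t=\sum_{j}\psi_j^{(t)}Z_{t-j}$, prove the uniform perturbation bound $\sup_{t,j}|\psi_j^{(t)}-\psi_j|\leq c/N$, and then invoke Proposition 7.3.5 and the WLLN for linear processes from Brockwell and Davis to get both $N^{-1}\sum W_tW_{t-h}=\gamma(h)+O_P(1/\sqrt{N})$ and the $O_P(1/\sqrt{N})$ rate for the cross sums in one stroke. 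You instead expand $W_t=\epsilon_t-\bar\epsilon_{r(t)}-\bar\epsilon_{v(t)}+\bar\epsilon$ directly, peel off the classical sample autocovariance $N^{-1}\sum\epsilon_t\epsilon_{t-h}$ (handled by standard theory), and kill every remaining piece by the observation that each centering average is piecewise constant and $O_P(1/\sqrt{N})$, together with an explicit Chebyshev variance bound $\mathrm{Var}\bigl(N^{-1}\sum_t\delta_t\epsilon_{t-h}\bigr)=O(1/N)$ using summable autocovariances. Both routes are sound; yours is more elementary and self-contained (no perturbed MA weights), and it is in fact the style of computation the paper itself later uses in the proof of Proposition 5, while the paper's version of this lemma is more compact at the price of leaning on the cited linear-process results. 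Your bookkeeping of the $O(1)$ boundary indices near changepoints and seasonal edges, and of uniformity over the finitely many $h$, regimes, and seasons, matches the paper's.
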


\begin{proof} Since the AR$(p)$ errors are assumed causal, we may write 
\[
\epsilon_t = \sum_{j = 0}^{\infty} \psi_{j} Z_{t-j}
\]
for some weights $\{ \psi_j \}_{j=0}^\infty$, where $\sum_{j = 
0}^{\infty} |\psi_j| < \infty$.  Since $W_t = \epsilon_t - 
\bar{\epsilon}_{r(t)}- \bar{\epsilon}_{v(t)} + \bar{\epsilon}$, one can 
write $W_t$ as a linear combination of all $Z_t$s up to and before time 
$N$:
\[
W_t = \sum_{j=-\infty}^{\infty} \psi_j^{(t)} Z_{t-j},
\]
where 
\begin{equation}
\label{eq:psi_j^t}
\psi_j^{(t)}= \psi_{j} - 
\frac{\sum_{k: r(k) = r(t)}\psi_{k-t+j}}{N_{r(t)}} 
- \frac{\sum_{l: v(l) = v(t)}\psi_{l-t+j}}{N/T} 
+ \frac{\sum_{u=1}^N \psi_{u-t+j}}{N}.
\end{equation}
Since $\psi_j =0$ when $j < 0$, $\psi_j^{(t)} = 0$ if $j < t-N$.

The asymptotic limit of the sample autocovariances can now be derived:
\begin{align}
\nonumber
	\hat{\gamma}(h) & = \frac{1}{N} \sum_{t = h + 1}^N \epsilon_t^{\text{ols}} \epsilon_{t-h}^{\text{ols}}
	=  \frac{1}{N} \sum_{t = h + 1}^N (W_t + \delta_t)(W_{t-h} + \delta_{t-h})\\ 
	\label{eq:autocov1}
&	=	\frac{1}{N} \sum_{t = h + 1}^N	
	(W_t W_{t-h}	+  \delta_{t-h} W_t  +  \delta_{t} W_{t-h} + \delta_t \delta_{t-h}).
\end{align}
Arguing as in Proposition 7.3.5 of \citet[pp.\ 
232]{Brockwell_Davis_1991} gives
\[
\frac{1}{N} \sum_{t = h + 1}^N W_t W_{t-h}
=	\frac{1}{N}\sum_{t=h+1}^N  \sum_{j=-\infty}^{\infty} \psi_j^{(t)}\psi_{j-h}^{(t-h)} Z_{t-j}^2
	+ O_P\left(\frac{1}{\sqrt{N}}\right).
\]
In \eqref{eq:psi_j^t}, since $\sum_{j=0}^\infty |\psi_j| < \infty$, and 
$N_{r(t)} = O(N)$ for all $t \in \{ 1, \ldots, N \}$, it is not difficult to 
show that there exists a positive finite constant $c$ such that,
\[
\sup_{t, j} \left|\psi_j^{(t)} -\psi_{j}\right| \leq \frac{c}{N}.
\]
Therefore, for each $t$ and $h$, $\left\{\psi_j^{(t)}\psi_{j-h}^{(t-h)} 
\right\}_{j=-\infty}^{\infty}$ is absolutely convergent, and
\[
\left| \sum_{j=-\infty}^{\infty} \psi_j^{(t)}\psi_{j-h}^{(t-h)} 
- \sum_{j=-\infty}^{\infty} \psi_j\psi_{j-h} \right| = O\left( \frac{1}{N} \right).
\] 
Since $\{ Z_t \}$ is iid with variance $\sigma^2$, the weak law of large numbers (WLLN) 
for linear processes \citep[pp.\ 208, Proposition 
6.3.10]{Brockwell_Davis_1991} gives
\begin{align*}
\frac{1}{N} \sum_{t = h + 1}^N W_t W_{t-h}	
&=	\frac{1}{N}\sum_{t=h+1}^N  
	\sum_{j=-\infty}^{\infty} \psi_j^{(t)}\psi_{j-h}^{(t-h)}\sigma^2 + O_P\left(\frac{1}{\sqrt{N}}\right)\\
&=	\frac{1}{N}\sum_{t=h+1}^N  
	\sum_{j=-\infty}^{\infty} \psi_{j}\psi_{j-h}\sigma^2 + O_P\left(\frac{1}{\sqrt{N}}\right).
\end{align*}
Now using that $\gamma(h) = \sigma^2 
\sum_{j=-\infty}^\infty \psi_j \psi_{j-h}$ gives
\[
\frac{1}{N} \sum_{t = h + 1}^N W_t W_{t-h}	
=\frac{N-h}{N}\gamma(h)+ O_P\left(\frac{1}{\sqrt{N}}\right)
=\gamma(h)+ O_P\left(\frac{1}{\sqrt{N}}\right).
\]
This identifies the limit of the first term in the bottom line of 
\eqref{eq:autocov1}.  By \eqref{eq:psi_j^t}, it is not hard to show that 
for each $t$, $\left\{\psi_j^{(t)} \right\}_{j = -\infty}^{\infty}$ is 
absolutely convergent. For the second and third terms in 
\eqref{eq:autocov1}, apply the WLLN again to see that these terms 
converge to zero in probability at rate $O_P(1/\sqrt{N})$. Hence, 
as $N \rightarrow \infty$,
\[
\hat{\gamma}(h) 
= 	\gamma(h) + \frac{1}{N}\sum_{t=h+1}^N \delta_t \delta_{t-h}   
	+ O_P\left(\frac{1}{\sqrt{N}}\right)
= \gamma(h) +  \delta^2 + O_P\left(\frac{1}{\sqrt{N}}\right).
\]
%\qed
\end{proof}

Since the Yule-Walker estimator $\hat{\boldsymbol\phi}$ is formulated 
based on $\hat{\gamma}(h)$'s, the following asymptotic result follows from 
Lemma \ref{lemma:gamma_h}.

\begin{proposition}\label{prop:YW_phi} Under any relative changepoint 
configuration $\boldsymbol\lambda \in \boldsymbol\Lambda$, the 
Yule-Walker estimator $\hat{\boldsymbol\phi} = 
\hat{\boldsymbol\Gamma}_p^{-1} \hat{\boldsymbol\gamma}_p$ obeys
\begin{equation}
\label{eq:YW_phi_limit}
\hat{\boldsymbol\phi} = 
\left(\boldsymbol\Gamma_p + \delta^2 \mathbf{J}_p \right)^{-1}
\left(\boldsymbol\gamma_p + \delta^2 \mathbf{1}_p \right)
+ O_P\left(\frac{1}{\sqrt{N}}\right),
\end{equation}
where ${\boldsymbol\gamma}_p = ({\gamma}(1), \ldots, {\gamma}(p))'$ and 
${\boldsymbol\Gamma}_p$ is a $p \times p$ matrix with $(i,j)$th entry 
${\gamma}(|i-j|)$.
\end{proposition}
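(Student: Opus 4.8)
The plan is to treat the Yule-Walker estimator as a rational function of the sample autocovariances and propagate the first-order expansion of Lemma \ref{lemma:gamma_h} through matrix inversion. First I would collect the entrywise expansions. For each $h\in\{0,1,\ldots,p\}$, Lemma \ref{lemma:gamma_h} gives $\hat{\gamma}(h)=\gamma(h)+\delta^2+O_P(1/\sqrt N)$. Since the $(i,j)$ entry of $\hat{\boldsymbol\Gamma}_p$ is $\hat{\gamma}(|i-j|)$ and $\hat{\boldsymbol\gamma}_p=(\hat{\gamma}(1),\ldots,\hat{\gamma}(p))'$, this immediately yields
\[
\hat{\boldsymbol\Gamma}_p=\boldsymbol\Gamma_p+\delta^2\mathbf{J}_p+\mathbf{E}_N,
\qquad
\hat{\boldsymbol\gamma}_p=\boldsymbol\gamma_p+\delta^2\mathbf{1}_p+\mathbf{e}_N,
\]
where every entry of the perturbations $\mathbf{E}_N$ and $\mathbf{e}_N$ is $O_P(1/\sqrt N)$. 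As $p$ is fixed, this is equivalent to $\|\mathbf{E}_N\|=O_P(1/\sqrt N)$ and $\|\mathbf{e}_N\|=O_P(1/\sqrt N)$ in any fixed matrix or vector norm.

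The crucial structural fact I would establish next is that the limiting matrix $\mathbf{M}:=\boldsymbol\Gamma_p+\delta^2\mathbf{J}_p$ is invertible. Here $\boldsymbol\Gamma_p$ is the order-$p$ autocovariance matrix of the causal, nondegenerate AR$(p)$ error process and is therefore positive definite, while $\delta^2\mathbf{J}_p=\delta^2\mathbf{1}_p\mathbf{1}_p'$ is positive semidefinite because $\delta^2\ge 0$ by its definition in \eqref{eq:delta_sq_average}. The sum of a positive definite and a positive semidefinite matrix is positive definite, so $\mathbf{M}$ is nonsingular with a fixed, nonrandom inverse $\mathbf{M}^{-1}$.

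With $\mathbf{M}$ invertible I would invoke the standard perturbation expansion for matrix inversion. Writing $\hat{\boldsymbol\Gamma}_p=\mathbf{M}(\mathbf{I}_p+\mathbf{M}^{-1}\mathbf{E}_N)$ and noting $\|\mathbf{M}^{-1}\mathbf{E}_N\|=O_P(1/\sqrt N)\to 0$, the Neumann series gives $(\mathbf{I}_p+\mathbf{M}^{-1}\mathbf{E}_N)^{-1}=\mathbf{I}_p+O_P(1/\sqrt N)$ on the event (of probability tending to one) where $\|\mathbf{M}^{-1}\mathbf{E}_N\|<1$; hence $\hat{\boldsymbol\Gamma}_p^{-1}=\mathbf{M}^{-1}+O_P(1/\sqrt N)$, which in particular confirms invertibility of $\hat{\boldsymbol\Gamma}_p$ with probability tending to one. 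Multiplying,
\[
\hat{\boldsymbol\phi}=\hat{\boldsymbol\Gamma}_p^{-1}\hat{\boldsymbol\gamma}_p
=\mathbf{M}^{-1}\left(\boldsymbol\gamma_p+\delta^2\mathbf{1}_p\right)+O_P\!\left(\frac{1}{\sqrt N}\right),
\]
since $\mathbf{M}^{-1}(\boldsymbol\gamma_p+\delta^2\mathbf{1}_p)$ is a fixed bounded vector and every remaining cross term pairs an $O_P(1/\sqrt N)$ factor with a bounded one (the product of the two perturbations being $O_P(1/N)$). This is exactly \eqref{eq:YW_phi_limit}. The one step demanding genuine care, rather than bookkeeping, is the invertibility of $\mathbf{M}$: without it the $O_P(1/\sqrt N)$ control could not be pushed through the inverse, and it is precisely the semidefiniteness of $\delta^2\mathbf{J}_p$, hinging on $\delta^2\ge 0$, that keeps $\mathbf{M}$ nonsingular even for misspecified models where $\delta^2>0$.
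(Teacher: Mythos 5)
Your proposal is correct and follows exactly the route the paper intends: the paper offers no written proof of this proposition, simply asserting that it "follows from Lemma \ref{lemma:gamma_h}", and your argument is precisely the standard substitution-plus-perturbation argument that makes that assertion rigorous (entrywise expansion of $\hat{\boldsymbol\Gamma}_p$ and $\hat{\boldsymbol\gamma}_p$, positive definiteness of $\boldsymbol\Gamma_p+\delta^2\mathbf{J}_p$ since $\delta^2\ge 0$, and a Neumann-series bound on the inverse). Your explicit treatment of the invertibility of the limiting matrix is a detail the paper leaves implicit, and it is handled correctly.
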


%%%%%%%%%%%%%%%%%%%%%%%%%%%%%%%%%%%%%%%%%%%%%%%%

\subsection{Asymptotic behavior of estimators of ${\sigma}^2$} 
\label{subsection:YW_sigmasq}

In the BMDL and (automatic) MDL formulas, estimators for $\sigma^2$ are
\begin{align}\label{eq:sigmasq_hat_BMDL}
\hat{\sigma}^2 
&	= \frac{1}{N-p}\widehat{\mathbf{X}}' \left\{
 	\widehat{\mathbf{B}}
	- \widehat{\mathbf{B}} \widehat{\mathbf{A}}
	\left(\widehat{\mathbf{A}}' \widehat{\mathbf{B}}\widehat{\mathbf{A}}\right)^{-1}
	\widehat{\mathbf{A}}' \widehat{\mathbf{B}}
	\right\}\widehat{\mathbf{X}},\\ \label{eq:sigmasq_hat_MDL}
\hat{\sigma}^2_{\nu = \infty} 
&	= 
	\frac{1}{N-p}\widehat{\mathbf{X}}' 
	\left(\mathbf{I}_N - \mathcal{P}_{[\widehat{\mathbf{A}}|\widehat{\mathbf{D}}]}\right)
	\widehat{\mathbf{X}},
\end{align}
respectively. The following lemma shows that under any model 
$\boldsymbol\lambda$, these two estimators are asymptotically the same 
as the Yule-Walker estimator of $\sigma^2$, i.e.,
\begin{equation}
\label{eq:sigmasq_hat_YW}
\hat{\sigma}^2_{\text{YW}} = \hat\gamma(0) - \hat{\boldsymbol\gamma}_p' 
	\hat{\boldsymbol\Gamma}_p^{-1}\hat{\boldsymbol\gamma}_p.
\end{equation}

\begin{lemma}
\label{lemma:sigmasq1}
Under any changepoint configuration $\boldsymbol\lambda \in \boldsymbol\Lambda$, as $N 
\rightarrow \infty$,
\begin{align}
\label{eq:sigmasq1}
\hat{\sigma}^2
&=	\hat{\sigma}^2_{\nu = \infty} + O_P\left( \frac{1}{N} \right), \\ \label{eq:sigmasq2}
\hat{\sigma}^2_{\nu = \infty}
&=	\hat{\sigma}^2_{\text{YW}} + O_P\left( \frac{1}{N} \right).
\end{align}
\end{lemma}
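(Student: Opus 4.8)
The plan is to prove the two displays \eqref{eq:sigmasq1} and \eqref{eq:sigmasq2} separately: the first by a short variational argument that exploits the fact that $\hat{\sigma}^2$ and $\hat{\sigma}^2_{\nu=\infty}$ use the identical $\hat{\boldsymbol\phi}$ (hence identical $\widehat{\mathbf{X}},\widehat{\mathbf{A}},\widehat{\mathbf{D}}$) and differ only through the ridge term $\mathbf{I}_m/\nu$; the second by exploiting the near-commutativity of autoregressive filtering and least-squares projection.

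For \eqref{eq:sigmasq1}, I would recast both estimators as penalized and unpenalized residual sums of squares. Using the ridge identity $(\widehat{\mathbf{X}} - \widehat{\mathbf{A}}\mathbf{s})'\widehat{\mathbf{B}}(\widehat{\mathbf{X}} - \widehat{\mathbf{A}}\mathbf{s}) = \min_{\boldsymbol\mu}\{\|\widehat{\mathbf{X}} - \widehat{\mathbf{A}}\mathbf{s} - \widehat{\mathbf{D}}\boldsymbol\mu\|^2 + \nu^{-1}\|\boldsymbol\mu\|^2\}$ with $\widehat{\mathbf{B}}$ as in \eqref{eq:B_tilde}, and then minimizing over $\mathbf{s}$, the estimator \eqref{eq:sigmasq_hat_BMDL} becomes $(N-p)\hat{\sigma}^2 = \min_{\mathbf{s},\boldsymbol\mu}\{\|\widehat{\mathbf{X}} - \widehat{\mathbf{A}}\mathbf{s} - \widehat{\mathbf{D}}\boldsymbol\mu\|^2 + \nu^{-1}\|\boldsymbol\mu\|^2\}$, whereas $\nu=\infty$ in \eqref{eq:sigmasq_hat_MDL} drops the penalty, giving $(N-p)\hat{\sigma}^2_{\nu=\infty} = \min_{\mathbf{s},\boldsymbol\mu}\|\widehat{\mathbf{X}} - \widehat{\mathbf{A}}\mathbf{s} - \widehat{\mathbf{D}}\boldsymbol\mu\|^2$. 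Since the penalized objective dominates the unpenalized one, $\hat{\sigma}^2 \ge \hat{\sigma}^2_{\nu=\infty}$; evaluating the penalized objective at the unpenalized minimizer $(\hat{\mathbf{s}},\hat{\boldsymbol\mu})$ yields the reverse inequality $0 \le (N-p)(\hat{\sigma}^2 - \hat{\sigma}^2_{\nu=\infty}) \le \nu^{-1}\|\hat{\boldsymbol\mu}\|^2$. Because $m \le M$ is bounded and each fitted regime mean is $O_P(1)$ under any $\boldsymbol\lambda \in \boldsymbol\Lambda$, we have $\|\hat{\boldsymbol\mu}\|^2 = O_P(1)$, and dividing by $N-p$ gives \eqref{eq:sigmasq1}.

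For \eqref{eq:sigmasq2}, let $\mathbf{F}$ be the $(N-p)\times N$ filtering matrix with $\hat{\boldsymbol\phi}$ plugged in, so $\widehat{\mathbf{X}}=\mathbf{F}\mathbf{X}$, $\widehat{\mathbf{A}}=\mathbf{F}\mathbf{A}$, $\widehat{\mathbf{D}}=\mathbf{F}\mathbf{D}$. Writing $\mathbf{F}\boldsymbol\epsilon^{\text{ols}} = \widehat{\mathbf{X}} - [\widehat{\mathbf{A}}|\widehat{\mathbf{D}}]\hat{\boldsymbol\beta}^{\text{ols}}$ shows that the residual of $\widehat{\mathbf{X}}$ on the filtered design coincides with that of $\mathbf{F}\boldsymbol\epsilon^{\text{ols}}$, so an orthogonal decomposition gives $(N-p)\hat{\sigma}^2_{\nu=\infty} = \|\mathbf{F}\boldsymbol\epsilon^{\text{ols}}\|^2 - \|\mathcal{P}_{[\widehat{\mathbf{A}}|\widehat{\mathbf{D}}]}\mathbf{F}\boldsymbol\epsilon^{\text{ols}}\|^2$. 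Expanding $\|\mathbf{F}\boldsymbol\epsilon^{\text{ols}}\|^2 = \sum_{t=p+1}^N(\epsilon_t^{\text{ols}} - \sum_{j=1}^p\hat\phi_j\epsilon_{t-j}^{\text{ols}})^2 = \hat\gamma(0) - 2\hat{\boldsymbol\phi}'\hat{\boldsymbol\gamma}_p + \hat{\boldsymbol\phi}'\hat{\boldsymbol\Gamma}_p\hat{\boldsymbol\phi} + O_P(1/N)$, where replacing the empirical products $\tfrac{1}{N-p}\sum_t \epsilon_t^{\text{ols}}\epsilon_{t-h}^{\text{ols}}$ by $\hat\gamma(h)$ costs only $O_P(1/N)$ (the index-range and normalization mismatches involve $O(p)$ terms, each $O_P(1)$, and $\hat\phi_j,\hat\gamma(h)$ are $O_P(1)$ by Proposition \ref{prop:YW_phi} and Lemma \ref{lemma:gamma_h}); substituting $\hat{\boldsymbol\phi} = \hat{\boldsymbol\Gamma}_p^{-1}\hat{\boldsymbol\gamma}_p$ collapses the last two terms into $-\hat{\boldsymbol\gamma}_p'\hat{\boldsymbol\Gamma}_p^{-1}\hat{\boldsymbol\gamma}_p$, so $\|\mathbf{F}\boldsymbol\epsilon^{\text{ols}}\|^2 = (N-p)\hat{\sigma}^2_{\text{YW}} + O_P(1)$.

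It remains to control the cross term $\|\mathcal{P}_{[\widehat{\mathbf{A}}|\widehat{\mathbf{D}}]}\mathbf{F}\boldsymbol\epsilon^{\text{ols}}\|^2 = \mathbf{g}'\mathbf{G}^{-1}\mathbf{g}$, where $\mathbf{g} = [\mathbf{A}|\mathbf{D}]'\mathbf{F}'\mathbf{F}\boldsymbol\epsilon^{\text{ols}}$ and $\mathbf{G} = [\widehat{\mathbf{A}}|\widehat{\mathbf{D}}]'[\widehat{\mathbf{A}}|\widehat{\mathbf{D}}]$; this is the main obstacle. The Gram matrix $\mathbf{G}$ has smallest eigenvalue of order $N$ — as in the computation behind Lemma \ref{lemma:Y}, using $\min_r(\lambda_r - \lambda_{r-1}) \ge d$ and balanced seasonal counts — so $\|\mathbf{G}^{-1}\|_{\text{op}} = O(1/N)$. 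For $\mathbf{g}$, the key point is that $\mathbf{F}'\mathbf{F}$ is banded with bandwidth $p$ and that $\boldsymbol\epsilon^{\text{ols}}$ is exactly orthogonal to every column of $[\mathbf{A}|\mathbf{D}]$; hence for an indicator column $\mathbf{c}$, the quantity $\mathbf{c}'\mathbf{F}'\mathbf{F}\boldsymbol\epsilon^{\text{ols}}$ equals a constant multiple of $\mathbf{c}'\boldsymbol\epsilon^{\text{ols}} = 0$ plus boundary corrections supported on only $O(mp + Tp)$ coordinates near regime and seasonal edges, each $O_P(1)$. Thus every entry of $\mathbf{g}$ is $O_P(1)$, so $\|\mathbf{g}\|^2 = O_P(1)$ and $\mathbf{g}'\mathbf{G}^{-1}\mathbf{g} = O_P(1/N)$. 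Combining, $(N-p)\hat{\sigma}^2_{\nu=\infty} = (N-p)\hat{\sigma}^2_{\text{YW}} + O_P(1)$, which is \eqref{eq:sigmasq2}. The delicate bookkeeping of these boundary terms — verifying that the non-commutation of filtering and projection is confined to a bounded number of coordinates — is where the real work lies.
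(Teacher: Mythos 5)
Your proof is correct, but it takes a genuinely different route from the paper's on both displays. For \eqref{eq:sigmasq1}, the paper expands $\widehat{\mathbf{B}}$ directly and shows that deleting the $\mathbf{I}_m/(N\nu)$ perturbation inside the inverse of $\widehat{\mathbf{D}}'\widehat{\mathbf{D}}/N$ costs $O_P(1/N)$, term by term; your ridge-regression sandwich $0 \le (N-p)(\hat{\sigma}^2 - \hat{\sigma}^2_{\nu=\infty}) \le \nu^{-1}\|\hat{\boldsymbol\mu}\|^2 = O_P(1)$ is shorter and arguably cleaner, needing only that the unpenalized fitted regime means are bounded in probability (which follows from the minimum spacing $d$ in $\boldsymbol\Lambda$). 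For \eqref{eq:sigmasq2}, the paper transfers the projection from the filtered design to the lagged unfiltered designs via the sparse perturbations $\boldsymbol\Delta_j = \mathbf{D}_j - \widehat{\mathbf{D}}/(1-\sum_k\hat{\phi}_k)$ and then identifies quadratic forms directly with lagged products of $\boldsymbol\epsilon^{\text{ols}}$; you instead keep the filtered design, use the exact identity $(\mathbf{I}-\mathcal{P}_{[\widehat{\mathbf{A}}|\widehat{\mathbf{D}}]})\widehat{\mathbf{X}} = (\mathbf{I}-\mathcal{P}_{[\widehat{\mathbf{A}}|\widehat{\mathbf{D}}]})\mathbf{F}\boldsymbol\epsilon^{\text{ols}}$, and kill the cross term through the exact orthogonality $[\mathbf{A}|\mathbf{D}]'\boldsymbol\epsilon^{\text{ols}}=\mathbf{0}$ together with the banded structure of $\mathbf{F}'\mathbf{F}$. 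Both arguments rest on the same structural fact --- filtering an indicator column alters it only on $O(p)$ coordinates per regime or seasonal boundary --- but yours makes the role of OLS orthogonality explicit, while the paper's makes the column-space identification explicit. Three small points to tidy: mid-argument you write $\|\mathbf{F}\boldsymbol\epsilon^{\text{ols}}\|^2 = \hat{\gamma}(0) - 2\hat{\boldsymbol\phi}'\hat{\boldsymbol\gamma}_p + \cdots$ without the factor of $N$ (you recover the correct normalization in the final line); for a seasonal column $\mathbf{A}_v$ the vector $\mathbf{F}'\mathbf{F}\mathbf{A}_v$ is, in the interior, a fixed linear combination of all the seasonal columns rather than a scalar multiple of $\mathbf{A}_v$ alone, though each such column is still exactly orthogonal to $\boldsymbol\epsilon^{\text{ols}}$ so the conclusion stands; and the claim $\|\mathbf{G}^{-1}\|_{\text{op}} = O(1/N)$ implicitly needs $1 - \sum_{k=1}^p\hat{\phi}_k$ bounded away from zero so that the filtered regime columns do not degenerate, a fact the paper also invokes for Yule--Walker estimates at large $N$.
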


\begin{proof} Under the null model $\boldsymbol\lambda_{\o}$ ($m = 0$), 
the column space of $\mathbf{D}$ is the null space and both 
$\hat{\sigma}^2$ and $\hat{\sigma}^2_{\nu = \infty}$ are 
$(N-p)^{-1}\widehat{\mathbf{X}}' \left(\mathbf{I}_N - 
\mathcal{P}_{\widehat{\mathbf{A}}}\right) \widehat{\mathbf{X}}$.  Since 
$\hat{\sigma}^2_{\text{YW}} = \frac{1}{N}\widehat{\mathbf{X}}' 
\left(\mathbf{I}_N - \mathcal{P}_{\widehat{\mathbf{A}}}\right) 
\widehat{\mathbf{X}}$, the conclusion holds.  The rest of the proof is 
for any model $\boldsymbol\lambda$ that contains $m \geq 1$ relative 
changepoints.

We first establish \eqref{eq:sigmasq1}. Since $\hat{\boldsymbol\phi}$ 
has the limit in \eqref{eq:YW_phi_limit}, it is not hard to show that as 
$N$ tends to infinity, $\widehat{\mathbf{D}}' \widehat{\mathbf{D}}/N$ 
and $\widehat{\mathbf{D}}' \widehat{\mathbf{X}}/N$ converge in 
probability to a $m \times m$ positive definite matrix and an 
$m$-dimensional vector, respectively. In the prior of $\boldsymbol\mu$, 
the parameter $\nu$ is a constant; hence,

\begin{align*}
	\frac{1}{N}\widehat{\mathbf{X}}'\widehat{\mathbf{B}}\widehat{\mathbf{X}}
&=	\frac{\widehat{\mathbf{X}}'\widehat{\mathbf{X}}}{N}
	- \frac{\widehat{\mathbf{X}}' \widehat{\mathbf{D}}}{N}
	\left(\frac{\widehat{\mathbf{D}}' \widehat{\mathbf{D}}}{N}
	+\frac{\mathbf{I}_{m}}{N \nu}\right)^{-1}
	\frac{\widehat{\mathbf{D}}' \widehat{\mathbf{X}}}{N}\\
&=	\frac{\widehat{\mathbf{X}}'\widehat{\mathbf{X}}}{N}
	- \frac{\widehat{\mathbf{X}}' \widehat{\mathbf{D}}}{N}
	\left(\frac{\widehat{\mathbf{D}}' \widehat{\mathbf{D}}}{N}\right)^{-1}
	\frac{\widehat{\mathbf{D}}' \widehat{\mathbf{X}}}{N}
	+ O_P\left( \frac{1}{N} \right) \\
&= 	\frac{1}{N}\widehat{\mathbf{X}}'
	\left(\mathbf{I}_{N-p} - \mathcal{P}_{\widehat{\mathbf{D}}}\right)
	\widehat{\mathbf{X}} + O_P\left( \frac{1}{N} \right).
\end{align*}
Similar arguments give
\begin{align*}
\frac{1}{N}\widehat{\mathbf{X}}'\widehat{\mathbf{B}}\widehat{\mathbf{A}}
&=	\frac{1}{N}\widehat{\mathbf{X}}'
	\left(\mathbf{I}_{N-p} - \mathcal{P}_{\widehat{\mathbf{D}}}\right)
	\widehat{\mathbf{A}} + O_P\left( \frac{1}{N} \right),\quad \\
\frac{1}{N}\widehat{\mathbf{A}}'\widehat{\mathbf{B}}\widehat{\mathbf{A}}
&=	\frac{1}{N}\widehat{\mathbf{A}}'
	\left(\mathbf{I}_{N-p} - \mathcal{P}_{\widehat{\mathbf{D}}}\right)
	\widehat{\mathbf{A}} + O_P\left( \frac{1}{N} \right).
\end{align*}
Hence, the left hand side of \eqref{eq:sigmasq1} has the limit
\begin{align*}
\label{eq:X_Q_Ahat_Dhat_X2}
\hat{\sigma}^2
=& \frac{1}{N-p}\widehat{\mathbf{X}}' \left\{
 	\widehat{\mathbf{B}}
	- \widehat{\mathbf{B}} \widehat{\mathbf{A}}
	\left(\widehat{\mathbf{A}}' \widehat{\mathbf{B}}\widehat{\mathbf{A}}\right)^{-1}
	\widehat{\mathbf{A}}' \widehat{\mathbf{B}}
	\right\}\widehat{\mathbf{X}}\\ \nonumber
=& \frac{1}{N-p}\widehat{\mathbf{X}}' \left\{
 	\mathbf{I}_N - \mathcal{P}_{\widehat{\mathbf{D}}}
	- \mathcal{P}_{\left(\mathbf{I}_{N} - \mathcal{P}_{\widehat{\mathbf{D}}}\right) \widehat{\mathbf{A}}}
	\right\}\widehat{\mathbf{X}} + O_P\left( \frac{1}{N} \right) \\ \nonumber
=& \frac{1}{N-p}\widehat{\mathbf{X}}' 
	\left(\mathbf{I}_{N-p} - \mathcal{P}_{[\widehat{\mathbf{A}}|\widehat{\mathbf{D}}]}\right)
	\widehat{\mathbf{X}} + O_P\left( \frac{1}{N} \right) \\
=& \hat{\sigma}^2_{\nu = \infty}+ O_P\left( \frac{1}{N} \right),
\end{align*}
where the second to last equality follows from \eqref{eq:Q_AD}.

We now show that for any $\boldsymbol\lambda$ with $m\geq 1$, \eqref{eq:sigmasq2} 
holds. For notational simplicity, for any $j \in \{ 0, 1, \ldots, p \}$, 
matrices formed from the rows of $\mathbf{A}$ and $\mathbf{D}$ are denoted 
by
\[
\mathbf{A}_j \stackrel{\text{def}}{=} \mathbf{A}_{(p+1-j):(N-j)},\quad
\mathbf{D}_j \stackrel{\text{def}}{=} \mathbf{D}_{(p+1-j):(N-j)}.
\]
Since both $\widehat{\mathbf{A}}$ and $\mathbf{A}_j$ are $(N-p)\times T$ 
matrices and each column in $\widehat{\mathbf{A}}$ can be written as a 
linear combination of the columns in $\mathbf{A}_j$, the corresponding 
column spaces agree: $C(\widehat{\mathbf{A}}) = C(\mathbf{A}_j)$. 
Therefore, $\mathcal{P}_{\widehat{\mathbf{A}}} = 
\mathcal{P}_{\mathbf{A}_j}$ for all $j$.  Now define
\begin{equation}
\label{eq:D_hat_Dj}
\boldsymbol\Delta_j =
\mathbf{D}_j - \frac{\widehat{\mathbf{D}}}
{1 - \hat{\phi}_1 - \hat{\phi}_2 - \cdots - \hat{\phi}_p}.
\end{equation}
The denominator in \eqref{eq:D_hat_Dj} cannot be zero since $1 - 
\sum_{k=1}^p \hat{\phi}_k \neq 0$ for Yule-Walker estimates when $N$ is 
large \citep{Brockwell_Davis_1991}.

Since there are at most $2m(p+h)$ non-zero entries in 
$\boldsymbol\Delta_j$, and none of these entries depend on $N$, 
$\boldsymbol\Delta_j' \boldsymbol\Delta_j = O_P(1)$. In addition, for 
any $N$-dimensional vectors $\boldsymbol\alpha$ whose entries do not 
depend on $N$, $\boldsymbol\alpha' \boldsymbol\Delta_j = O_P(1)$. Using
\eqref{eq:D_hat_Dj}, we have 
\begin{align*}
\frac{\widehat{\mathbf{D}}' \left(\mathbf{I}_N - \mathcal{P}_{\widehat{\mathbf{A}}}\right)
	\widehat{\mathbf{D}}}{N\left(1 - \sum_{k=1}^p \hat{\phi}_k\right)^2}
&=	\frac{1}{N} (\mathbf{D}_j - \boldsymbol\Delta_j)' 
	\left(\mathbf{I}_N - \mathcal{P}_{\widehat{\mathbf{A}}}\right)
	 (\mathbf{D}_j - \boldsymbol\Delta_j)\\
&= 	\frac{\mathbf{D}_j' \left(\mathbf{I}_N - \mathcal{P}_{\widehat{\mathbf{A}}}\right)
	\mathbf{D}_j}{N} + O_P\left(\frac{1}{N}\right),
\end{align*}
\begin{align*}	
\frac{\boldsymbol\alpha' \left(\mathbf{I}_N - \mathcal{P}_{\widehat{\mathbf{A}}}\right)
	\widehat{\mathbf{D}}}{N\left(1 - \sum_{k=1}^p \hat{\phi}_k\right)}
&= 	\frac{1}{N} \boldsymbol\alpha' \left(\mathbf{I}_N - \mathcal{P}_{\widehat{\mathbf{A}}}\right)
	 (\mathbf{D}_j - \boldsymbol\Delta_j)\\
& = 	\frac{\boldsymbol\alpha' \left(\mathbf{I}_N - \mathcal{P}_{\widehat{\mathbf{A}}}\right)
	\mathbf{D}_j}{N} + O_P\left(\frac{1}{N}\right).
\end{align*}
Therefore, for any $\boldsymbol\alpha, \boldsymbol\beta \in 
\mathbb{R}^N$ whose entries do not depend on $N$,
\begin{align*}
&\frac{1}{N}\boldsymbol\alpha' 
	\mathcal{P}_{\left(\mathbf{I}_N - \mathcal{P}_{\widehat{\mathbf{A}}}\right)
	\widehat{\mathbf{D}}}\boldsymbol\beta\\
=&  	\frac{\boldsymbol\alpha' \left(\mathbf{I}_N - \mathcal{P}_{\widehat{\mathbf{A}}}\right)
	\widehat{\mathbf{D}}}{N\left(1 - \sum_{k=1}^p \hat{\phi}_k\right)}
	\left\{ \frac{\widehat{\mathbf{D}}' \left(\mathbf{I}_N - \mathcal{P}_{\widehat{\mathbf{A}}}\right)
	\widehat{\mathbf{D}}}{N\left(1 - \sum_{k=1}^p \hat{\phi}_k\right)^2} \right\}^{-1}
	\frac{\widehat{\mathbf{D}}' \left(\mathbf{I}_N - \mathcal{P}_{\widehat{\mathbf{A}}}\right)
	\boldsymbol\beta}{N\left(1 - \sum_{k=1}^p \hat{\phi}_k\right)}\\
=&  	\frac{1}{N}\boldsymbol\alpha' \left\{
	\left(\mathbf{I}_N - \mathcal{P}_{\widehat{\mathbf{A}}}\right)\mathbf{D}_j
	\left( \mathbf{D}_j' \left(\mathbf{I}_N - \mathcal{P}_{\widehat{\mathbf{A}}}\right)\mathbf{D}_j \right)^{-1}
	\mathbf{D}_j' \left(\mathbf{I}_N - \mathcal{P}_{\widehat{\mathbf{A}}}\right)
	\right\}\boldsymbol\beta+ O_P\left(\frac{1}{N}\right)\\
=&  	\frac{1}{N}\boldsymbol\alpha' 
	\mathcal{P}_{\left(\mathbf{I}_N - \mathcal{P}_{\widehat{\mathbf{A}}}\right)
	\mathbf{D}_j}\boldsymbol\beta+ O_P\left(\frac{1}{N}\right).
\end{align*}
Hence, from \eqref{eq:Q_AD},
\begin{equation}
\label{eq:P_hatA_hatD}
\frac{1}{N}\boldsymbol\alpha' 
	\mathcal{P}_{[\widehat{\mathbf{A}}|\widehat{\mathbf{D}}]}\boldsymbol\beta
=	\frac{1}{N}\boldsymbol\alpha' 
	\mathcal{P}_{[\mathbf{A}_j|\mathbf{D}_j]}\boldsymbol\beta
	+ O_P\left(\frac{1}{N}\right).
\end{equation}

Since $\widehat{\mathbf{X}} = \mathbf{X}_{(p+1):N} - \sum_{j = 1}^p \hat{\phi}_j 
\mathbf{X}_{(p+1-j):(N-j)}$, for any $j, k \in \{0, 1, \ldots, p\}$, \eqref{eq:P_hatA_hatD} shows that
\begin{align*}
&	\frac{1}{N}\mathbf{X}_{(p+1-j):(N-j)}' 
	\left(\mathbf{I}_N - \mathcal{P}_{[\widehat{\mathbf{A}}|\widehat{\mathbf{D}}]}\right)
	\mathbf{X}_{(p+1-k):(N-k)}\\
=&      ~\frac{1}{N}\left\{\left(\mathbf{I}_N - \mathcal{P}_{[\mathbf{A}_j|\mathbf{D}_j]}\right)
	\mathbf{X}_{(p+1-j):(N-j)}\right\}'
	\left\{\left(\mathbf{I}_N - \mathcal{P}_{[\mathbf{A}_k|\mathbf{D}_k]}\right)
	\mathbf{X}_{(p+1-k):(N-k)}\right\} + O_P\left(\frac{1}{N}\right)\\
=&      ~\frac{1}{N}\left(\boldsymbol\epsilon_{(p+1-j):(N-j)}^{\text{ols}}\right)' \boldsymbol\epsilon_{(p+1-k):(N-k)}^{\text{ols}}
	+ O_P\left(\frac{1}{N}\right).
\end{align*}
Therefore, the left hand side of \eqref{eq:sigmasq2} is 	
\begin{align*}
&	\frac{1}{N}\widehat{\mathbf{X}}' 
	\left(\mathbf{I}_N - \mathcal{P}_{[\widehat{\mathbf{A}}|\widehat{\mathbf{D}}]}\right)
	\widehat{\mathbf{X}}  \\
=&      ~\frac{1}{N}\left\{ \boldsymbol\epsilon_{(p+1):N}^{\text{ols}} 
	- \sum_{j = 1}^p \hat{\phi}_j \boldsymbol\epsilon_{(p+1-j):(N-j)}^{\text{ols}} \right\}'
	\left\{ \boldsymbol\epsilon_{(p+1):N}^{\text{ols}} 
	- \sum_{k = 1}^p \hat{\phi}_k \boldsymbol\epsilon_{(p+1-k):(N-k)}^{\text{ols}} \right\} 
	+ O_P\left( \frac{1}{N} \right)\\
=&      ~\hat{\gamma}(0) - 2\sum_{j=1}^p \hat{\phi}_j \hat{\gamma}(j)
	+ \sum_{j=1}^p \sum_{k=1}^p \hat{\phi}_j \hat{\phi}_k \hat{\gamma}(|j-k|)
	+ O_P\left( \frac{1}{N} \right)\\
=&      ~\hat\gamma(0) - 2\hat{\boldsymbol\gamma}_p' \hat{\boldsymbol\phi} 
	+\hat{\boldsymbol\phi} ' 
	\hat{\boldsymbol\Gamma}_p\hat{\boldsymbol\phi} + O_P\left( \frac{1}{N} \right)\\
=&	~\hat\gamma(0) - \hat{\boldsymbol\gamma}_p' 
	\hat{\boldsymbol\Gamma}_p^{-1}\hat{\boldsymbol\gamma}_p
	+ O_P\left( \frac{1}{N} \right),
\end{align*}
which is the right hand side of \eqref{eq:sigmasq2}. 
%\qed
\end{proof}

%%%%%%%%%%%%%%%%%%%%%%%%%%%%%%%%%%%%%%%%%%%%%%%%
Under any model $\boldsymbol\lambda$, Lemma \ref{lemma:gamma_h} shows 
that the Yule-Walker estimator $\hat{\sigma}^2_{\text{YW}}$ converges to
\begin{equation}
\label{eq:f}
f(\delta^2)\stackrel{\text{def}}{=} 
~\gamma(0) + \delta^2 
	- \left(\boldsymbol\gamma_p + \delta^2 \mathbf{1}_p \right)'
	\left(\boldsymbol\Gamma_p + \delta^2 \mathbf{J}_p \right)^{-1}
	\left(\boldsymbol\gamma_p + \delta^2 \mathbf{1}_p \right),
\end{equation}
at rate $O_P(1/\sqrt{N})$. We define the limit in (\ref{eq:f}) as 
$f(\delta^2)$, emphasizing dependence on $\delta^2$. By Lemma 
\ref{lemma:sigmasq1}, the asymptotic behavior of the BMDL estimator 
$\hat{\sigma}^2$ can be summarized in the following proposition.

\begin{proposition}\label{prop:sigmasq2} Under any relative changepoint 
configuration $\boldsymbol\lambda \in \boldsymbol\Lambda$, the BMDL 
estimator of the white noise variance in $\hat{\sigma}^2$ 
\eqref{eq:sigmasq_hat_BMDL} obeys
\begin{equation}
\label{eq:sigmasq3}
\hat{\sigma}^2= f(\delta^2) + O_P\left( \frac{1}{\sqrt{N}} \right),
\end{equation}
where $f(\delta^2)$ is defined in \eqref{eq:f}. Furthermore, $f(\delta^2)$ 
strictly increases in $\delta^2$. 
\end{proposition}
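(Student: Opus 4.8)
The plan is to derive \eqref{eq:sigmasq3} by assembling the preceding lemmas, and then to establish strict monotonicity of $f$ by a direct differentiation that, after a Sherman--Morrison reduction, exhibits $f'$ as a perfect square.

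For \eqref{eq:sigmasq3}, Lemma~\ref{lemma:sigmasq1} already gives $\hat{\sigma}^2 = \hat{\sigma}^2_{\text{YW}} + O_P(1/N)$, so it suffices to analyze the Yule--Walker estimator \eqref{eq:sigmasq_hat_YW}. Lemma~\ref{lemma:gamma_h} supplies $\hat{\gamma}(h) = \gamma(h) + \delta^2 + O_P(1/\sqrt{N})$ for each $h \in \{0, 1, \ldots, p\}$, so that $\hat{\gamma}(0)$, $\hat{\boldsymbol\gamma}_p$, and $\hat{\boldsymbol\Gamma}_p$ converge to $\gamma(0) + \delta^2$, $\boldsymbol\gamma_p + \delta^2\mathbf{1}_p$, and $\boldsymbol\Gamma_p + \delta^2\mathbf{J}_p$, respectively, all at rate $O_P(1/\sqrt{N})$. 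Since $\boldsymbol\Gamma_p$ is positive definite and $\delta^2\mathbf{J}_p$ is positive semidefinite, the limiting matrix $\boldsymbol\Gamma_p + \delta^2\mathbf{J}_p$ is invertible; hence inversion and the quadratic form are continuously differentiable (indeed locally Lipschitz) near the limit, and the $O_P(1/\sqrt{N})$ errors propagate through this smooth map by a first-order expansion to yield $\hat{\sigma}^2_{\text{YW}} = f(\delta^2) + O_P(1/\sqrt{N})$. The slower $O_P(1/\sqrt{N})$ rate absorbs the $O_P(1/N)$ remainder, giving \eqref{eq:sigmasq3}.

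For monotonicity I would set $x = \delta^2$, $M(x) = \boldsymbol\Gamma_p + x\mathbf{J}_p$, and $v(x) = \boldsymbol\gamma_p + x\mathbf{1}_p$, so that $f(x) = \gamma(0) + x - v(x)' M(x)^{-1} v(x)$. Using $\dot{v} = \mathbf{1}_p$, $\dot{M} = \mathbf{J}_p = \mathbf{1}_p\mathbf{1}_p'$, and $\tfrac{d}{dx} M^{-1} = -M^{-1}\mathbf{J}_p M^{-1}$, differentiation of the quadratic form gives $\tfrac{d}{dx}\{v' M^{-1} v\} = 2a(x) - a(x)^2$, where $a(x) = \mathbf{1}_p' M(x)^{-1} v(x)$ and the cross term collapses to $a(x)^2$ by symmetry of $M^{-1}$. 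Consequently $f'(x) = 1 - 2a(x) + a(x)^2 = (1 - a(x))^2 \geq 0$, and strict positivity reduces to the scalar claim $a(x) \neq 1$.

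Finally, Sherman--Morrison applied to $M(x) = \boldsymbol\Gamma_p + x\mathbf{1}_p\mathbf{1}_p'$ gives $\mathbf{1}_p' M(x)^{-1} = (1 + xs)^{-1}\mathbf{1}_p'\boldsymbol\Gamma_p^{-1}$ with $s = \mathbf{1}_p'\boldsymbol\Gamma_p^{-1}\mathbf{1}_p > 0$, whence $a(x) = (c + xs)/(1 + xs)$ and $1 - a(x) = (1 - c)/(1 + xs)$, where $c = \mathbf{1}_p'\boldsymbol\Gamma_p^{-1}\boldsymbol\gamma_p = \sum_{j=1}^p \phi_j^0$ by the Yule--Walker identity $\boldsymbol\phi^0 = \boldsymbol\Gamma_p^{-1}\boldsymbol\gamma_p$. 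The one substantive point is that $c < 1$: because the AR$(p)$ errors are causal, the polynomial $\Phi(z) = 1 - \sum_{j=1}^p \phi_j^0 z^j$ has no zeros in $|z| \leq 1$, and since $\Phi(0) = 1 > 0$ this forces $\Phi(1) = 1 - c > 0$. Hence $1 - a(x) = (1-c)/(1+xs) > 0$ for all $x \geq 0$, and $f'(x) = (1-c)^2/(1+xs)^2 > 0$, proving strict monotonicity. I expect the crux to be recognizing the perfect-square identity $f' = (1-a)^2$, which reduces the entire matrix problem to the single causality inequality $\sum_j \phi_j^0 < 1$; everything else is routine.
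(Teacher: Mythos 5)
Your proof is correct and follows essentially the same route as the paper's: the expansion \eqref{eq:sigmasq3} is assembled from Lemma \ref{lemma:gamma_h} and Lemma \ref{lemma:sigmasq1} exactly as in the text preceding the proposition, and the monotonicity argument reduces, via the same rank-one (Sherman--Morrison) inversion of $\boldsymbol\Gamma_p + \delta^2\mathbf{J}_p$, to the identical formula $f'(\delta^2) = (1-b)^2/(1+a\delta^2)^2$ with $a = \mathbf{1}_p'\boldsymbol\Gamma_p^{-1}\mathbf{1}_p$ and $b = \mathbf{1}_p'\boldsymbol\Gamma_p^{-1}\boldsymbol\gamma_p = \sum_{k=1}^p\phi_k$, your only stylistic departure being that you obtain the perfect square by differentiating the quadratic form in matrix calculus rather than by the paper's longer scalar expansion. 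Your handling of the strict inequality is in fact more careful than the paper's: causality gives $\Phi(1) = 1 - \sum_{k=1}^p\phi_k > 0$, i.e.\ $b < 1$, whereas the paper's proof asserts $b = \sum_{k=1}^p\phi_k > 1$, which is a sign slip (only $b \neq 1$ is needed for $(b-1)^2 > 0$, and the correct consequence of causality is the one you state).
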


\begin{proof} We show that $f(\delta^2)$ strictly increases in 
$\delta^2$. According to (2.22) in \citet[pp.\ 
428]{Harville_2008}, for any matrices $\mathbf{R}\in \mathbb{R}^{r 
\times r}, \mathbf{S}\in \mathbb{R}^{r \times l}, \mathbf{T}\in 
\mathbb{R}^{l \times l}, \mathbf{U}\in \mathbb{R}^{l \times r}$ with 
$\mathbf{R}, \mathbf{U}$ non-singular, $(\mathbf{R} + 
\mathbf{S}\mathbf{T}\mathbf{U})^{-1} = \mathbf{R}^{-1} - 
\mathbf{R}^{-1}\mathbf{S} (\mathbf{T}^{-1} + 
\mathbf{U}\mathbf{R}^{-1}\mathbf{S})^{-1} \mathbf{U}\mathbf{R}^{-1}$. 
Hence, for $\delta^2>0$, 
\begin{equation}
\label{eq:inv_matrix}
\left(\boldsymbol\Gamma_p + \delta^2 \mathbf{J}_p \right)^{-1}
=	\left(\boldsymbol\Gamma_p + \mathbf{1}_p 
	\delta^2  \mathbf{1}_p' \right)^{-1}
=	\boldsymbol\Gamma_p^{-1} - \boldsymbol\Gamma_p^{-1}\mathbf{1}_p
	\left(\frac{1}{\delta^2} + \mathbf{1}_p'
	\boldsymbol\Gamma_p^{-1}\mathbf{1}_p\right)^{-1}
	\mathbf{1}_p'\boldsymbol\Gamma_p^{-1}.
\end{equation}
For notational simplicity, denote the following scalars by
\begin{equation}
\label{eq:ab}
a \stackrel{\text{def}}{=}
	\mathbf{1}_p' \boldsymbol\Gamma_p^{-1}\mathbf{1}_p, \quad
b \stackrel{\text{def}}{=}
	\mathbf{1}_p' \boldsymbol\Gamma_p^{-1}\boldsymbol\gamma_p=\sum_{k=1}^p \phi_k.
\end{equation} 
Then $f(\delta^2)$ can be expanded as
\[
f(\delta^2) 
= \gamma(0) + \delta^2 - 
	\boldsymbol\gamma_p' \boldsymbol\Gamma_p^{-1}\boldsymbol\gamma_p
	- 2b\delta^2  - a(\delta^2)^2 + \frac{b^2}{\frac{1}{\delta^2}+a}
	+ \frac{2ab\delta^2}{\frac{1}{\delta^2}+a} + \frac{a^2(\delta^2)^2}{\frac{1}{\delta^2}+a}.
\]
Differentiation of $f(\delta^2)$ with respect to $\delta^2$ gives
\[
f'(\delta^2)
=	1-2b-2a\delta^2 + \frac{b^2 \frac{1}{(\delta^2)^2}}{\left(\frac{1}{\delta^2}+a\right)^2}
	+ \frac{2ab\left( \frac{2}{\delta^2} + a\right)}{\left(\frac{1}{\delta^2}+a\right)^2}
	+ \frac{a^2\left( 3 + 
2a\delta^2\right)}{\left(\frac{1}{\delta^2}+a\right)^2} =	
\frac{(b-1)^2}{(1 + a\delta^2)^2} > 0. 
\] 
The strict inequality follows from causality of the AR($p$) errors, 
which implies that $b = \sum_{k=1}^p \phi_k > 1$. Therefore, 
$f(\delta^2)$ is strictly increasing in $\delta^2$ and $f(0)=\sigma^2$. 
%\qed 
\end{proof}

%%%%%%%%%%%%%%%%%%%%%%%%%%%%%%%%%%%%%%%%%%%%%%%%
\subsection{Asymptotic behavior of the BMDL in \eqref{eq:BMDL_univariate}}
\label{sec:asym_pairwise_BMDL}
Recall that under the relative changepoint model $\boldsymbol\lambda$, 
its BMDL in \eqref{eq:BMDL_univariate} is
\begin{align*}
\text{BMDL}(\boldsymbol\lambda) = ~&
	 \frac{N-p}{2}\log \left( \hat{\sigma}^2 \right) + \frac{m}{2}\log(\nu) 
	+ \frac{1}{2}\log\left( \left|\widehat{\mathbf{D}}' 
	\widehat{\mathbf{D}}+\frac{\mathbf{I}_{m}}{\nu}\right| \right)\\ \nonumber
&	-\sum_{k=1}^2\log\left\{\Gamma\left(a + m^{(k)}\right) 
\Gamma\left(b^{(k)} + N^{(k)} - m^{(k)}\right) \right\}.
\end{align*}
The next two lemmas quantify the asymptotic behavior of the third and forth 
terms in the above BMDL formula, respectively.

%%%%%%%%%%%%%%%%%%%%%%%%%%%%%%%%%%%%%%%%%%%%%%%%
\begin{lemma}\label{lemma:det} Under any changepoint model 
$\boldsymbol\lambda \in \boldsymbol\Lambda$ with $m>0$,
\begin{equation}
\label{eq:det}
\frac{1}{2}\log\left( \left|\widehat{\mathbf{D}}' 
	\widehat{\mathbf{D}}+\frac{\mathbf{I}_{m}}{\nu}\right| \right)
= \frac{1}{2}\sum_{r=2}^{m+1}\log(N_r) -m \log \left( 1- \sum_{k=1}^p \hat{\phi}_k \right)
	+ O_P\left(\frac{1}{N}\right).
\end{equation}
\end{lemma}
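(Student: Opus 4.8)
The plan is to exploit the near-diagonal structure of $\widehat{\mathbf{D}}'\widehat{\mathbf{D}}$ that is already implicit in \eqref{eq:D_hat_Dj}. Writing $c=1-\sum_{k=1}^p\hat\phi_k$, that identity with $j=0$ reads $\widehat{\mathbf{D}}=c\,(\mathbf{D}_0-\boldsymbol\Delta_0)$, where $\mathbf{D}_0=\mathbf{D}_{(p+1):N}$ is the clean regime-indicator matrix (each row carries a single unit entry marking the regime of that time, so the columns index regimes $2,\dots,m+1$) and $\boldsymbol\Delta_0$ is sparse, having only $O(mp)$ nonzero rows with $O_P(1)$ entries (bounded because $\hat{\boldsymbol\phi}$ is bounded in probability). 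Hence $\mathbf{D}_0'\mathbf{D}_0=\operatorname{diag}(N_2,\dots,N_{m+1})$ exactly, its columns having disjoint supports of sizes $N_2,\dots,N_{m+1}$. First I would substitute this representation to reduce the target determinant to that of a diagonal-plus-bounded matrix.

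Carrying out the expansion, $\widehat{\mathbf{D}}'\widehat{\mathbf{D}}=c^2(\mathbf{D}_0-\boldsymbol\Delta_0)'(\mathbf{D}_0-\boldsymbol\Delta_0)$. The cross terms $\mathbf{D}_0'\boldsymbol\Delta_0$, $\boldsymbol\Delta_0'\mathbf{D}_0$, and $\boldsymbol\Delta_0'\boldsymbol\Delta_0$ receive contributions only from the $O(mp)$ nonzero rows of $\boldsymbol\Delta_0$ and are therefore $O_P(1)$, while $\mathbf{D}_0'\mathbf{D}_0=\operatorname{diag}(N_2,\dots,N_{m+1})$ contributes the regime lengths with vanishing off-diagonals. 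Adding the fixed matrix $\mathbf{I}_m/\nu=O(1)$ yields
\[
\widehat{\mathbf{D}}'\widehat{\mathbf{D}}+\frac{\mathbf{I}_m}{\nu}=c^2\boldsymbol\Lambda+\mathbf{P},\qquad \boldsymbol\Lambda=\operatorname{diag}(N_2,\dots,N_{m+1}),\quad \mathbf{P}=O_P(1),
\]
where each diagonal entry of $\boldsymbol\Lambda$ grows linearly in $N$, since $N_r\ge dN$.

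Since $c$ converges to a nonzero limit and is bounded away from zero in probability (by Proposition \ref{prop:YW_phi} together with causality, exactly as invoked for the denominator in the proof of Lemma \ref{lemma:sigmasq1}), I factor $\widehat{\mathbf{D}}'\widehat{\mathbf{D}}+\mathbf{I}_m/\nu=c^2\boldsymbol\Lambda\,(\mathbf{I}_m+c^{-2}\boldsymbol\Lambda^{-1}\mathbf{P})$ and split
\[
\tfrac12\log\left|\widehat{\mathbf{D}}'\widehat{\mathbf{D}}+\tfrac{\mathbf{I}_m}{\nu}\right|=\tfrac12\log\left|c^2\boldsymbol\Lambda\right|+\tfrac12\log\left|\mathbf{I}_m+c^{-2}\boldsymbol\Lambda^{-1}\mathbf{P}\right|.
\]
The first piece supplies both the leading $\tfrac12\sum_{r=2}^{m+1}\log N_r$ and the $\log(1-\sum_{k=1}^p\hat\phi_k)$ correction appearing in \eqref{eq:det}, the latter arising from the scalar factor $c^{2m}$ extracted from the $m\times m$ determinant. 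For the second piece, $\boldsymbol\Lambda^{-1}$ has $O(1/N)$ entries while $c^{-2}=O_P(1)$ and $\mathbf{P}=O_P(1)$, so its argument is $\mathbf{I}_m+O_P(1/N)$; expanding $\log\det(\mathbf{I}_m+\mathbf{Y})=\operatorname{tr}\mathbf{Y}+O(\|\mathbf{Y}\|^2)$ for the fixed dimension $m\le M$ gives $O_P(1/N)$, which establishes \eqref{eq:det}.

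The main obstacle is the bookkeeping that pins the perturbation at exactly $O_P(1)$ (rather than something growing with $N$) and then at $O_P(1/N)$ after dividing by the linearly growing diagonal. This rests on two points: the sparsity of $\boldsymbol\Delta_0$ (since $\mathbf{D}_j$ and $\mathbf{D}_0$ differ only within $p$ steps of each of the $m$ changepoints, $\boldsymbol\Delta_0$ has a bounded number of nonzero rows with $O_P(1)$ entries, which keeps every $\boldsymbol\Delta_0$ cross term $O_P(1)$), and the control of $c$ away from zero so that $c^{-2}$ does not inflate the remainder, which I would obtain from the Yule-Walker limit in Proposition \ref{prop:YW_phi}. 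Once these two facts are in hand, the determinant factorization and the $\log\det(\mathbf{I}_m+\mathbf{Y})$ expansion are routine.
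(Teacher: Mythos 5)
Your proposal is essentially the paper's own argument: the paper likewise uses the identity \eqref{eq:D_hat_Dj} to write $\widehat{\mathbf{D}}$ as $\bigl(1-\sum_{k=1}^p\hat{\phi}_k\bigr)$ times the clean indicator matrix plus a sparse $O_P(1)$ correction, exploits $\mathbf{D}'\mathbf{D}=\operatorname{diag}(N_2,\ldots,N_{m+1})$, and absorbs the bounded perturbation (including $\mathbf{I}_m/\nu$) into an $O_P(1/N)$ error after normalizing the $m\times m$ determinant; your $\log\det(\mathbf{I}_m+\mathbf{Y})=\operatorname{tr}\mathbf{Y}+O(\|\mathbf{Y}\|^2)$ step just makes that last reduction explicit. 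One caveat: your (correct) algebra gives $\tfrac12\log|c^2\boldsymbol\Lambda| = \tfrac12\sum_{r}\log N_r + m\log\bigl(1-\sum_k\hat{\phi}_k\bigr)$, i.e., the \emph{plus} sign on the $m\log(1-\sum_k\hat{\phi}_k)$ term, whereas \eqref{eq:det} states a minus sign; the paper's proof arrives at the minus sign by writing $\widehat{\mathbf{D}}'\widehat{\mathbf{D}}/N = \mathbf{D}'\mathbf{D}/\{N(1-\sum_k\hat{\phi}_k)^2\}+O_P(1/N)$, which places the factor $(1-\sum_k\hat{\phi}_k)^2$ in the denominator rather than the numerator and is inconsistent with \eqref{eq:D_hat_Dj} as used in Lemma \ref{lemma:sigmasq1}. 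Since this term is $O_P(1)$ regardless of sign, nothing downstream (e.g., \eqref{eq:det2} or Theorem \ref{prop:BMDL_MDL}) is affected, but you should not assert that your first piece reproduces the sign printed in \eqref{eq:det} without remarking on the discrepancy.
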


\begin{proof} By \eqref{eq:D_hat_Dj} and the corresponding results in 
the proof of Lemma \ref{lemma:sigmasq1}, as $N \rightarrow \infty$,

\[
\frac{\widehat{\mathbf{D}}' \widehat{\mathbf{D}}}{N}+\frac{\mathbf{I}_{m}}{N\nu}
= 	\frac{\widehat{\mathbf{D}}' \widehat{\mathbf{D}}}{N} + O\left(\frac{1}{N}\right) 
=	\frac{\mathbf{D}' \mathbf{D}}{N\left( 1- \sum_{k=1}^p \hat{\phi}_k \right)^2}
	 + O_P\left(\frac{1}{N}\right).
\]
The determinant of the $m \times m$ matrix (of finite dimension) is then
\begin{align*}
\log\left( \left|\widehat{\mathbf{D}}' 
	\widehat{\mathbf{D}}+\frac{\mathbf{I}_{m}}{\nu}\right| \right)
& = m \log(N) + \log\left(\left| \frac{\widehat{\mathbf{D}}' \widehat{\mathbf{D}}}{N}
	+\frac{\mathbf{I}_{m}}{N\nu} \right| \right)\\
& = m \log(N) + \log\left(\frac{\left|\mathbf{D}' \mathbf{D}\right|}
	{N^m\left( 1- \sum_{k=1}^p \hat{\phi}_k \right)^{2m}} \right)+ O_P\left(\frac{1}{N}\right)\\
& = \log\left( \left|\mathbf{D}' \mathbf{D}\right|\right)
	-2m \log \left( 1- \sum_{k=1}^p \hat{\phi}_k \right) + O_P\left(\frac{1}{N}\right)\\
& = \log\left( \prod_{r=2}^{m+1} N_r\right)
	-2m \log \left( 1- \sum_{k=1}^p \hat{\phi}_k \right) + O_P\left(\frac{1}{N}\right),
\end{align*}
and \eqref{eq:det} follows immediately. 
%\qed
\end{proof}

Since $N_r = O(N)$ for all $r \in \{ 2, \ldots, m+1\}$, Lemma \ref{lemma:det} 
implies that for any changepoint model $\boldsymbol\lambda$,
\begin{equation}
\label{eq:det2}
\frac{1}{2}\log\left( \left|\widehat{\mathbf{D}}' 
	\widehat{\mathbf{D}}+\frac{\mathbf{I}_{m}}{\nu}\right| \right)
= \frac{m}{2}\log(N) + O_P\left(1\right).
\end{equation}

%%%%%%%%%%%%%%%%%%%%%%%%%%%%%%%%%%%%%%%%%%%%%%%%
\begin{lemma}\label{lemma:Gamma_pairdiff} Suppose that both the number 
of documented and undocumented times increases linearly with $N$, i.e., 
$N^{(k)} = O(N)$, for $k = 1, 2$. Then under any two changepoint models 
$\boldsymbol\lambda_1, \boldsymbol\lambda_2 \in \boldsymbol\Lambda$, 
whose total number of changepoints are $m_1, m_2$, respectively, the 
pairwise difference of the last term in the BMDL formula 
\eqref{eq:BMDL_univariate} is
\begin{align} \nonumber
& -\sum_{k=1}^2\left[\log\left\{\Gamma\left(a + m_1^{(k)}\right) 
\Gamma\left(b^{(k)} + N^{(k)} - m_1^{(k)}\right) \right\} \right.\\ \nonumber
&~~~~~~~~~\left. -\log\left\{\Gamma\left(a + m_2^{(k)}\right) 
\Gamma\left(b^{(k)} + N^{(k)} - m_2^{(k)}\right) \right\} \right]\\ \label{eq:Gamma_pairdiff} 
=~&  (m_1 - m_2) \log(N) + O_P(1).
\end{align}
\end{lemma}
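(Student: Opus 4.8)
The plan is to isolate, within each Gamma-function term, the part whose argument stays bounded from the part whose argument grows with $N$, and to show that only the latter contributes at the $\log N$ scale. Writing the summand for a model with $m_i^{(k)}$ changepoints of type $k$ as $\log\Gamma(a+m_i^{(k)})+\log\Gamma(b^{(k)}+N^{(k)}-m_i^{(k)})$, I first note that because every $\boldsymbol\lambda \in \boldsymbol\Lambda$ has at most $M$ changepoints, each count $m_i^{(k)}$ is a non-negative integer bounded by $M$ uniformly in $N$. Hence $a+m_i^{(k)}$ lies in the fixed compact interval $[a,a+M]$, on which $\log\Gamma$ is bounded, so $\log\Gamma(a+m_1^{(k)})$, $\log\Gamma(a+m_2^{(k)})$, and their difference are $O(1)$. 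All of the $\log N$ behavior must therefore come from the factors $\Gamma(b^{(k)}+N^{(k)}-m_i^{(k)})$.

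The key step is to compute the ratio of these large-argument Gamma factors exactly using the recurrence $\Gamma(w+1)=w\,\Gamma(w)$. Fix $k$ and set $z=b^{(k)}+N^{(k)}$; assume without loss of generality that $m_2^{(k)}\ge m_1^{(k)}$ (the opposite case follows by symmetry, only flipping a sign). Since the two counts differ by the integer $n:=m_2^{(k)}-m_1^{(k)}$, iterating the recurrence $n$ times gives
\[
\frac{\Gamma\!\left(z-m_1^{(k)}\right)}{\Gamma\!\left(z-m_2^{(k)}\right)}
= \prod_{j=0}^{n-1}\left(z-m_2^{(k)}+j\right),
\]
a product of exactly $n\le M$ factors. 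Under the hypothesis $N^{(k)}=O(N)$ together with the fact that documented and undocumented times grow linearly (so that $N^{(k)}$ is of exact order $N$), each factor equals $N^{(k)}+O(1)$, bounded above and below by constant multiples of $N$. Taking logarithms and using $\log\!\left(z-m_2^{(k)}+j\right)=\log N^{(k)}+O(1/N)=\log N+O(1)$ for each of the $n$ bounded-in-number factors yields
\[
\log\Gamma\!\left(z-m_1^{(k)}\right)-\log\Gamma\!\left(z-m_2^{(k)}\right)
= \left(m_2^{(k)}-m_1^{(k)}\right)\log N + O(1).
\]

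Finally I would assemble the pieces. Combining the two displays, the type-$k$ contribution to the bracketed difference in \eqref{eq:Gamma_pairdiff} is $\left(m_2^{(k)}-m_1^{(k)}\right)\log N+O(1)$; prefixing the overall minus sign and summing over $k=1,2$ gives $-\sum_{k=1}^2\left(m_2^{(k)}-m_1^{(k)}\right)\log N+O(1)=(m_1-m_2)\log N+O(1)$, using $m_i=m_i^{(1)}+m_i^{(2)}$. Since the models $\boldsymbol\lambda_1,\boldsymbol\lambda_2$, the metadata split, and hence all counts are fixed for a given $N$, the remainder is a genuine $O(1)$, which is in particular $O_P(1)$, establishing \eqref{eq:Gamma_pairdiff}. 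I expect the only real subtlety to be the bookkeeping that keeps each of the finitely many factors $z-m_2^{(k)}+j$ of exact order $N$; this is precisely where the linear-growth assumption $N^{(k)}=\Theta(N)$ on both documented and undocumented times enters, and where one must invoke the boundedness $m_i^{(k)}\le M$ supplied by $\boldsymbol\Lambda$ to control both the number of factors and the bounded-argument Gamma terms.
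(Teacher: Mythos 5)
Your proof is correct, and it reaches the same target as the paper — reducing everything to the ratio of the large-argument Gamma factors $\Gamma(b^{(k)}+N^{(k)}-m_1^{(k)})/\Gamma(b^{(k)}+N^{(k)}-m_2^{(k)})$ and showing it is of exact order $N^{m_2^{(k)}-m_1^{(k)}}$ — but by a different computation. The paper invokes Stirling's formula and then approximates the resulting ratio of powers by $(N/e)^{m_1^{(k)}-m_2^{(k)}}$, which quietly replaces $N^{(k)}$ by $N$ and absorbs the discrepancy (a factor $(N^{(k)}/N)^{m_1^{(k)}-m_2^{(k)}}=O(1)$ under the linear-growth hypothesis) into the $O_P(1)$ term. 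You instead exploit that $m_2^{(k)}-m_1^{(k)}$ is an integer bounded by $M$ and telescope the recurrence $\Gamma(w+1)=w\,\Gamma(w)$, obtaining the exact identity
\[
\frac{\Gamma\left(z-m_1^{(k)}\right)}{\Gamma\left(z-m_2^{(k)}\right)}=\prod_{j=0}^{n-1}\left(z-m_2^{(k)}+j\right),
\]
after which the estimate $\log\left(z-m_2^{(k)}+j\right)=\log N+O(1)$ for each of the boundedly many factors is elementary. This is arguably cleaner: it avoids asymptotic approximation entirely until the very last step, makes explicit where the hypotheses $N^{(k)}=\Theta(N)$ and $m_i^{(k)}\le M$ enter, and handles the bounded-argument factors $\Gamma(a+m_i^{(k)})$ by a direct compactness remark that the paper leaves implicit. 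The only cost is that your argument relies on the integrality of the changepoint counts, whereas Stirling would also cover non-integer offsets; for this lemma that generality is not needed.
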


\begin{proof}
The left hand side of \eqref{eq:Gamma_pairdiff} can be simplified to
\begin{equation}\label{eq:Gamma_pairdiff2}
\sum_{k=1}^2\log \left\{\frac{\Gamma\left(a + m_2^{(k)}\right)
	\Gamma\left(b^{(k)} + N^{(k)} - m_2^{(k)}\right)}
	{\Gamma\left(a + m_1^{(k)}\right)
	\Gamma\left(b^{(k)} + N^{(k)} - m_1^{(k)} \right)}\right\}.
\end{equation}
Stirling's formula quantifies the asymptotic limit of the following Gamma 
function ratio:
\begin{align*}
\frac{\Gamma\left(b^{(k)} + N^{(k)} - m_2^{(k)}\right)}
{\Gamma\left(b^{(k)} + N^{(k)} - m_1^{(k)} \right)} 
& \approx~	 
	e^{m_2^{(k)} - m_1^{(k)}}~\frac{\left(b^{(k)} + N^{(k)} - m_2^{(k)}-1\right)^{b^{(k)} + N^{(k)} - m_2^{(k)} - 1/2}}
	{\left(b^{(k)} + N^{(k)} - m_1^{(k)}-1 \right)^{b^{(k)} + N^{(k)} - m_1^{(k)} - 1/2}}\\
& \approx~	
	\left(\frac{N}{e}\right)^{m_1^{(k)} - m_2^{(k)}}.
\end{align*}
Therefore, \eqref{eq:Gamma_pairdiff2} equals $(m_1 - m_2) \log N + 
O_P(1)$. 
%\qed 
\end{proof}

%%%%%%%%%%%%%%%%%%%%%%%%%%%%%%%%%%%%%%%%%%%%%%%%
The asymptotic behavior of the BMDL is now established in the following two 
propositions. They consider the pairwise difference of BMDLs between the 
true model $\boldsymbol\lambda^0$ and another changepoint model 
$\boldsymbol\lambda$. Proposition \ref{prop:pairwise_BMDL_OpN} considers 
the case where the model $\boldsymbol\lambda$ does not contain all 
relative changepoints in $\boldsymbol\lambda^0$, i.e., 
$\boldsymbol{\lambda} \not\supset \boldsymbol{\lambda}^0$, whereas 
Proposition \ref{prop:pairwise_BMDL_OplogN} considers the case where 
$\boldsymbol{\lambda} \supset \boldsymbol{\lambda}^0$, i.e., 
$\boldsymbol\lambda$ contains all relative changepoints in 
$\boldsymbol\lambda^0$, and also may have some redundant changepoints.

%%%%%%%%%%%%%%%%%%%%%%%%%%%%%%%%%%%%%%%%%%%%%%%%
\begin{proposition}\label{prop:pairwise_BMDL_OpN} For any relative 
changepoint configuration $\boldsymbol{\lambda} \in 
\boldsymbol{\Lambda}$, if $\boldsymbol{\lambda} \not\supset 
\boldsymbol{\lambda}^0$, then as $N \rightarrow \infty$,
\[
\text{BMDL}\left(\boldsymbol\lambda\right) 
> \text{BMDL}\left(\boldsymbol\lambda^0\right), \quad
\text{BMDL}\left(\boldsymbol\lambda\right) 
-\text{BMDL}\left(\boldsymbol\lambda^0\right) = O_P(N).
\]
\end{proposition}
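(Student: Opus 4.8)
The plan is to isolate the goodness-of-fit term $\frac{N-p}{2}\log(\hat\sigma^2)$ as the unique driver of the BMDL difference and to show that it grows linearly in $N$ with a strictly positive coefficient whenever $\boldsymbol\lambda \not\supset \boldsymbol\lambda^0$. Writing $\hat\sigma^2$ and $\hat\sigma^2_0$ for the BMDL variance estimators under $\boldsymbol\lambda$ and $\boldsymbol\lambda^0$ respectively, and subtracting the two instances of \eqref{eq:BMDL_univariate}, I would split $\text{BMDL}(\boldsymbol\lambda) - \text{BMDL}(\boldsymbol\lambda^0)$ into four pieces: the variance piece $\frac{N-p}{2}\{\log(\hat\sigma^2) - \log(\hat\sigma^2_0)\}$; the $\frac{1}{2}(m - m^0)\log\nu$ piece; the determinant piece; and the Gamma piece. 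The last three will be shown to be of lower order, so that the sign and the rate are inherited entirely from the variance piece.

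The heart of the argument is the sign of the variance piece, and this is where the hypothesis $\boldsymbol\lambda \not\supset \boldsymbol\lambda^0$ enters. By the remark following \eqref{eq:delta_sq_average}, $\boldsymbol\lambda \not\supset \boldsymbol\lambda^0$ forces $\delta^2 > 0$, and since $f$ is strictly increasing with $f(0) = \sigma^2$ (Proposition \ref{prop:sigmasq2}), the limit $f(\delta^2)$ strictly exceeds $\sigma^2$. Because $\boldsymbol\lambda$ is held fixed, $\delta^2$ and $f(\delta^2)$ are positive constants not depending on $N$. Proposition \ref{prop:sigmasq2} gives $\hat\sigma^2 = f(\delta^2) + O_P(1/\sqrt N)$ and $\hat\sigma^2_0 = \sigma^2 + O_P(1/\sqrt N)$; since both estimators converge to strictly positive constants, a first-order expansion of the logarithm yields $\log(\hat\sigma^2) - \log(\hat\sigma^2_0) = \log\{f(\delta^2)/\sigma^2\} + O_P(1/\sqrt N)$. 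Multiplying by $(N-p)/2$ then produces $\frac{N-p}{2}\log\{f(\delta^2)/\sigma^2\} + O_P(\sqrt N)$, a term of exact order $N$ with a strictly positive leading coefficient.

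It remains to check that the other three pieces are negligible against this $\Theta(N)$ growth. The $\nu$-piece equals $\frac{1}{2}(m - m^0)\log\nu = O(1)$ since $m, m^0 \le M$; the determinant piece is $\frac{1}{2}(m - m^0)\log(N) + O_P(1) = O_P(\log N)$ by \eqref{eq:det2}; and the Gamma piece is $-(m - m^0)\log(N) + O_P(1) = O_P(\log N)$ by Lemma \ref{lemma:Gamma_pairdiff} (here one invokes the hypothesis $N^{(k)} = O(N)$). Summing the four pieces gives $\text{BMDL}(\boldsymbol\lambda) - \text{BMDL}(\boldsymbol\lambda^0) = \frac{N-p}{2}\log\{f(\delta^2)/\sigma^2\} + O_P(\sqrt N)$, which tends to $+\infty$ in probability and, after division by $N$, converges in probability to the positive constant $\frac{1}{2}\log\{f(\delta^2)/\sigma^2\}$; this delivers both the strict inequality for large $N$ and the $O_P(N)$ rate. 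The main obstacle is the logarithm linearization: I must ensure that the $O_P(1/\sqrt N)$ fluctuations inside each logarithm, once amplified by the factor $N$, contribute only $O_P(\sqrt N)$ and cannot swamp the constant $\log\{f(\delta^2)/\sigma^2\}$. This is guaranteed precisely because $f(\delta^2)$ and $\sigma^2$ are bounded away from zero, so the logarithm is Lipschitz in a fixed neighborhood of the two limits.
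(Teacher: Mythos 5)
Your proof is correct and follows essentially the same route as the paper's: decompose the BMDL difference, use Proposition \ref{prop:sigmasq2} (strict monotonicity of $f$ with $f(0)=\sigma^2$) to show the goodness-of-fit term grows like $\frac{N}{2}\log\{f(\delta^2)/\sigma^2\}>0$, and control the remaining terms at $O_P(\log N)$ via \eqref{eq:det2} and Lemma \ref{lemma:Gamma_pairdiff}. The only blemish is a harmless sign slip on the Gamma piece (Lemma \ref{lemma:Gamma_pairdiff} gives $+(m-m^0)\log N$, not $-(m-m^0)\log N$), which does not affect the argument since you only need that term to be $O_P(\log N)$.
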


\begin{proof}
In this proof, when necessary, subscripts $\boldsymbol\lambda$ and 
$\boldsymbol\lambda^0$ are used to distinguish the same terms under 
different models. By \eqref{eq:det2} and \eqref{eq:Gamma_pairdiff}, the 
difference between BMDLs in the (non-true) model $\boldsymbol\lambda$ 
and the true model $\boldsymbol\lambda^0$ is asymptotically
\begin{align}
\label{eq:diff_MDL} \nonumber
&	\text{BMDL}\left(\boldsymbol\lambda\right) 
	-\text{BMDL}\left(\boldsymbol\lambda^0\right) \\
=~&      \frac{N-p}{2}\log \left(\frac{\hat{\sigma}^2_{\boldsymbol\lambda}}
	{\hat{\sigma}^2_{\boldsymbol\lambda^0}}\right)
	+ \frac{3(m-m^0)}{2}  \log (N) + O_P\left(1\right)	\\ \label{eq:diff_MDL2}
=~&      \frac{N-p}{2}\log \left\{\frac{f(\delta^2_{\boldsymbol\lambda})+ O_P\left(\frac{1}{\sqrt{N}}\right)}
	{f(0)+ O_P\left(\frac{1}{\sqrt{N}}\right)}\right\}
	+ \frac{3(m-m^0)}{2} \log (N) + O_P\left(1\right)	. 
\end{align}
Here, the last equality is justified via Proposition 
\ref{prop:sigmasq2}. For the model $\boldsymbol{\lambda} \not\supset 
\boldsymbol{\lambda}^0$, its corresponding 
$\delta^2_{\boldsymbol\lambda} > 0$. By Proposition \ref{prop:sigmasq2}, 
$f(\delta^2)$ strictly increases in $\delta^2$, which shows that the 
leftmost logarithm term in \eqref{eq:diff_MDL2} has a strictly positive 
limit. Therefore, when $N$ is large, the first term in 
\eqref{eq:diff_MDL2} is positive, of order $O_P(N)$, and dominates the 
other terms in \eqref{eq:diff_MDL2}.
\end{proof}

%%%%%%%%%%%%%%%%%%%%%%%%%%%%%%%%%%%%%%%%%%%%%%%%
\begin{proposition}\label{prop:pairwise_BMDL_OplogN} For any relative 
changepoint configuration $\boldsymbol{\lambda} \in 
\boldsymbol{\Lambda}$, if $\boldsymbol{\lambda} \supset 
\boldsymbol{\lambda}^0$, then as $N \rightarrow \infty$,
\[
\text{BMDL}\left(\boldsymbol\lambda\right) 
> \text{BMDL}\left(\boldsymbol\lambda^0\right), \quad
\text{BMDL}\left(\boldsymbol\lambda\right) 
-\text{BMDL}\left(\boldsymbol\lambda^0\right) = O_P(\log N).
\]
\end{proposition}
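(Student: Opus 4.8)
The plan is to start from the general identity \eqref{eq:diff_MDL}, derived in the proof of Proposition \ref{prop:pairwise_BMDL_OpN} from \eqref{eq:det2} and \eqref{eq:Gamma_pairdiff}, which holds for every $\boldsymbol\lambda$ regardless of its $\delta^2$:
\[
\text{BMDL}(\boldsymbol\lambda) - \text{BMDL}(\boldsymbol\lambda^0)
= \frac{N-p}{2}\log\left(\frac{\hat\sigma^2_{\boldsymbol\lambda}}{\hat\sigma^2_{\boldsymbol\lambda^0}}\right)
+ \frac{3(m-m^0)}{2}\log N + O_P(1).
\]
I may assume $\boldsymbol\lambda \neq \boldsymbol\lambda^0$, hence $m > m^0$, since the claim is vacuous otherwise. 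The essential difference from Proposition \ref{prop:pairwise_BMDL_OpN} is that $\boldsymbol\lambda \supset \boldsymbol\lambda^0$ forces $\delta_t \equiv 0$ in \eqref{eq:delta_t_W_t} and hence $\delta^2_{\boldsymbol\lambda} = 0$, so by Proposition \ref{prop:sigmasq2} both $\hat\sigma^2_{\boldsymbol\lambda}$ and $\hat\sigma^2_{\boldsymbol\lambda^0}$ converge to $f(0) = \sigma^2$. The leading goodness-of-fit term therefore no longer diverges, and everything hinges on pinning down its exact order.

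The crux, and the main obstacle, is to show that this goodness-of-fit term is only $O_P(1)$ rather than $O_P(\sqrt N)$. The pointwise rates $\hat\sigma^2 = f(0) + O_P(1/\sqrt N)$ from Proposition \ref{prop:sigmasq2} are too crude, since $\frac{N-p}{2}\cdot O_P(1/\sqrt N) = O_P(\sqrt N)$; the $O_P(1/\sqrt N)$ sampling fluctuations are common to the two models and must be shown to cancel in the \emph{difference}. I would route this through Lemma \ref{lemma:sigmasq1}, which represents each $\hat\sigma^2$ as the Yule--Walker estimator $\hat\gamma(0) - \hat{\boldsymbol\gamma}_p' \hat{\boldsymbol\Gamma}_p^{-1} \hat{\boldsymbol\gamma}_p$ up to $O_P(1/N)$, i.e.\ as a fixed smooth function $g$ of the residual autocovariances $\hat\gamma(0), \ldots, \hat\gamma(p)$. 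It then suffices to prove the sharpened bound
\[
\hat\gamma_{\boldsymbol\lambda}(h) - \hat\gamma_{\boldsymbol\lambda^0}(h) = O_P\left(\frac{1}{N}\right), \quad h = 0, 1, \ldots, p,
\]
after which a Taylor expansion of $g$ about the true autocovariances (where $\boldsymbol\Gamma_p$ is positive definite) gives $\hat\sigma^2_{\boldsymbol\lambda} - \hat\sigma^2_{\boldsymbol\lambda^0} = O_P(1/N)$, whence $\frac{N-p}{2}\log(\hat\sigma^2_{\boldsymbol\lambda}/\hat\sigma^2_{\boldsymbol\lambda^0}) = O_P(1)$.

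To establish the sharpened autocovariance bound I would invoke Lemma \ref{lemma:Y} with $\delta_t = 0$, so that $\epsilon^{\text{ols}}_t = W_t$ under both models and the difference $v_t := W_t^{\boldsymbol\lambda} - W_t^{\boldsymbol\lambda^0} = \bar\epsilon_{r_{\boldsymbol\lambda^0}(t)} - \bar\epsilon_{r_{\boldsymbol\lambda}(t)}$ is a difference of regime averages, each $O_P(1/\sqrt N)$ and piecewise constant on the finitely many regimes of $\boldsymbol\lambda$. Writing $W_t^{\boldsymbol\lambda} = W_t^{\boldsymbol\lambda^0} + v_t$ and expanding the product in $\hat\gamma$, the quadratic-in-$v$ contribution is $\frac{1}{N}\sum_t v_t v_{t-h} = O_P(1/N)$, while each cross term $\frac{1}{N}\sum_t W_t^{\boldsymbol\lambda^0} v_{t-h}$ factors, over each $\boldsymbol\lambda$-regime, into a constant of size $O_P(1/\sqrt N)$ times a regime sum of $W^{\boldsymbol\lambda^0}$ of size $O_P(\sqrt N)$; summing over the bounded number of regimes and dividing by $N$ yields $O_P(1/N)$. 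Combining the resulting $O_P(1)$ goodness-of-fit order with the penalty, the BMDL difference equals $\frac{3(m-m^0)}{2}\log N + O_P(1)$; since $m > m^0$ this diverges to $+\infty$, so $\text{BMDL}(\boldsymbol\lambda) > \text{BMDL}(\boldsymbol\lambda^0)$ with probability tending to one and the difference is $O_P(\log N)$.
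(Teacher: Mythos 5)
Your proposal is correct and takes essentially the same route as the paper: both start from the identity $\text{BMDL}(\boldsymbol\lambda)-\text{BMDL}(\boldsymbol\lambda^0)=\frac{N-p}{2}\log(\hat\sigma^2_{\boldsymbol\lambda}/\hat\sigma^2_{\boldsymbol\lambda^0})+\frac{3(m-m^0)}{2}\log N+O_P(1)$, reduce the problem to the sharpened bound $\hat\sigma^2_{\boldsymbol\lambda}=\hat\sigma^2_{\boldsymbol\lambda^0}+O_P(1/N)$ via the lag-$h$ sample autocovariances, and control the correction terms using the WLLN facts that regime averages are $O_P(1/\sqrt N)$ and block sums of $W_t$ are $O_P(\sqrt N)$. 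The only cosmetic difference is that you difference the two models' residuals directly (writing $W_t^{\boldsymbol\lambda}=W_t^{\boldsymbol\lambda^0}+v_t$ with $v_t$ piecewise constant), whereas the paper shows each model's $\hat\gamma(h)$ equals the common anchor $\frac{1}{N}\sum_t\epsilon_t\epsilon_{t-h}$ up to $O_P(1/N)$; the two bookkeeping schemes are equivalent here.
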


\begin{proof} In the case where $\boldsymbol{\lambda} \supset 
\boldsymbol{\lambda}^0$, \eqref{eq:diff_MDL} still holds.  Moreover, since 
$\boldsymbol{\lambda}$ also contains redundant changepoints, $m 
> m^0$.  Hence, for large $N$, the second term in \eqref{eq:diff_MDL}
is positive and of order $O_P(\log N)$. To prove Proposition 
\ref{prop:pairwise_BMDL_OplogN}, we need to show that the first term in 
\eqref{eq:diff_MDL} is bounded in probability.  A sufficient condition 
for this simply shows that
\begin{equation}
\label{eq:sigmasq3}
\hat{\sigma}^2_{\boldsymbol\lambda} = \hat{\sigma}^2_{\boldsymbol\lambda^0} 
+ O_P\left(\frac{1}{N}\right).
\end{equation}

To establish \eqref{eq:sigmasq3}, we first focus on the model 
$\boldsymbol{\lambda}$. For notational simplicity, the subscript 
$\boldsymbol\lambda$ is omitted when there is no ambiguity. Under any 
model $\boldsymbol{\lambda} \supset \boldsymbol{\lambda}^0$, its 
corresponding $\delta_t$ in \eqref{eq:delta_t_W_t} is zero for all $t 
\in \{ 1, \ldots, N \}$; hence, by Lemma \ref{lemma:Y}, the lag-$h$ 
sample autocovariance $\hat{\gamma}(h)$ in \eqref{eq:autocov1} for all 
$h \in \{ 0, 1, \ldots, p\}$ can be written as
\begin{align} \nonumber
\hat{\gamma}(h) 
&	= \frac{1}{N} \sum_{t = h + 1}^N W_t W_{t-h} \\ \nonumber
&	= \frac{1}{N} \sum_{t = h + 1}^N 
\left(\epsilon_t - \bar{\epsilon}_{r(t)} - \bar{\epsilon}_{v(t)} + \bar{\epsilon}\right)
\left(\epsilon_{t-h} - \bar{\epsilon}_{r(t-h)} - \bar{\epsilon}_{v(t-h)} + \bar{\epsilon}\right)\\
\label{eq:autocov2}
&	= \frac{1}{N} \sum_{t = h + 1}^N \left\{
	\epsilon_t \epsilon_{t-h}
	- \epsilon_t \left( \bar{\epsilon}_{r(t-h)} + \bar{\epsilon}_{v(t-h)} - \bar{\epsilon} \right)
	- \epsilon_{t-h} \left( \bar{\epsilon}_{r(t)} + \bar{\epsilon}_{v(t)} - \bar{\epsilon} \right)
	\right.\\ \nonumber
&	\left.~~~~~~~~~~~~~~~~
	+ \left( \bar{\epsilon}_{r(t-h)} + \bar{\epsilon}_{v(t-h)} - \bar{\epsilon} \right)
	\left( \bar{\epsilon}_{r(t)} + \bar{\epsilon}_{v(t)} - \bar{\epsilon} \right)
	\right\}. 
\end{align} 

Recall that $\bar{\epsilon}_{r(\cdot)}, \bar{\epsilon}_{v(\cdot)}, 
\bar{\epsilon}$ are averages of zero-mean AR$(p)$ errors.  These 
averages are taken over error blocks whose size is proportional to $N$.  
By the central limit theorem for linear processes, these averages all 
converge to zero in probability with order $O_P(1/\sqrt{N})$. Since the 
fourth term in \eqref{eq:autocov2} is a sum of their two-way 
interactions and quadratic forms, it is also $O_P(1/N)$. The second term 
in \eqref{eq:autocov2} can be expanded as
\begin{align*}
&\frac{1}{N} \sum_{t = h + 1}^N 
	\epsilon_t \left( \bar{\epsilon}_{r(t-h)} + \bar{\epsilon}_{v(t-h)} - \bar{\epsilon} \right)\\
=~&  \frac{1}{N} \left\{ 
	\sum_{r = 1}^{m + 1} \sum_{t = 1}^{N_r}\epsilon_{r, t} \bar{\epsilon}_{r} 
	+ \sum_{v = 1}^{T} \sum_{t = 1}^{N/T}\epsilon_{v, t} \bar{\epsilon}_{v} 
	+ \sum_{t = 1}^N \epsilon_t \bar{\epsilon} + O_P(1)
	\right\}\\
=~&  \frac{1}{N} \left\{ 
	\sum_{r = 1}^{m + 1} N_r \bar{\epsilon}_{r}^2 
	+ \sum_{v = 1}^{T} \left(\frac{N}{T}\right) \bar{\epsilon}_{v}^2 
	+  N \bar{\epsilon}^2
	\right\} + O_P\left( \frac{1}{N} \right)\\
=~&  O_P\left( \frac{1}{N} \right),	
\end{align*}
where $\epsilon_{r, t}$ denotes the error during time $t$ in the $r$th 
regime, $\epsilon_{v, t}$ denotes the error during time $t$ in the $v$th 
month, and $\bar{\epsilon}_{r}$ and $\bar{\epsilon}_{v}$ are the error 
averages for the $r$th regime and $v$th month, respectively. Similarly, 
we can show that the third term in \eqref{eq:autocov2} is also 
$O_P(1/N)$. Therefore, under any model $\boldsymbol{\lambda} \supset 
\boldsymbol{\lambda}^0$, including $\boldsymbol{\lambda}^0$ itself, 
\eqref{eq:autocov2} becomes
\[
\hat{\gamma}(h) = \frac{1}{N} \sum_{t = h + 1}^N 
	\epsilon_t \epsilon_{t-h} + O_P\left( \frac{1}{N} \right),
\]
which shows that $\hat{\gamma}(h)$ under the two models 
$\boldsymbol{\lambda}$ and $\boldsymbol{\lambda}^0$ only changes by 
$O_P(1/N)$. By \eqref{eq:sigmasq_hat_YW}, $\hat{\sigma}^2_{\text{YW}}$ 
under the two models $\boldsymbol{\lambda}$ and $\boldsymbol{\lambda}^0$ 
also can only differ by $O_P(1/N)$. By Lemma \ref{lemma:sigmasq1}, the BMDL 
estimator $\hat{\sigma}^2 = \hat{\sigma}^2_{\text{YW}} + O_P(1/N)$, 
which establishes \eqref{eq:sigmasq3}.  Thus, $\hat{\sigma}^2$ under the 
two models $\boldsymbol{\lambda}$ and $\boldsymbol{\lambda}^0$ only 
differ by $O_P(1/N)$. 
%\qed 
\end{proof}

%%%%%%%%%%%%%%%%%%%%%%%%%%%%%%%%%%%%%%%%%%%%%%%%
\subsection{A proof of Theorem \ref{thm:lambda_convergence}}
\label{PROOFthm:lambda_convergence}
To prove Theorem \ref{thm:lambda_convergence}, we first establish the 
asymptotic consistency of $\hat{\boldsymbol\lambda}_N$ in the case where 
$m^0$ is known. Here, $\boldsymbol\Lambda_m$ denotes a subset of 
$\boldsymbol\Lambda$ formed by models that have $m$ relative 
changepoints.

\begin{proposition}
\label{prop:lambda_convergence_known_m0} 
If $m^0$ is known, then as $N \rightarrow \infty$, 

\[
\hat{\boldsymbol\lambda}_N = \arg\min_{\boldsymbol\lambda 
\in \boldsymbol\Lambda_{m^0}} \text{BMDL}(\boldsymbol\lambda)
\]

\noindent satisfies $\hat{\boldsymbol\lambda}_N 
\stackrel{P}{\longrightarrow} \boldsymbol\lambda^0$. 
\end{proposition}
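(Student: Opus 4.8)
The plan is to fix an arbitrary $\epsilon>0$ and show that $P\big(\|\hat{\boldsymbol\lambda}_N-\boldsymbol\lambda^0\|\geq\epsilon\big)\to 0$. Writing $\boldsymbol\Lambda_{m^0}^{\epsilon}=\{\boldsymbol\lambda\in\boldsymbol\Lambda_{m^0}:\|\boldsymbol\lambda-\boldsymbol\lambda^0\|\geq\epsilon\}$, the event $\{\|\hat{\boldsymbol\lambda}_N-\boldsymbol\lambda^0\|\geq\epsilon\}$ forces the minimizer to lie in $\boldsymbol\Lambda_{m^0}^{\epsilon}$, and hence implies $\inf_{\boldsymbol\lambda\in\boldsymbol\Lambda_{m^0}^{\epsilon}}\{\text{BMDL}(\boldsymbol\lambda)-\text{BMDL}(\boldsymbol\lambda^0)\}\leq 0$ (since $\boldsymbol\lambda^0\in\boldsymbol\Lambda_{m^0}$ is a competitor). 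It therefore suffices to prove that this infimum is positive with probability tending to one.

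First I would isolate the only term that can distinguish competing models at the scale of $N$. Because every $\boldsymbol\lambda\in\boldsymbol\Lambda_{m^0}$ carries exactly $m=m^0$ changepoints, the terms $\tfrac{m}{2}\log(\nu)$ cancel in the difference; the determinant terms differ by only $O_P(1)$, as Lemma~\ref{lemma:det} and \eqref{eq:det2} make both equal to $\tfrac{m^0}{2}\log N+O_P(1)$ when all regime lengths are $\Theta(N)$ on $\boldsymbol\Lambda_{m^0}$; and the Gamma prior terms differ by $(m^0-m^0)\log N+O_P(1)=O_P(1)$ by Lemma~\ref{lemma:Gamma_pairdiff}. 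Consequently, for every $\boldsymbol\lambda\in\boldsymbol\Lambda_{m^0}$,
\[
\text{BMDL}(\boldsymbol\lambda)-\text{BMDL}(\boldsymbol\lambda^0)=\frac{N-p}{2}\log\!\left(\frac{\hat\sigma^2_{\boldsymbol\lambda}}{\hat\sigma^2_{\boldsymbol\lambda^0}}\right)+O_P(1),
\]
so the whole argument reduces to controlling the goodness-of-fit ratio.

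Next I would invoke Proposition~\ref{prop:sigmasq2}, which gives $\hat\sigma^2_{\boldsymbol\lambda}=f(\delta^2_{\boldsymbol\lambda})+O_P(1/\sqrt N)$ with $f$ strictly increasing and $f(0)=\sigma^2=\lim\hat\sigma^2_{\boldsymbol\lambda^0}$. Any $\boldsymbol\lambda\in\boldsymbol\Lambda_{m^0}$ with $\boldsymbol\lambda\neq\boldsymbol\lambda^0$ cannot contain all $m^0$ true changepoints, since it has exactly $m^0$ of them; thus $\boldsymbol\lambda\not\supset\boldsymbol\lambda^0$ and $\delta^2_{\boldsymbol\lambda}>0$ by the discussion following \eqref{eq:delta_sq_average}. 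I would then verify that $\delta^2_{\boldsymbol\lambda}$ is continuous in $\boldsymbol\lambda$ (each $\bar\mu_{r}$ is a length-weighted average of the step function of true means over an interval with continuously moving endpoints). Since $\boldsymbol\Lambda_{m^0}^{\epsilon}$ is compact and $\delta^2_{\boldsymbol\lambda}$ is continuous and strictly positive there, it attains a positive minimum $\delta^2_{\min}>0$, and strict monotonicity of $f$ yields $f(\delta^2_{\boldsymbol\lambda})\geq f(\delta^2_{\min})>\sigma^2$ uniformly on the far set. Hence the limiting log-ratio is bounded below by the positive constant $c_\epsilon=\log\{f(\delta^2_{\min})/\sigma^2\}$, and multiplying by $(N-p)/2\to\infty$ overwhelms the $O_P(1)$ remainder, so the BMDL difference diverges to $+\infty$; pointwise this is already Proposition~\ref{prop:pairwise_BMDL_OpN}.

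The hard part will be upgrading this pointwise domination to the \emph{uniform} infimum demanded in the first paragraph. The remainder terms above are $O_P$ bounds that a priori depend on $\boldsymbol\lambda$, whereas $\boldsymbol\Lambda_{m^0}^{\epsilon}$ contains, for each fixed $N$, a number of distinct integer configurations growing polynomially in $N$. To close this gap I would strengthen Proposition~\ref{prop:sigmasq2} to a uniform statement, such as $\sup_{\boldsymbol\lambda\in\boldsymbol\Lambda_{m^0}}|\hat\sigma^2_{\boldsymbol\lambda}-f(\delta^2_{\boldsymbol\lambda})|=o_P(1)$, or at least a uniform lower bound $\inf_{\boldsymbol\lambda\in\boldsymbol\Lambda_{m^0}^{\epsilon}}\hat\sigma^2_{\boldsymbol\lambda}\geq\sigma^2+c$ together with $\hat\sigma^2_{\boldsymbol\lambda^0}\to\sigma^2$. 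This I would obtain by revisiting the WLLN steps in Lemma~\ref{lemma:gamma_h}, replacing the pointwise limits of the block averages $\bar\epsilon_{r(t)},\bar\epsilon_{v(t)},\bar\epsilon$ by maximal inequalities for partial sums of the underlying linear process that are uniform over starting points and over block lengths bounded below by $dN$; combined with the continuity of $\delta^2_{\boldsymbol\lambda}$ and a finite covering of the compact set $\boldsymbol\Lambda_{m^0}^{\epsilon}$, this converts the pointwise estimates into the uniform bound needed to conclude.
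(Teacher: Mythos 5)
Your argument is correct in outline and relies on exactly the same ingredients as the paper's proof --- Lemma \ref{lemma:det}, Lemma \ref{lemma:Gamma_pairdiff}, Proposition \ref{prop:sigmasq2}, and the fact that $\delta^2_{\boldsymbol\lambda}>0$ whenever $\boldsymbol\lambda\in\boldsymbol\Lambda_{m^0}$ and $\boldsymbol\lambda\neq\boldsymbol\lambda^0$ (so that $\boldsymbol\lambda\not\supset\boldsymbol\lambda^0$) --- but it organizes them differently. You run the classical well-separated-minimum argument: reduce consistency to showing that $\inf_{\boldsymbol\lambda\in\boldsymbol\Lambda_{m^0}^{\epsilon}}\{\text{BMDL}(\boldsymbol\lambda)-\text{BMDL}(\boldsymbol\lambda^0)\}$ is positive with probability tending to one, using compactness of the far set, continuity and positivity of $\delta^2_{\boldsymbol\lambda}$, and strict monotonicity of $f$ to obtain the uniform separation $f(\delta^2_{\boldsymbol\lambda})\geq f(\delta^2_{\min})>\sigma^2$. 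The paper instead extracts a subsequence along which $\hat{\boldsymbol\lambda}_N$ converges to a limit point $\boldsymbol\lambda^*\in\boldsymbol\Lambda_{m^0}$, observes that $\boldsymbol\lambda^*\neq\boldsymbol\lambda^0$ forces $\boldsymbol\lambda^*\not\supset\boldsymbol\lambda^0$, and applies the pointwise Proposition \ref{prop:pairwise_BMDL_OpN} at the \emph{fixed} $\boldsymbol\lambda^*$ to contradict minimality of $\hat{\boldsymbol\lambda}_N$. Your route has the merit of making explicit the one genuinely delicate point, which the paper's subsequence device conceals rather than removes: the $O_P(1/\sqrt{N})$ remainder in Proposition \ref{prop:sigmasq2} is a pointwise-in-$\boldsymbol\lambda$ statement, and either uniformity over $\boldsymbol\Lambda_{m^0}^{\epsilon}$ (your version) or validity at the data-dependent argument $\hat{\boldsymbol\lambda}_N$ (the paper's version, where it writes $\hat{\sigma}^2_{\hat{\boldsymbol\lambda}_N}=f(\delta^2_{\hat{\boldsymbol\lambda}_N})+O_P(1/\sqrt{N})$) must be justified. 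Your proposed fix --- a maximal inequality for partial sums of the causal linear process, which controls all block averages over blocks of length at least $dN$ simultaneously at rate $O_P(1/\sqrt{N})$, combined with the observation that $\delta_t$ is piecewise constant with a uniformly bounded number of pieces so the cross terms $N^{-1}\sum_t\delta_tW_{t-h}$ inherit the same uniform rate --- is the right one and does close the gap; what it buys over the paper's argument is an explicit, quantitative uniform bound (and a cleaner limiting statement than ``convergence in probability along a subsequence'' of a random sequence), at the cost of having to restate Lemma \ref{lemma:gamma_h} and Proposition \ref{prop:sigmasq2} in uniform form.
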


\begin{proof} We will show that for each subsequence $N_k$ with $N_k 
\rightarrow \infty$ as $k \rightarrow \infty$, there is a further 
subsequence $N_{k_\ell}$ with $N_{k_\ell} \rightarrow \infty$ as $\ell 
\rightarrow \infty$ such that $\hat{\boldsymbol\lambda}_{N_{k_\ell}} 
\stackrel {w} \longrightarrow \boldsymbol\lambda^0$ as $\ell \rightarrow 
\infty$, where $\stackrel {w} \longrightarrow$ denotes weak convergence, i.e.,
convergence in distribution.  
By the results in Section 25 of \cite{Billingsley_1995}, this implies 
that $\hat{\boldsymbol\lambda}_N \stackrel {w} \longrightarrow 
\boldsymbol \lambda^0$.  However, since $\boldsymbol \lambda^0$ is a 
constant configuration, one can upgrade the mode of convergence to infer 
that $\hat{\boldsymbol \lambda}_N \stackrel {P} \longrightarrow 
\boldsymbol \lambda^0$ (see again Section 25 of 
\cite{Billingsley_1995}).

Hence, let $N_k$ be an infinite sequence with $N_k \rightarrow \infty$ 
as $k \rightarrow \infty$.  By Helly's selection theorem (Theorem 25.9 
in \cite{Billingsley_1995}) and the compactness of $\Lambda_{m^0}$, 
there exists a further infinite subsequence $N_{k_\ell}$ and a possibly 
random configuration $\boldsymbol \lambda^*$ such that $ \hat{ 
\boldsymbol \lambda}_{N_{k_\ell}} \stackrel {w} {\longrightarrow} 
\boldsymbol \lambda^*$.  Here, a random configuration $\boldsymbol 
\lambda^*$ means a random variable ${\bf a}=(a_1, \ldots, 
a_{m^0})^\prime$ such that $0 \leq a_1 < a_2 < \ldots < a_{m^0} \leq 1$.  
To finish the argument, it is sufficient to show that $\boldsymbol 
\lambda^* = \boldsymbol \lambda^0$.

To show that $\boldsymbol \lambda^* = \boldsymbol \lambda^0$, we use 
proof by contradiction and suppose that $\boldsymbol \lambda^* \ne 
\boldsymbol \lambda^0$ in that $P(\boldsymbol \lambda^* \ne \boldsymbol 
\lambda^0) > 0$.  For notational simplicity, we simply replace 
$N_{k_\ell}$ by $N$ below.  Let $F_{\hat{\boldsymbol \lambda}_N}(\cdot)$ 
and $F_{\boldsymbol \lambda^*}(\cdot)$ denote the cumulative 
distribution functions of $\hat{\boldsymbol \lambda}_N$ and $\boldsymbol 
\lambda^*$, respectively, and define
\[ 
\delta^2_{\hat{\boldsymbol \lambda}_N}= \int_{{\bf a} \in \boldsymbol 
\Lambda_{m^0}} \delta^2({\bf a}) dF_{\hat{\boldsymbol \lambda}_N}({\bf a}), 
\quad 
\delta^2_{\boldsymbol \lambda^*}= \int_{{\bf a} \in 
\boldsymbol \Lambda_{m^0}} \delta^2({\bf a}) dF_{\boldsymbol \lambda^*}({\bf a}), 
\]
where the function $\delta^2(\cdot)$ is defined by \eqref{eq:delta_sq_average}.

It is easy to verify that $\delta^2({\bf a})$ is a continuous function 
in ${\bf a}$: For a fixed configuration $\mathbf{a}$ and the truth $\mathbf{a}^0
= (a_1^0, \ldots, a_{m^0}^0)$, we can rewrite their regime means as
\[
\mu_{r^0(t)} = 
\begin{cases}
\Delta_1^0, & 1 \leq t \leq \lfloor a_1^0 N \rfloor,\\
\vdots & \vdots \\
\Delta_{m^0 + 1}^0, & \lfloor a_{m^0}^0 N \rfloor + 1 \leq t \leq N,\\
\end{cases}
\]
and 
\[
\bar{\mu}_{r(t)} = 
\begin{cases}
\Delta_1, & 1 \leq t \leq \lfloor a_1 N \rfloor,\\
\vdots & \vdots \\
\Delta_{m^0 + 1}, & \lfloor a_{m^0} N \rfloor + 1 \leq t \leq N.\\
\end{cases}
\]
We then make a vector $\mathbf{b}$ of dimension at most $2m^0$ by ordering all
components in both $\mathbf{a}$ and $\mathbf{a}^0$. Thus,
\begin{equation}
\label{eq:delta_sq_average_continuous}
\delta^2(\mathbf{a}) = \lim_{N\rightarrow \infty} \frac{1}{N}\sum_{t=1}^N \left(
\mu_{r^0(t)} - \bar{\mu}_{r(t)}\right)^2
= \sum_{i=1}^{2m^0 + 1} (b_{i+1} - b_i)^2 w_i,
\end{equation}
where $b_i$ is a component in  $\mathbf{a}$ or $\mathbf{a}^0$, 
and $w_i$ has form $\pm (\Delta_k^0 - \Delta_j)$, 
$\pm (\Delta_k^0 - \Delta_j^0)$, or $\pm (\Delta_k - \Delta_j)$,  for some $k, j \in \{1, 2, \ldots, m^0\}$.

Therefore, \eqref{eq:delta_sq_average_continuous} is continuous in 
$\mathbf{a}$. We also tacitly assume that all regime mean parameters 
$\Delta_k$ are bounded.  By Part (ii) of Theorem 25.8 in 
\cite{Billingsley_1995}, if $X_N \stackrel{w}{\longrightarrow} X$ and a 
function $g(\cdot)$ is continuous and bounded, then $E[g(X_N)] 
\longrightarrow E[g(X)]$ as $N \rightarrow \infty$. Therefore, it 
follows that
\begin{equation}
\delta^2_{\hat{\boldsymbol \lambda}_N} \longrightarrow \delta^2_{\boldsymbol \lambda^*}.
\label{dump}
\end{equation}

Our work can be reduced to showing that $\text{BMDL}(\hat{\boldsymbol 
\lambda}_N) - \text{BMDL}(\boldsymbol \lambda^0)$ is bigger than a 
positive constant for all large $N$; for if this holds, then the fact 
that $\hat{\boldsymbol \lambda}_N$ minimizes the BMDL would be 
contradicted.  Hence, it suffices to show that
\[
\lim \sup_{N \rightarrow \infty} 
\frac{2}{N}
\left[ \text{BMDL}(\hat{\boldsymbol \lambda}_N ) - 
\text{BMDL}(\boldsymbol \lambda^0) \right] > 0.
\]
\noindent To do this, since $m^0$ is known, $\hat{m} = m^0$ and 
\eqref{eq:diff_MDL2} now give
\begin{align} \nonumber
~& \frac{2}{N}
\left[ 
\text{BMDL}(\hat{\boldsymbol \lambda}_N ) - \text{BMDL}(\boldsymbol \lambda^0) 
\right] \\
= ~&
\frac{2}{N}
\left[ \text{BMDL}(\hat{\boldsymbol\lambda}_N)
- \text{BMDL}(\boldsymbol\lambda^*) \right] 
\nonumber 
+ 
\frac{2}{N} 
\left[ \text{BMDL}(\boldsymbol\lambda^*)
- \text{BMDL}(\boldsymbol\lambda^0) \right]
\nonumber \\
= ~&
\frac{N-p}{N} 
\left[ 
\log \left( 
\frac{f(\delta^2_{\hat{\boldsymbol \lambda}_N}) + O_P\left(\frac{1}{\sqrt{N}}\right)}
{f(\delta^2_{\boldsymbol \lambda^*}) + O_P\left(\frac{1}{\sqrt{N}}\right)} 
\right) + 
\log \left( 
\frac{f(\delta^2_{\boldsymbol \lambda^*})+ O_P\left(\frac{1}{\sqrt{N}}\right)}
{f(\delta^2_{\boldsymbol \lambda^0})+ O_P\left(\frac{1}{\sqrt{N}}\right)} \right) 
\right].
\label{Burp}
\end{align}

Obviously, the term $N^{-1}(N-p)$ in (\ref{Burp}) converges to 
unity as $N \rightarrow \infty$.  The leftmost term in brackets in the 
bottom equation in (\ref{Burp}) converges to zero.  This follows from 
(\ref{dump}), the continuity of $f$ and the natural log function, and 
the fact that $\log(1)=0$. When $\boldsymbol \lambda^* \ne \boldsymbol 
\lambda^0$, since the number of changepoints in these two models are the 
same, $\boldsymbol\lambda^* \not\supset \boldsymbol\lambda^0$. 
Therefore, by \eqref{eq:delta_sq_average}, we have 
$\delta^2_{\boldsymbol \lambda^0} = 0$ and $\delta^2_{\boldsymbol 
\lambda^*} > 0$. The limit of the rightmost bracketed term in 
(\ref{Burp}) must be positive.  Positivity follows from 
$f(\delta^2_{\boldsymbol \lambda^*}) > f(\delta^2_{\boldsymbol 
\lambda^0})=\sigma^2$, which can be verified by an argument akin to that 
proving Proposition \ref{prop:sigmasq2}, the nondecreasing and 
continuous nature of $f$, that $f(0)= \sigma^2>0$, and that 
$P(\boldsymbol \lambda^* \ne \boldsymbol \lambda^0)> 0$.  The details 
are omitted; this said, one can get a flavor for the argument in the 
proof of the next result, which quantifies how much 
$\delta^2_{\boldsymbol \lambda}$ varies when elements of it are changed. 
This finishes our work. \end{proof}

%%%%%%%%%%%%%%%%%%%%%%%%%%%%%%%%%%%%%%%%%%%%%%%%

Next, under the assumption that $m^0$ is unknown, we first establish the 
following convergence rate lemma on estimated changepoint locations 
$\hat{\lambda}_j$.

\begin{lemma}\label{lemma:lambda_convergence} Suppose that $m^0$ is 
unknown. Then for each $\lambda_r^0$, $r \in \{ 1, \ldots, m^0 \}$, 
there exists a $\hat{\lambda}_j$ in $\hat{\boldsymbol\lambda}_N$ such 
that
\begin{equation}
\label{eq:lambda_convergence_rate2}
\left| \hat{\lambda}_j - \lambda^0_r \right| = O_P(N^{- 1}).
\end{equation}
\end{lemma}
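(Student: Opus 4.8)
The plan is to argue by contradiction: if no estimated changepoint sits close to a given true changepoint $\lambda_r^0$, then the least-squares mean fit must average the true mean across the jump at $\lambda_r^0$, which inflates the limiting misfit $\delta^2$ and, through the strict monotonicity of $f$, inflates $\hat{\sigma}^2$ and the BMDL by an amount growing with $N$. Fix $r\in\{1,\ldots,m^0\}$, let $\hat{\lambda}_j$ be the element of $\hat{\boldsymbol\lambda}_N$ in \eqref{eq:lambda_hat} nearest to $\lambda_r^0$, and set $\eta_N=|\hat{\lambda}_j-\lambda_r^0|$. First I would dispose of the gross case in which $\lambda_r^0$ has no estimated changepoint within a fixed distance: then $\hat{\boldsymbol\lambda}_N\not\supset\boldsymbol\lambda^0$, the corresponding $\delta^2$ in \eqref{eq:delta_sq_average} is bounded away from $0$, and Proposition \ref{prop:pairwise_BMDL_OpN} gives $\text{BMDL}(\hat{\boldsymbol\lambda}_N)-\text{BMDL}(\boldsymbol\lambda^0)=O_P(N)>0$, contradicting minimality of $\hat{\boldsymbol\lambda}_N$. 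Hence, for large $N$, some $\hat{\lambda}_j$ lies in a fixed neighborhood of each $\lambda_r^0$, and the task reduces to sharpening the bound on $\eta_N$.

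The quantitative engine is a lower bound on the misfit produced by the gap. Using the decomposition $\epsilon_t^{\text{ols}}=\delta_t+W_t$ of Lemma \ref{lemma:Y}, if the estimated regime covering $\lambda_r^0$ extends a relative distance $a$ to the left and $b$ to the right of $\lambda_r^0$ with no estimated changepoint in between, then averaging the true mean across the jump $\Delta_r\neq 0$ contributes $\Delta_r^2\,ab/(a+b)$ to the limit $\delta^2$ of \eqref{eq:delta_sq_average}, up to edge effects of order $1/N$. Since $\min(a,b)=\eta_N$ forces $ab/(a+b)\ge\tfrac12\eta_N$, we obtain $\delta^2\ge c\,\Delta_r^2\,\eta_N$ for a positive constant $c$. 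By Proposition \ref{prop:sigmasq2}, $\hat{\sigma}^2=f(\delta^2)+O_P(N^{-1/2})$ with $f$ strictly increasing and $f(0)=\sigma^2$, so $f(\delta^2)-f(0)\ge c'\eta_N$, and the goodness-of-fit term $\tfrac{N-p}{2}\log(\hat{\sigma}^2)$ of $\hat{\boldsymbol\lambda}_N$ exceeds that of a competitor capturing $\lambda_r^0$ by a deterministic amount of order $N\eta_N$.

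I would then compare $\hat{\boldsymbol\lambda}_N$ with the model obtained by inserting the grid point nearest $\lambda_r^0$ (or, when $\hat{m}=M$, by moving $\hat{\lambda}_j$ to that grid point), which remains in $\boldsymbol\Lambda$ for large $N$. Inserting or moving this changepoint erases the $\lambda_r^0$ contribution to $\delta^2$, while by \eqref{eq:det2} and Lemma \ref{lemma:Gamma_pairdiff} the determinant and Gamma-prior penalties change only by $O_P(\log N)$. Minimality of $\hat{\boldsymbol\lambda}_N$ then forces the fit gain to be at most the penalty change, driving $\eta_N$ to zero at a polynomial rate and yielding the bound $|\hat{\lambda}_j-\lambda_r^0|=O_P(N^{\omega-1})$ in \eqref{eq:lambda_convergence_rate2}.

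The main obstacle is that comparing the two fit terms via Proposition \ref{prop:sigmasq2} applied separately to each model controls each $\hat{\sigma}^2$ only up to $O_P(N^{-1/2})$, hence the difference of the scaled log-likelihoods up to $O_P(\sqrt{N})$ — too crude to beat the $N\eta_N$ signal down to the sharp exponent. Because the two models are nested and share the same error realizations, these $O_P(N^{-1/2})$ fluctuations are strongly correlated and largely cancel, so the genuine difficulty is to bound the paired difference $\hat{\sigma}^2_{\hat{\boldsymbol\lambda}_N}-\hat{\sigma}^2_{\text{comp}}$ directly, exactly in the spirit of Proposition \ref{prop:pairwise_BMDL_OplogN}, where nested models differ only by $O_P(1/N)$ in $\hat{\sigma}^2$. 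Carrying this paired estimate through an iterative refinement, in which a current rate $N^{\omega-1}$ is fed back to tighten the cross-terms and improve the exponent, is what ultimately delivers \eqref{eq:lambda_convergence_rate2}; a final discrete swap then sharpens the relative locations to the optimal $O_P(1/N)$ claimed in Theorem \ref{thm:lambda_convergence}.
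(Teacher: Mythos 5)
Your high-level strategy is the right one and matches the paper's: establish consistency first (via the $O_P(N)$ separation of Proposition \ref{prop:pairwise_BMDL_OpN} for models not containing $\boldsymbol\lambda^0$ in the limit), then rule out a slow rate by comparing $\hat{\boldsymbol\lambda}_N$ with a local perturbation that captures $\lambda_r^0$ exactly, using the misfit lower bound $\delta^2\gtrsim\Delta_r^2\,\eta_N$ and the strict monotonicity of $f$ from Proposition \ref{prop:sigmasq2} to turn the gap into a fit penalty of order $N\eta_N$. You also correctly diagnose the obstruction: applying Proposition \ref{prop:sigmasq2} to each model separately leaves an $O_P(N^{-1/2})$ slack in each $\hat{\sigma}^2$, hence $O_P(\sqrt{N})$ in the scaled log-likelihoods, which swamps the signal unless the difference is controlled in a paired fashion.

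The gap is that this paired estimate --- which is essentially the entire technical content of the paper's proof --- is named but never carried out; ``carrying this paired estimate through an iterative refinement'' is a placeholder, not an argument. Concretely, what is needed (and what the paper does) is: write $\epsilon_t^{\text{ols}}=\delta_t+W_t$ as in Lemma \ref{lemma:Y}, compute $\delta_{\hat{\boldsymbol\lambda}_N,t}-\delta_{\tilde{\boldsymbol\lambda}_N,t}$ explicitly on each of the subintervals created by $\hat{\lambda}_{j-1},\lambda^0_{r-1},\hat{\lambda}_j,\lambda^0_r,\lambda^0_{r+1},\hat{\lambda}_{j+1}$ (it is piecewise constant, of sizes $O_P(N^{2\omega-2})$, $O_P(1)$ and $O_P(N^{\omega-1})$ on intervals of relative lengths $O_P(1)$, $O_P(N^{\omega-1})$ and $O_P(1)$ respectively), bound the stochastic cross terms via $\sum_{t=a}^{b}W_t=O_P(N^{\xi/2})$ for an interval of length $N^{\xi}$, and conclude that the paired difference of each $\hat{\gamma}(h)$ is $(\mu_{r+1}-\mu_r)^2\Delta l_r+O_P(N^{\omega/2-1})+O_P(N^{-1})$, so the deterministic gain $\asymp N^{\omega-1}$ dominates the noise precisely when $\omega>0$, yielding a BMDL increase of order $N^{\omega}\to\infty$ and the contradiction. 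Without this computation your argument cannot distinguish the $N\eta_N$ signal from the uncontrolled fluctuations you yourself flag, so the claimed conclusion does not follow from what is written. Two smaller points: the paper swaps $\hat{\lambda}_j$ for $\lambda^0_r$ (same $m$, so the penalty difference is $O_P(1)$ by \eqref{eq:det2} and Lemma \ref{lemma:Gamma_pairdiff}) rather than inserting a changepoint (which costs $\tfrac{3}{2}\log N$); your insertion variant would still suffice to force $\omega\le 0$ but is a weaker comparison. And no iteration on the exponent is needed --- a single contradiction argument with $\omega$ defined as the maximal rate exponent settles the lemma.
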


\begin{proof} By the spacing assumptions made on the changepoint 
configuration, there can be at most a finite number of changepoints.  
Using this and repeating the argument in the proof of Proposition 
\ref{prop:lambda_convergence_known_m0}, one can argue that the estimated 
changepoint model $\hat{\boldsymbol\lambda}_N$ in \eqref{eq:lambda_hat} 
converges to a limit $\boldsymbol\lambda^*$ that contains all 
changepoints in $\boldsymbol\lambda^0$; that is, $P(\boldsymbol\lambda^* 
\supset \boldsymbol\lambda^0 )=1$.  This means that for each 
$\lambda_r^0$, $r = 1, \ldots, m^0$, there exists a 
$\hat{\lambda}_{j(r),N}$ in $\hat{\boldsymbol\lambda}_N$ such that 
$\hat{\lambda}_{j(r),N} \stackrel {P} {\longrightarrow} \lambda_r^0$; 
that is, $| \hat{\lambda}_{j(r),N} - \lambda^0_r| = o_P(1)$. For 
notation simplicity, we rewrite $\hat{\lambda}_{j(r),N}$ as 
$\hat{\lambda}_{j}$ when there is no ambiguity.

The above shows that for all $r \in \{ 1, \ldots m^0 \}$, 
$|\hat{\lambda}_{j} - \lambda_r^0|= O_P(N^{\alpha_r-1})$ for some 
finite $\alpha_r$; in fact, we know that $\alpha_r \leq 1$.
Now let 
\begin{equation}
\label{eq:omega_r}
\omega_r = \inf \{ \alpha_r: |\hat{\lambda}_j- \lambda_r^0| =
O_P(N^{\alpha_r-1}) \}.
\end{equation}  
To prove the Lemma, we need to show that $\omega_r \leq 0$ for all $r$, 
or that $\omega \leq 0$ where
\begin{equation}
\label{eq:omega_def}
\omega \stackrel{\text{def}}{=} 
\max_{1\leq r \leq m^0} \omega_r.
\end{equation}
This will be done by contradiction.   Hence, suppose that $\omega > 0$, 
then there exist an $r$ such that 
\begin{equation}\label{eq:lambda_conv1}
\omega_r = \omega >0,\quad \text{and } |\hat{\lambda}_j - \lambda_r^0|= O_P(N^{\omega-1}).
\end{equation}
This will now be used to draw a contradiction.

For a sufficiently large $N$, a new model $\tilde{\boldsymbol\lambda}_N$ 
is created from $\hat{\boldsymbol\lambda}_N$ by replacing the 
changepoint $\hat{\lambda}_j$ in $\hat{\boldsymbol\lambda}_N$ with 
$\lambda^0_r$:
\[
\tilde{\boldsymbol\lambda}_N
= \left(\hat{\lambda}_1, \ldots, \hat{\lambda}_{j-1}, \lambda^0_r, \hat{\lambda}_{j+1}, \ldots, 
\hat{\lambda}_{\hat{m}}\right)'.
\] 
A contradiction occurs if $\text{BMDL}(\tilde{\boldsymbol\lambda}_N) < 
\text{BMDL}(\hat{\boldsymbol\lambda}_N)$ for all large $N$ since 
$\hat{\boldsymbol\lambda}_N$ minimizes the BMDL.

We first investigate the difference in $\hat{\gamma}(h)$ 
in \eqref{eq:autocov1} under the models $\hat{\boldsymbol\lambda}_N$ 
and $\tilde{\boldsymbol\lambda}_N$, for each $h \in \{ 0, 1, \ldots, 
p\}$. Following the argument in Proposition 
\ref{prop:pairwise_BMDL_OplogN},
\begin{equation}
\label{eq:autocov3}
\frac{1}{N} \sum_{t = h + 1}^N W_t W_{t-h} = 
\frac{1}{N} \sum_{t = h + 1}^N 
	\epsilon_t \epsilon_{t-h} + O_P\left( \frac{1}{N} \right)
\end{equation}
only depends on the observed data up to an $O_P(1/N)$ error.  Hence, its 
difference under the models $\hat{\boldsymbol\lambda}_N$ and 
$\tilde{\boldsymbol\lambda}_N$ is $O_P(1/N)$.

For the other terms in \eqref{eq:autocov1}, we need only focus on the 
summation over $t$ satisfying $\lfloor\hat{\lambda}_{j-1} N \rfloor \leq 
t \leq \lfloor \hat{\lambda}_{j+1} N \rfloor - 1$, depicted in Figure 
\ref{fig:diagram1}.  This is because $(W_t, \delta_t)$ for all $t$ 
elsewhere are identical in the models $\hat{\boldsymbol\lambda}_N$ and 
$\tilde{\boldsymbol\lambda}_N$. For notational simplicity, lengths of 
time intervals on the rescaled timeline are denoted by
\[
l_{r} = \lambda^0_{r} - \lambda^0_{r-1}, \quad 
l_{r+1} = \lambda^0_{r+1} - \lambda^0_{r}.
\]
We first consider the case where $\hat{\lambda}_{j-1}$ is to the left of 
$\lambda^0_{r-1}$ and $\hat{\lambda}_{j+1}$ is to the right of 
$\lambda^0_{r+1}$. Without loss of generality, we assume that 
$\hat{\lambda}_{j}$ is to the left of $\lambda^0_{r}$. The length 
between these estimated changepoints and their limits are denoted by
\begin{equation}
\label{eq:lambda_conv2}
\Delta l_{r-1} = \lambda^0_{r-1} - \hat{\lambda}_{j-1}, \quad
\Delta l_r = \lambda^0_r - \hat{\lambda}_j, \quad
\Delta l_{r+1} = \hat{\lambda}_{j+1} - \lambda^0_{r+1},
\end{equation}
all of which converge to zero at rates no slower than $O_P(N^{\omega - 1})$.

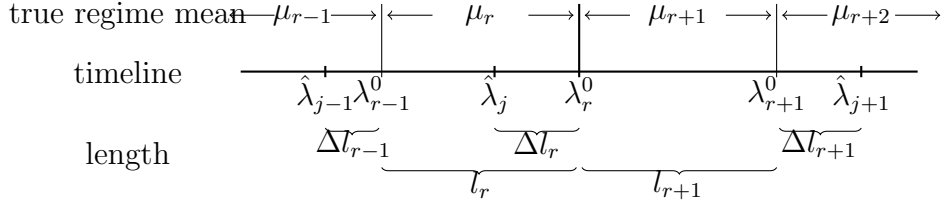
\begin{figure}
\centering
\begin{tikzpicture}[scale = 0.75]
\draw[thick]  (-1, 0) -- (11, 0);
\node at (-3, 1) {true regime mean};
\node at (-3, 0) {timeline};
\node at (-3, -1.50) {length};
\draw[thick] (0.5, -0.1) -- (0.5, 0.1); \node at (0.5, -0.4) {$\hat{\lambda}_{j-1}$};
\draw (1.5, -0.1) -- (1.5, 1.2); \node at (1.5, -0.4) {$\lambda_{r-1}^0$};
\draw[thick] (3.5, -0.1) -- (3.5, 0.1); \node  at (3.5, -0.4) {$\hat{\lambda}_{j}$};
\draw[thick] (5, -0.1) -- (5, 1.2); \node at (5, -0.4) {$\lambda^0_r$};
\draw (8.5, -0.1) -- (8.5, 1.2); \node at (8.5, -0.4) {$\lambda_{r+1}^0$};
\draw[thick] (10, -0.1) -- (10, 0.1); \node at (10, -0.4) {$\hat{\lambda}_{j+1}$};
\draw [->] (0.6, 1) -- (1.4, 1); \draw [<-] (-1.2, 1) -- (-0.4, 1); \node at (0.1, 1) {$\mu_{r-1}$};
\draw [<-] (1.6, 1) -- (2.4, 1); \draw [->] (4.1, 1) -- (4.9, 1); \node at (3.25, 1) {$\mu_{r}$};
\draw [<-] (5.1, 1) -- (5.9, 1); \draw [->] (7.6, 1) -- (8.4, 1); \node at (6.75, 1) {$\mu_{r+1}$};
\draw [->] (10.6, 1) -- (11.4, 1); \draw [<-] (8.6, 1) -- (9.4, 1); 
\node at (10, 1) {$\mu_{r + 2}$};
\draw [decoration={brace, mirror}, decorate] (0.5, -1.0) -- (1.45, -1.0);  \node at (1, -1.3) {$\Delta l_{r-1}$};
\draw [decoration={brace, mirror}, decorate] (3.5, -1.0) -- (5, -1.0);  \node at (4.25, -1.3) {$\Delta l_r$};
\draw [decoration={brace, mirror}, decorate] (8.55, -1.0) -- (10, -1.0);  \node at (9.25, -1.3) {$\Delta l_{r+1}$};
\draw [decoration={brace, mirror}, decorate] (1.5, -1.7) -- (4.95, -1.7);  \node at (3.25, -2) {$l_{r}$};
\draw [decoration={brace, mirror}, decorate] (5.05, -1.7) -- (8.5, -1.7);  \node at (6.75, -2) {$l_{r+1}$};
\draw[white] (-1, 2) -- (11, 2);
\end{tikzpicture}
\caption{Changepoints locations around time $\lambda^0_r$ for the proof of Lemma \ref{lemma:lambda_convergence}.
}\label{fig:diagram1}
\end{figure}

Under the model $\hat{\boldsymbol\lambda}_N$, $\delta_t$ in 
\eqref{eq:delta_t_W_t} can be written as
\begin{equation}
\label{eq:delta_t_lambdahat_proof1}
\delta_{\hat{\boldsymbol\lambda}_N, t} = 
\begin{cases}
\mu_{r-1} - \frac{\mu_{r-1} \Delta l_{r-1} + \mu_r (l_r - \Delta l_r)}{\Delta l_{r-1} + l_r -  \Delta l_r},	
	& \text{ if } \lfloor \hat{\lambda}_{j-1} N  \rfloor \leq t \leq \lfloor \lambda^0_{r-1} N \rfloor - 1,\\
\mu_r - \frac{\mu_{r-1} \Delta l_{r-1} + \mu_r (l_r - \Delta l_r)}{\Delta l_{r-1} + l_r -  \Delta l_r},	
	& \text{ if } \lfloor \lambda^0_{r-1} N  \rfloor \leq t \leq \lfloor \hat{\lambda}_j N \rfloor - 1,\\
\mu_r - \frac{\mu_r \Delta l_r + \mu_{r+1} l_{r+1} + \mu_{r+2} \Delta l_{r+1} }
	{\Delta l_r + l_{r+1} +  \Delta l_{r+1}},	
	& \text{ if } \lfloor \hat{\lambda}_j N  \rfloor \leq t \leq \lfloor \lambda^0_r N \rfloor - 1,\\
\mu_{r+1} - \frac{\mu_r \Delta l_r + \mu_{r+1} l_{r+1} + \mu_{r+2} \Delta l_{r+1} }
	{\Delta l_r + l_{r+1} +  \Delta l_{r+1}},	
	& \text{ if } \lfloor \lambda^0_r N  \rfloor \leq t \leq \lfloor \lambda^0_{r+1} N \rfloor - 1,\\
\mu_{r+2} - \frac{\mu_r \Delta l_r + \mu_{r+1} l_{r+1} + \mu_{r+2} \Delta l_{r+1} }
	{\Delta l_r + l_{r+1} +  \Delta l_{r+1}},	
	& \text{ if } \lfloor \lambda^0_{r+1} N  \rfloor \leq t \leq \lfloor \hat{\lambda}_{j+1} N \rfloor - 1;
\end{cases}
\end{equation}
whereas, under the model $\tilde{\boldsymbol\lambda}_N$,
\begin{equation}
\label{eq:delta_t_lambdatilde_proof1}
\delta_{\tilde{\boldsymbol\lambda}_N, t} = 
\begin{cases}
\mu_{r-1} - \frac{\mu_{r-1} \Delta l_{r-1} + \mu_r l_r }{\Delta l_{r-1} + l_r},	
	& \text{ if } \lfloor \hat{\lambda}_{j-1} N  \rfloor \leq t \leq \lfloor \lambda^0_{r-1} N \rfloor - 1,\\
\mu_r - \frac{\mu_{r-1} \Delta l_{r-1} + \mu_r l_r }{\Delta l_{r-1} + l_r},	
	& \text{ if } \lfloor \lambda^0_{r-1} N  \rfloor \leq t \leq  \lfloor \lambda^0_r N \rfloor - 1,\\
\mu_{r+1} - \frac{\mu_{r+1} l_{r+1} + \mu_{r+2} \Delta l_{r+1} }
	{ l_{r+1} +  \Delta l_{r+1}},	
	& \text{ if } \lfloor \lambda^0_r N  \rfloor \leq t \leq \lfloor \lambda^0_{r+1} N \rfloor - 1,\\
\mu_{r+2} - \frac{ \mu_{r+1} l_{r+1} + \mu_{r+2} \Delta l_{r+1} }
	{l_{r+1} +  \Delta l_{r+1}},	
	& \text{ if } \lfloor \lambda^0_{r+1} N  \rfloor \leq t \leq \lfloor \hat{\lambda}_{j+1} N \rfloor - 1.\\
\end{cases}
\end{equation}
When $N$ is large, $\delta_t = \delta_{t-h}$ for all but a finite number 
of times $t$; hence, for the second term (a similar argument applies to 
the third term) in \eqref{eq:autocov1},
\begin{align}
\label{eq:autocov4}
&	\frac{1}{N} \sum_{t = \lfloor \hat{\lambda}_{j-1} N \rfloor}^{\lfloor \hat{\lambda}_{j+1} N \rfloor  -1} 
	\delta_{t-h} W_t  \\ \nonumber
=~&	 
 	\frac{1}{N} \sum_{t = \lfloor \hat{\lambda}_{j-1} N \rfloor}^{\lfloor \hat{\lambda}_{j} N \rfloor - 1}
	\delta_{t} W_t 
	+ \frac{1}{N} \sum_{t = \lfloor \hat{\lambda}_{j} N \rfloor}^{\lfloor \lambda^0_{r} N \rfloor - 1} 
	\delta_{t} W_t 
	+ \frac{1}{N} \sum_{t = \lfloor \lambda^0_{r} N \rfloor}^{\lfloor \hat{\lambda}_{j+1} N \rfloor - 1}
	\delta_{t} W_t 
	+O_P\left( \frac{1}{N} \right).
\end{align}
By \eqref{eq:delta_t_lambdahat_proof1} and 
\eqref{eq:delta_t_lambdatilde_proof1}, under the two models 
$\hat{\boldsymbol\lambda}_N$ and $\tilde{\boldsymbol\lambda}_N$, the 
difference of $\delta_t$ is piecewise constant:
\begin{align}
\label{eq:delta_t_diff_proof1}
&\delta_{\hat{\boldsymbol\lambda}_N, t} - \delta_{\tilde{\boldsymbol\lambda}_N, t}\\ \nonumber
=~& \begin{cases}
\frac{(\mu_r - \mu_{r-1}) \Delta l_{r-1} \Delta l_r}
	{(\Delta l_{r-1} + l_r)(\Delta l_{r-1} + l_r -  \Delta l_r)}
	= O_P\left( N^{2\omega - 2}\right),\\
~~~~~~~~~~~~~~~~~~~~~~~~~~~~~~~~~~~~~~~~~~~~~~~~~~~~~~~~~~~
	\text{if } \lfloor \hat{\lambda}_{j-1} N  \rfloor \leq t \leq \lfloor \hat{\lambda}_j N \rfloor - 1,\\
\frac{(\mu_r - \mu_{r+1}) l_r  l_{r+1} + O_P(\Delta l)}
	{(\Delta l_{r-1} + l_r)(\Delta l_r + l_{r+1} +  \Delta l_{r+1})}
	= O_P\left( 1\right),\\
~~~~~~~~~~~~~~~~~~~~~~~~~~~~~~~~~~~~~~~~~~~~~~~~~~~~~~~~~~~
	\text{if } \lfloor \hat{\lambda}_j N  \rfloor \leq t \leq \lfloor \lambda^0_{r} N \rfloor - 1,\\
\frac{(\mu_{r+1} - \mu_r)\Delta l_r  l_{r+1} + (\mu_{r+2} - \mu_r)\Delta l_r \Delta l_{r+1}}
	{(l_{r+1} +  \Delta l_{r+1})(\Delta l_r + l_{r+1} +  \Delta l_{r+1})}
	= O_P\left( N^{\omega - 1}\right),\\
~~~~~~~~~~~~~~~~~~~~~~~~~~~~~~~~~~~~~~~~~~~~~~~~~~~~~~~~~~~
	\text{if } \lfloor \lambda^0_{r} N  \rfloor \leq t \leq \lfloor \hat{\lambda}_{j+1} N \rfloor - 1.
\end{cases}
\end{align}

To study the sum of $W_t$ in \eqref{eq:delta_t_W_t} over the above 
intervals, apply the central limit theorem for linear processes to see 
that $\bar{\epsilon}_{r(t)}, \bar{\epsilon}_{v(t)}, \bar{\epsilon}$ all 
converge to zero at the rate $O_P(1/\sqrt{N})$ for any $t$. Hence, for a 
$t \in [a,b]$ whose length $b-a$ depends on $N$ and is $O_P(N^{\xi})$ 
with $\xi \in (0, 1]$, the sums of $\epsilon_t$ and $W_t$ over this 
interval satisfy
\begin{equation*}
\sum_{t = a}^{b} \epsilon_t 
	= (b-a) \left(\frac{\sum_{t = a}^{b} \epsilon_t}{b-a}\right)
= O_P(N^{\xi})O_P\left( \frac{1}{\sqrt{N^{\xi}}} \right) = O_P(N^{\frac{\xi}{2}})
\end{equation*}
and
\begin{align}
\label{eq:sum_Wt}
\sum_{t = a}^{b} W_t 
=~& 	
\sum_{t = a}^{b} 
\left( \epsilon_t - \bar{\epsilon}_{r(t)} - \bar{\epsilon}_{v(t)} + \bar{\epsilon} \right)
	= 	\sum_{t = a}^{b} \epsilon_t 
	+ (b-a) O_P \left(\frac{1}{\sqrt{N}}\right) \\ \nonumber
=~& 	O_P(N^{\frac{\xi}{2}}) + O_P(N^{\xi - \frac{1}{2}}) \\ \nonumber
=~&     O_P(N^{\frac{\xi}{2}}), 
\end{align}
where the last equality follows from $\xi \leq 1$. For the three 
interval sums in \eqref{eq:delta_t_diff_proof1}, the corresponding 
convergence rates $\xi$ of their lengths are $1, \omega$, and $ 1$, 
respectively. Hence, in \eqref{eq:autocov4}, when decomposed as three 
sums in these intervals, differences under the models 
$\hat{\boldsymbol\lambda}_N$ and $\tilde{\boldsymbol\lambda}_N$ are thus
\begin{align} \nonumber
&	\frac{1}{N} \sum_{t = \lfloor \hat{\lambda}_{j-1} N \rfloor}^{\lfloor \hat{\lambda}_{j+1} N \rfloor - 1}  
	\left(\delta_{\hat{\boldsymbol\lambda}_N, t} - \delta_{\tilde{\boldsymbol\lambda}_N, t}\right) W_t 
	\\ \nonumber
=~& 	\frac{1}{N} \left\{
	O_P\left( N^{2\omega - 2} \right)~ O_P\left( N^{\frac{1}{2}} \right)	
	+ O_P\left( 1 \right)~ O_P\left( N^{\frac{\omega}{2}} \right)
	+ O_P\left( N^{\omega - 1} \right)~ O_P\left( N^{\frac{1}{2}} \right)
	\right\} \\ \nonumber
&+ O_P\left( N^{-1}\right)\\ \label{eq:autocov7}
=~& O_P\left( N^{\frac{\omega}{2}-1}\right),
\end{align}
where the last equality follows from $\omega \leq 1$. Therefore, the 
second and third term differences in \eqref{eq:autocov1} under the two 
models $\hat{\boldsymbol\lambda}_N$ and $\tilde{\boldsymbol\lambda}_N$ 
is $O_P\left( N^{\frac{\omega}{2}-1}\right)$.

For the last term in \eqref{eq:autocov1}, we similarly have
\[
\frac{1}{N} \sum_{t = \lfloor \hat{\lambda}_{j-1} N \rfloor}^{\lfloor \hat{\lambda}_{j+1} N \rfloor  -1} 
	\delta_{t-h} \delta_t 
= \frac{1}{N} \sum_{t = \lfloor \hat{\lambda}_{j-1} N \rfloor}^{\lfloor \hat{\lambda}_{j+1} N \rfloor  -1} 
	\delta_t^2 + O_P\left( \frac{1}{N} \right).
\]
Under the model $\hat{\boldsymbol\lambda}_N$,
\begin{align*}
&\frac{1}{N} \sum_{t = \lfloor \hat{\lambda}_{j-1} N \rfloor}^{\lfloor \hat{\lambda}_{j+1} N \rfloor  -1} 
	\delta_{\hat{\boldsymbol\lambda}_N, t}^2  \\
=~& 	\left(\mu_{r-1} - \frac{\mu_{r-1} \Delta l_{r-1} + \mu_r (l_r - \Delta l_r)}
	{\Delta l_{r-1} + l_r -  \Delta l_r}\right)^2 \Delta l_{r-1}\\ 
&	+\left(\mu_r - \frac{\mu_{r-1} \Delta l_{r-1} + \mu_r (l_r - \Delta l_r)}
	{\Delta l_{r-1} + l_r -  \Delta l_r}\right)^2 (l_r - \Delta l_r)\\ 
&	+ \left( \mu_r - \frac{\mu_r \Delta l_r + \mu_{r+1} l_{r+1} + \mu_{r+2} \Delta l_{r+1} }
	{\Delta l_r + l_{r+1} +  \Delta l_{r+1}} \right)^2 \Delta l_r \\ 
&	+ \left( \mu_{r+1} - \frac{\mu_r \Delta l_r + \mu_{r+1} l_{r+1} + \mu_{r+2} \Delta l_{r+1} }
	{\Delta l_r + l_{r+1} +  \Delta l_{r+1}} \right)^2 l_{r+1} \\ 
&	+ \left( \mu_{r+2} - \frac{\mu_r \Delta l_r + \mu_{r+1} l_{r+1} + \mu_{r+2} \Delta l_{r+1} }
	{\Delta l_r + l_{r+1} +  \Delta l_{r+1}} \right)^2 \Delta l_{r+1} \\ 	
=~&	 \frac{\left(\mu_r - \mu_{r-1}\right)^2 (l_r - \Delta l_r)  \Delta l_{r-1}}
	{\Delta l_{r-1} + l_r -  \Delta l_r}\\ 
&	+ \frac{(\mu_{r+1} - \mu_r)^2 \Delta l_r l_{r+1} + (\mu_{r+2} - \mu_r)^2 \Delta l_r \Delta l_{r+1}
	+  (\mu_{r+2} - \mu_{r+1})^2 \Delta l_{r+1} l_{r+1}}
	{\Delta l_r + l_{r+1} +  \Delta l_{r+1}}.
\end{align*}
On the other hand, under the model $\tilde{\boldsymbol\lambda}_N$, 
\begin{align*}
&\frac{1}{N} \sum_{t = \lfloor \hat{\lambda}_{j-1} N \rfloor}^{\lfloor \hat{\lambda}_{j+1} N \rfloor  -1} 
	\delta_{\tilde{\boldsymbol\lambda}_N, t}^2  \\
=~& \left(\mu_{r-1} - \frac{\mu_{r-1} \Delta l_{r-1} + \mu_r  l_r}
	{\Delta l_{r-1} + l_r}\right)^2 \Delta l_{r-1}
	+\left(\mu_r - \frac{\mu_{r-1} \Delta l_{r-1} + \mu_r l_r }
	{\Delta l_{r-1} + l_r}\right)^2 l_r \\ \nonumber
&	+ \left( \mu_{r+1} - \frac{\mu_{r+1} l_{r+1} + \mu_{r+2} \Delta l_{r+1} }
	{ l_{r+1} +  \Delta l_{r+1}} \right)^2  l_{r+1} \\ \nonumber
&	+ \left( \mu_{r+2} - \frac{\mu_{r+1} l_{r+1} + \mu_{r+2} \Delta l_{r+1} }
	{ l_{r+1} +  \Delta l_{r+1}} \right)^2  \Delta l_{r+1}\\ 	
=~&	\frac{\left(\mu_r - \mu_{r-1}\right)^2 l_r   \Delta l_{r-1}}
	{\Delta l_{r-1} + l_r }
	+ \frac{\left(\mu_{r+2} - \mu_{r+1}\right)^2 l_{r+1}   \Delta l_{r+1}}
	{\Delta l_{r+1} + l_{r+1} }.	
\end{align*}
The difference of the last term in \eqref{eq:autocov1} under the two models,
up to an $O_P(1/N)$ error, is thus
\begin{align} \nonumber
&\frac{1}{N} \sum_{t = \lfloor \hat{\lambda}_{j-1} N \rfloor}^{\lfloor \hat{\lambda}_{j+1} N \rfloor  -1} 
	\left( \delta_{\hat{\boldsymbol\lambda}_N, t}^2 -
	 \delta_{\tilde{\boldsymbol\lambda}_N, t}^2 \right)\\ \nonumber
=~&	-\frac{(\mu_r - \mu_{r-1})^2 \Delta l_{r-1}^2 \Delta l_r}
	{(\Delta l_{r-1} + l_r - \Delta l_r)(\Delta l_{r-1} + l_r)}
	- \frac{(\mu_{r+2} - \mu_{r+1})^2 \Delta l_r l_{r+1}   \Delta l_{r+1}}
	{(\Delta l_r + l_{r+1} +  \Delta l_{r+1}) (\Delta l_{r+1} + l_{r+1})} \\ \nonumber
&	+ \frac{(\mu_{r+1} - \mu_r)^2 \Delta l_r l_{r+1} }
	{\Delta l_r + l_{r+1} +  \Delta l_{r+1}}
	+ \frac{(\mu_{r+2} - \mu_r)^2 \Delta l_r \Delta l_{r+1}}
	{\Delta l_r + l_{r+1} +  \Delta l_{r+1}}\\ \label{eq:autocov8}
=~& 	(\mu_{r+1} - \mu_r)^2 \Delta l_r + o_P(\Delta l_r) = O_P\left( N^{\omega -1} \right).
\end{align}

Therefore, the difference of $\hat{\gamma}(h)$ 
\eqref{eq:autocov1} under the models $\hat{\boldsymbol\lambda}_N$ and 
$\tilde{\boldsymbol\lambda}_N$ is
\begin{align} 
\label{eq:autocov5}
\hat{\gamma}(h)_{\hat{\boldsymbol\lambda}_N} - \hat{\gamma}(h)_{\tilde{\boldsymbol\lambda}_N} 
&	= O_P(N^{-1}) + O_P(N^{\frac{\omega}{2}-1}) + O_P(N^{\omega -1}) 
	= O_P(N^{\omega -1}).
\end{align}
Here, the convergence rates of the three terms in the summation are 
given by the results shown in \eqref{eq:autocov3}, \eqref{eq:autocov7}, 
and \eqref{eq:autocov8}, respectively. Since $\omega > 0$, the third 
term in \eqref{eq:autocov5} dominates the overall convergence rate. Note 
that by \eqref{eq:autocov8}, this term has the same limit as $(\mu_{r+1} 
- \mu_r)^2 \Delta l_r$. Therefore, the limit of \eqref{eq:autocov5} 
remains the same across different value of $h \in \{ 0, 1, \ldots, p\}$.

By \eqref{eq:omega_r}, \eqref{eq:lambda_conv1}, and 
\eqref{eq:lambda_conv2}, $\Delta l_r$ is positive, and converges to zero 
in probability on the order of $O_P(N^{\omega-1})$, but not at any 
faster polynomial rate.  Since $\mu_{r+1} \neq \mu_r$, by 
\eqref{eq:autocov5}, for large $N$, 
$\hat{\gamma}(h)_{\hat{\boldsymbol\lambda}_N} - 
\hat{\gamma}(h)_{\tilde{\boldsymbol\lambda}_N}$ is also positive, 
converging to zero in probability on the order of $O_P(N^{\omega-1})$, 
but not any faster.

Following similar reasoning, if $\hat{\lambda}_{j}$ is to the right of 
$\lambda^0_{r}$, the result in \eqref{eq:autocov5} still holds. This 
conclusion does not change if $\hat{\lambda}_{j-1}$ is to the right of 
$\lambda^0_{r-1}$ (or $\hat{\lambda}_{j+1}$ is to the left of 
$\lambda^0_{r+1}$): we can simply take $\Delta l_{r-1} = 0$ (or $\Delta 
l_{r+1} = 0$) and all above derivations hold unaltered.
 
Next, we will show that for sufficiently large $N$, the model 
$\tilde{\boldsymbol\lambda}_N$ has a smaller BMDL than model 
$\hat{\boldsymbol\lambda}_N$.  Proposition \ref{prop:sigmasq2} shows 
that $f(\delta^2)$ in \eqref{eq:f} is strictly increasing in $\delta^2$. 
A similar argument applies here after replacing $\delta^2$ by the limit 
of \eqref{eq:autocov5}, which is $(\mu_{r+1} - \mu_r)^2 \Delta l_r$.  
This implies that the difference of the Yule-Walker estimators 
$\hat{\sigma}^2_{\text{YW}}$ in \eqref{eq:sigmasq_hat_YW} under the 
models $\hat{\boldsymbol\lambda}_N$ and $\tilde{\boldsymbol\lambda}_N$ 
obeys
\[
\hat{\sigma}^2_{\hat{\boldsymbol\lambda}_N, YW}
- \hat{\sigma}^2_{\tilde{\boldsymbol\lambda}_N, YW}
=  O_P(N^{\omega - 1}).
\]
Furthermore, this difference is positive and converges to zero in 
probability on the order of $O_P(N^{\omega-1})$, but not at any faster 
polynomial rate.  By Lemma \ref{lemma:sigmasq1}, the BMDL estimator 
$\hat{\sigma}^2 = \hat{\sigma}^2_{\text{YW}} + O_P(1/N)$, thus, the 
difference of the BMDL estimator $\hat{\sigma}^2$ under the two models 
satisfies
\begin{equation}
\label{eq:sigmasq_diff}
\hat{\sigma}^2_{\hat{\boldsymbol\lambda}_N}
- \hat{\sigma}^2_{\tilde{\boldsymbol\lambda}_N}
=  O_P(N^{\omega - 1}) + O_P(1/N) = O_P(N^{\omega - 1}), 
\end{equation}
the last equality stemming from $\omega > 0$. This shows that 
\eqref{eq:sigmasq_diff} is dominated by 
$\hat{\sigma}^2_{\hat{\boldsymbol\lambda}_N, YW} - 
\hat{\sigma}^2_{\tilde{\boldsymbol\lambda}_N, YW}$, and thus is positive 
and converges to zero in probability on the order of $O_P(N^{\omega-1})$ 
(but not at any faster polynomial rate). Since $\omega > 0 $, $\left( 
\hat{\sigma}^2_{\hat{\boldsymbol\lambda}_N} - 
\hat{\sigma}^2_{\tilde{\boldsymbol\lambda}_N} \right) / 
N^{\frac{\omega}{2} -1}$ diverges in probability, i.e., for a strictly 
positive constant $C$, when $N$ is large enough,
\[
\frac{\hat{\sigma}^2_{\hat{\boldsymbol\lambda}_N}
- \hat{\sigma}^2_{\tilde{\boldsymbol\lambda}_N}}{N^{\frac{\omega}{2} -1}}
\geq C.
\] 

Recall that the model $\tilde{\boldsymbol\lambda}_N$ contains the 
same number of changepoints as the model $\hat{\boldsymbol\lambda}_N$; 
therefore,

\begin{align*} \nonumber
	\text{BMDL}(\hat{\boldsymbol\lambda}_N) 
	-\text{BMDL}(\tilde{\boldsymbol\lambda}_N) 
&=      ~\frac{N-p}{2}\log \left(\frac{\hat{\sigma}^2_{\hat{\boldsymbol\lambda}_N}}
	{\hat{\sigma}^2_{\tilde{\boldsymbol\lambda}_N}}\right)
	+ O_P\left(1\right)	\\ \nonumber
&=      ~\frac{N}{2}\log \left(\frac{\hat{\sigma}^2_{\hat{\boldsymbol\lambda}_N}}
	{\hat{\sigma}^2_{\tilde{\boldsymbol\lambda}_N}}\right)
	+ O_P\left(1\right)	\\ \nonumber
&=      ~\frac{N}{2}\log \left(1 + \frac{\hat{\sigma}^2_{\hat{\boldsymbol\lambda}_N}
	-\hat{\sigma}^2_{\tilde{\boldsymbol\lambda}_N}}
	{\hat{\sigma}^2_{\tilde{\boldsymbol\lambda}_N}}\right)
	+ O_P\left(1\right)	\\ \nonumber
& \geq  ~\frac{N}{2}\log \left(1 + \frac{C}
	{\hat{\sigma}^2_{\tilde{\boldsymbol\lambda}_N} N^{1 - \frac{\omega}{2}} }\right)
	+ O_P\left(1\right)	\\ \nonumber
&=	\frac{N^{\frac{\omega}{2}}}{2}	\log \left(1 + \frac{C}
	{\hat{\sigma}^2_{\tilde{\boldsymbol\lambda}_N} N^{1 - \frac{\omega}{2}} }\right)^{N^{1-\frac{\omega}{2}}}
	+ O_P\left(1\right) \\ \nonumber
&=	\frac{N^{\frac{\omega}{2}}}{2}	\frac{C}
	{\hat{\sigma}^2_{\tilde{\boldsymbol\lambda}_N}} + O_P\left(1\right),
\end{align*}
where the last equality follows from $\lim_{ N \rightarrow \infty}(1 + 
\frac{x}{N})^N \rightarrow e^x$ and $\omega \leq 1$. Hence, 
$\text{BMDL}(\hat{\boldsymbol\lambda}_N) - 
\text{BMDL}(\tilde{\boldsymbol\lambda}_N)$ diverges to infinity at rate 
$O_P(N^{\frac{\omega}{2}})$ or faster, should $\omega > 0$. Here, a 
contradiction arises since $\hat{\boldsymbol\lambda}_N$ minimizes the 
BMDL. \end{proof}

%%%%%%%%%%%%%%%%%%%%%%%%%%%%%%%%%%%%%%%%%%%%%%%%

In Theorem \ref{thm:lambda_convergence}, the convergence rate in 
\eqref{eq:lambda_convergence_rate} comes from Lemma 
\ref{lemma:lambda_convergence}. Now the proof of 
\eqref{eq:lambda_convergence} is given.

\begin{proof} [A proof of \eqref{eq:lambda_convergence} in Theorem 
\ref{thm:lambda_convergence}] In the proof of Lemma 
\ref{lemma:lambda_convergence}, $\boldsymbol\lambda^* \supset 
\boldsymbol\lambda^0$. To verify \eqref{eq:lambda_convergence}, we need 
only show that $\boldsymbol\lambda^* = \boldsymbol\lambda^0$; in other 
words, there are no changepoints in $\boldsymbol\lambda^*$ that are not 
in $\boldsymbol\lambda^0$.

Proof by contradiction will again be used. Suppose that for a large $N$, 
the BMDL estimator $\hat{\boldsymbol\lambda}_N$ contains more than $m^0$ 
changepoints. More specifically, suppose that during the $(r+1)$th 
regime in the true model $\boldsymbol\lambda^0$, there are redundant 
changepoints estimated in $\hat{\boldsymbol\lambda}_N$, i.e., for some 
integer $d > 1$, 
\[ 
\hat{\lambda}_j \stackrel {P} {\longrightarrow} \lambda_r^0, \quad 
\hat{\lambda}_{j+d} \stackrel {P} {\longrightarrow} \lambda_{r+1}^0, 
\] 
where $\hat{\lambda}_j$ can be to the left or right of $\lambda_r^0$, 
and $\hat{\lambda}_{j+d}$ can be to the left or right of 
$\lambda_{r+1}^0$. Since the estimated changepoints 
$\hat{\lambda}_{j+1}, \ldots, \hat{\lambda}_{j+d-1}$ are redundant, a 
new relative multiple changepoint model
\[ 
\tilde{\boldsymbol\lambda}_N 
= \left(\hat{\lambda}_1, \ldots, \hat{\lambda}_j, \hat{\lambda}_{j+d}, 
\ldots, \hat{\lambda}_{\hat{m}}\right)' 
\] 
is created by removing the redundant changepoints $\hat{\lambda}_{j+1}, 
\ldots, \hat{\lambda}_{j+d-1}$ from $\hat{\boldsymbol\lambda}_N$. A 
contradiction would arise if $\text{BMDL}(\hat{\boldsymbol\lambda}_N)> 
\text{BMDL}( \tilde{\boldsymbol\lambda}_N)$ for large $N$ since 
$\hat{\boldsymbol\lambda}_N$ minimizes the BMDL.

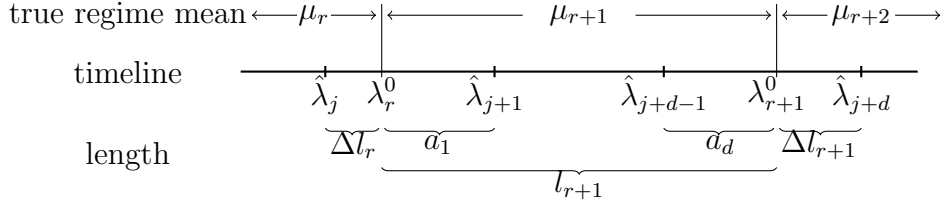
\begin{figure}
\centering
\begin{tikzpicture}[scale = 0.75]
\draw[thick]  (-1, 0) -- (11, 0);
\node at (-3, 1) {true regime mean};
\node at (-3, 0) {timeline};
\node at (-3, -1.50) {length};
\draw[thick] (0.5, -0.1) -- (0.5, 0.1); \node at (0.5, -0.4) {$\hat{\lambda}_j$};
\draw (1.5, -0.1) -- (1.5, 1.2); \node at (1.5, -0.4) {$\lambda_r^0$};
\draw[thick] (3.5, -0.1) -- (3.5, 0.1); \node  at (3.5, -0.4) {$\hat{\lambda}_{j + 1}$};
\draw[thick] (6.5, -0.1) -- (6.5, 0.1); \node at (6.5, -0.4) {$\hat{\lambda}_{j + d - 1}$};
\draw (8.5, -0.1) -- (8.5, 1.2); \node at (8.5, -0.4) {$\lambda_{r+1}^0$};
\draw[thick] (10, -0.1) -- (10, 0.1); \node at (10, -0.4) {$\hat{\lambda}_{j+d}$};
\draw [->] (0.6, 1) -- (1.4, 1); \draw [<-] (-0.8, 1) -- (0, 1); \node at (0.3, 1) {$\mu_{r}$};
\draw [<-] (1.6, 1) -- (4, 1); \draw [->] (6, 1) -- (8.4, 1); \node at (5, 1) {$\mu_{r+1}$};
\draw [->] (10.6, 1) -- (11.4, 1); \draw [<-] (8.6, 1) -- (9.4, 1); 
\node at (10, 1) {$\mu_{r + 2}$};
\draw [decoration={brace, mirror}, decorate] (0.5, -1.0) -- (1.45, -1.0);  \node at (1, -1.3) {$\Delta l_r$};
\draw [decoration={brace, mirror}, decorate] (1.55, -1.0) -- (3.5, -1.0);  \node at (2.5, -1.3) {$a_1$};
\draw [decoration={brace, mirror}, decorate] (6.5, -1.0) -- (8.45, -1.0);  \node at (7.5, -1.3) {$a_d$};
\draw [decoration={brace, mirror}, decorate] (8.55, -1.0) -- (10, -1.0);  \node at (9.25, -1.3) {$\Delta l_{r+1}$};
\draw [decoration={brace, mirror}, decorate] (1.5, -1.7) -- (8.5, -1.7);  \node at (5, -2) {$l_{r+1}$};
\draw[white] (-1, 2) -- (11, 2);
\end{tikzpicture}
\caption{Changepoint locations around the $(r+1)$th regime in the true changepoint model 
for the proof of \eqref{eq:lambda_convergence} in Theorem \ref{thm:lambda_convergence}.}\label{fig:diagram2}
\end{figure}

Similar to the proof of Lemma \ref{lemma:lambda_convergence}, the 
difference of $\hat{\gamma}(h)$ \eqref{eq:autocov1} under the two models 
$\hat{\boldsymbol\lambda}_N$ and $\tilde{\boldsymbol\lambda}_N$ will be 
investigated for each $h \in \{ 0, 1, \ldots, p \}$. By 
\eqref{eq:autocov3}, the first term in \eqref{eq:autocov1} is the same 
under $\hat{\boldsymbol\lambda}_N$ and $\tilde{\boldsymbol\lambda}_N$, 
up to a $O_P(1/N)$ difference.
 
For the other terms in \eqref{eq:autocov1}, we need only focus on the 
summation over $t$ in the interval $\lfloor \hat{\lambda}_j N \rfloor 
\leq t \leq \lfloor \hat{\lambda}_{j+d} N \rfloor - 1$, illustrated in 
Figure \ref{fig:diagram2}, since $(W_t, \delta_t)$ are the same for all 
other $t$ in $\hat{\boldsymbol\lambda}_N$ and 
$\tilde{\boldsymbol\lambda}_N$. For simplicity, lengths of time 
intervals on the rescaled timeline are denoted by
\[
l_{r+1} = \lambda^0_{r+1} - \lambda^0_{r}, \quad 
a_1 = \hat{\lambda}_{j+1} - \lambda^0_{r}, \quad
a_d = \lambda^0_{r} - \hat{\lambda}_{j+d-1}.
\]
If $\hat{\lambda}_j$ is to the left of $\lambda^0_r$ and 
$\hat{\lambda}_{j+d}$ is to the right of $\lambda^0_{r+1}$ (see Figure 
\ref{fig:diagram2}), then the vanishing length between them and their 
limits are denoted by
\[
\Delta l_r = \lambda^0_r - \hat{\lambda}_j, \quad
\Delta l_{r+1} = \hat{\lambda}_{j+d} - \lambda^0_{r+1},
\]
both of which converge to zero at rates no slower than $O_P(N^{\omega - 
1})$, where $\omega$ is defined in \eqref{eq:omega_def}.

Under the model $\hat{\boldsymbol\lambda}_N$, $\delta_t$ in 
\eqref{eq:delta_t_W_t} can be written as
\begin{equation}
\label{eq:delta_t_lambdahat}
\delta_{\hat{\boldsymbol\lambda}_N, t} = 
\begin{cases}
\mu_r - \frac{\mu_r \Delta l_r + \mu_{r+1} a_1}{\Delta l_r + a_1},	
	& \text{ if } \lfloor \hat{\lambda}_j N  \rfloor \leq t \leq \lfloor \lambda^0_r N \rfloor - 1,\\
\mu_{r+1} - \frac{\mu_r \Delta l_r + \mu_{r+1} a_1}{\Delta l_r + a_1},	
	& \text{ if } \lfloor \lambda^0_r N  \rfloor \leq t \leq \lfloor \hat{\lambda}_{j+1} N \rfloor - 1,\\
0,	
	& \text{ if } \lfloor \hat{\lambda}_{j+1} N  \rfloor \leq t \leq \lfloor \hat{\lambda}_{j+d-1} N \rfloor - 1,\\
 \mu_{r+1}-\frac{\mu_{r+2} \Delta l_{r+1} + \mu_{r+1} a_d}{\Delta l_{r+1} + a_d},	
	& \text{ if } \lfloor \hat{\lambda}_{j+d-1} N  \rfloor \leq t \leq \lfloor \lambda^0_{r+1} N \rfloor - 1,\\
\mu_{r+2}-\frac{\mu_{r+2} \Delta l_{r+1} + \mu_{r+1} a_d}{\Delta l_{r+1} + a_d},	
	& \text{ if } \lfloor \lambda^0_{r+1} N  \rfloor \leq t \leq \lfloor \hat{\lambda}_{j+d} N \rfloor - 1.
\end{cases}
\end{equation}
On the other hand, under the model $\tilde{\boldsymbol\lambda}_N$, 
\begin{equation}\label{eq:delta_t_lambdatilde}
\delta_{\tilde{\boldsymbol\lambda}_N, t} = 
\begin{cases}
\mu_r - \frac{\mu_r \Delta l_r + \mu_{r+1} l_{r+1} + \mu_{r+2} \Delta l_{r+1}}
	{\Delta l_r + l_{r+1} + \Delta l_{r+1}},	
& \text{ if } \lfloor \hat{\lambda}_j N  \rfloor \leq t \leq \lfloor \lambda^0_r N \rfloor - 1,\\
\mu_{r+1} - \frac{\mu_r \Delta l_r + \mu_{r+1} l_{r+1} + \mu_{r+2} \Delta l_{r+1}}
	{\Delta l_r + l_{r+1} + \Delta l_{r+1}},	
& \text{ if } \lfloor \lambda^0_r N  \rfloor \leq t \leq \lfloor \lambda^0_{r+1} N \rfloor - 1,\\
\mu_{r+2} - \frac{\mu_r \Delta l_r + \mu_{r+1} l_{r+1} + \mu_{r+2} \Delta l_{r+1}}
	{\Delta l_r + l_{r+1} + \Delta l_{r+1}},	
& \text{ if } \lfloor \lambda^0_{r+1} N  \rfloor \leq t \leq \lfloor \hat{\lambda}_{j+d} N \rfloor - 1.
\end{cases}
\end{equation}

When $N$ is large, $\delta_t = \delta_{t-h}$ for all but a finite number 
of times $t$; hence, for the second term (and similarly, the third term) 
in \eqref{eq:autocov1},
\begin{align}
\label{eq:autocov6}
&	\frac{1}{N} \sum_{t = \lfloor \hat{\lambda}_j N \rfloor}^{\lfloor \hat{\lambda}_{j+d} N \rfloor  -1} 
	\delta_{t-h} W_t  \\ \nonumber
=~&	\frac{1}{N} \sum_{t = \lfloor \hat{\lambda}_j N \rfloor}^{\lfloor \hat{\lambda}_{j+1} N \rfloor - 1} 
	\delta_{t} W_t 
+ \frac{1}{N} \sum_{t = \lfloor \hat{\lambda}_{j+1} N \rfloor}^{\lfloor \hat{\lambda}_{j+d-1} N \rfloor - 1} 
	\delta_{t} W_t 
+ \frac{1}{N} \sum_{t = \lfloor \hat{\lambda}_{j+d-1} N \rfloor}^{\lfloor \hat{\lambda}_{j+d} N \rfloor - 1} 
	\delta_{t} W_t 
	+O_P\left( \frac{1}{N} \right).
\end{align}
By \eqref{eq:delta_t_lambdahat} and \eqref{eq:delta_t_lambdatilde}, 
under the models $\hat{\boldsymbol\lambda}_N$ and 
$\tilde{\boldsymbol\lambda}_N$, the difference of $\delta_t$ is 
piecewise constant, i.e.,
\begin{align}
\label{eq:delta_t_diff} 
&	\delta_{\hat{\boldsymbol\lambda}_N, t} - \delta_{\tilde{\boldsymbol\lambda}_N, t}\\ \nonumber
=~&	
\begin{cases}
\frac{\left(\mu_{r+1} -  \mu_r\right) \Delta l_r (l_{r+1} - a_1)
	+ (\mu_{r+2} - \mu_{r+1})\Delta l_{r+1} a_1 + O_P(\Delta l^2) }
	{\left(\Delta l_r + a_1\right) \left( \Delta l_r + l_{r+1} + \Delta l_{r+1} \right)} 
= O_P\left( N^{\omega - 1}\right),\\
~~~~~~~~~~~~~~~~~~~~~~~~~~~~~~~~~~~~~~~~~~~~~~~~~~~~~~~~~~~
	\text{ if } \lfloor \hat{\lambda}_{j} N  \rfloor \leq t \leq \lfloor \hat{\lambda}_{j+1} N \rfloor - 1,\\
\frac{\left(\mu_{r+1} -  \mu_r\right) \Delta l_r 
	+ (\mu_{r+2} - \mu_{r+1})\Delta l_{r+1} }
	{\Delta l_r + l_{r+1} + \Delta l_{r+1}} 
= O_P\left( N^{\omega - 1}\right),\\
~~~~~~~~~~~~~~~~~~~~~~~~~~~~~~~~~~~~~~~~~~~~~~~~~~~~~~~~~~~
	\text{ if } \lfloor \hat{\lambda}_{j+1} N  \rfloor \leq t \leq \lfloor \hat{\lambda}_{j+d-1} N \rfloor - 1,\\
\frac{\left(\mu_{r+1} -  \mu_{r+2}\right) \Delta l_{r+1} (l_{r+1} - a_d)
	+ (\mu_r - \mu_{r+1})\Delta l_r a_d + O_P(\Delta l^2)}
	{(\Delta l_{r+1} + a_d)(\Delta l_r + l_{r+1} +  \Delta l_{r+1})}
	= O_P\left( N^{\omega - 1}\right),\\
~~~~~~~~~~~~~~~~~~~~~~~~~~~~~~~~~~~~~~~~~~~~~~~~~~~~~~~~~~~
	\text{ if } \lfloor \hat{\lambda}_{j+d-1} N  \rfloor \leq t \leq \lfloor \hat{\lambda}_{j+d} N \rfloor - 1.
\end{cases}
\end{align}
For the three time intervals in \eqref{eq:delta_t_diff}, their lengths 
are $\Delta l_r + a_1 = O_P(N^{\xi_1})$, $l_{r+1} - a_1 - a_d  = O_P(1)$, 
and $a_d + \Delta l_{r+1} = O_P(N^{\xi_d})$, 
respectively, with $\xi_1, \xi_d \in [\omega, 1]$. For 
\eqref{eq:autocov6}, when decomposed as three sums in these intervals, 
by \eqref{eq:sum_Wt}, its difference under the models 
$\hat{\boldsymbol\lambda}_N$ and $\tilde{\boldsymbol\lambda}_N$ is
\begin{align*}
&\frac{1}{N} \sum_{t = \lfloor \hat{\lambda}_{j} N \rfloor}^{\lfloor \hat{\lambda}_{j+d} N \rfloor - 1}  
	\left(\delta_{\hat{\boldsymbol\lambda}_N, t} - \delta_{\tilde{\boldsymbol\lambda}_N, t}\right) W_t \\
=~& 	\frac{1}{N} O_P\left( N^{\omega - 1} \right) \left\{
	O_P\left( N^{\frac{\xi_1}{2}} \right)	
	+ O_P\left( N^{\frac{1}{2}} \right)	+ O_P\left( N^{\frac{\xi_d}{2}} \right)	
	\right\} + O_P\left( N^{-1}\right)\\
=~& 	O_P\left( N^{\omega -\frac{3}{2}}\right) + O_P\left( N^{-1}\right). 
\end{align*}
By Lemma \ref{lemma:lambda_convergence}, $\omega \leq 0$; hence, for the 
second term (and similarly for the third term) in \eqref{eq:autocov1}, its 
difference under the two models converges to zero at rate $O_P(1/N)$.

For the fourth term in \eqref{eq:autocov1}, since
\[
\frac{1}{N} \sum_{t = \lfloor \hat{\lambda}_{j} N \rfloor}^{\lfloor \hat{\lambda}_{j+d} N \rfloor - 1} 
	\delta_{t-h} \delta_t 
= \frac{1}{N} \sum_{t = \lfloor \hat{\lambda}_{j} N \rfloor}^{\lfloor \hat{\lambda}_{j+d} N \rfloor - 1} 
	\delta_t^2 + O_P\left( \frac{1}{N} \right),
\]
under the model $\hat{\boldsymbol\lambda}_N$, it can be written as
\begin{align} \nonumber
&	\frac{1}{N} \sum_{t = \lfloor \hat{\lambda}_{j} N \rfloor}^{\lfloor \hat{\lambda}_{j+d} N \rfloor - 1}
	\delta_{\hat{\boldsymbol\lambda}_N, t}^2  \\ \nonumber
=~& \left(\mu_r - \frac{\mu_r \Delta l_r + \mu_{r+1} a_1}{\Delta l_r + a_1}\right)^2 \Delta l_r
	+ \left( \mu_{r+1} - \frac{\mu_r \Delta l_r + \mu_{r+1} a_1}{\Delta l_r + a_1} \right)^2 a_1\\ \nonumber
&	+ \left( \mu_{r+1} - \frac{\mu_{r+2} \Delta l_{r+1} + \mu_{r+1} a_d}{\Delta l_{r+2} + a_d} 
	\right)^2 a_d
+ \left( \mu_{r+2} - \frac{\mu_{r+2} \Delta l_{r+1} + \mu_{r+1} a_d}{\Delta l_{r+2} + a_d} 
	 \right)^2 \Delta l_{r+1}\\ \label{eq:deltasq_lambdahat}
=~&	\frac{(\mu_{r+1} - \mu_r)^2 a_1 \Delta l_r}{a_1 + \Delta l_r}
	+ \frac{(\mu_{r+2} - \mu_{r+1})^2 a_d \Delta l_{r+1}}{a_d + \Delta l_{r+1}},
\end{align}
whereas under the model $\tilde{\boldsymbol\lambda}_N$, 
\begin{align}\nonumber
&	\frac{1}{N} \sum_{t = \lfloor \hat{\lambda}_{j} N \rfloor}^{\lfloor \hat{\lambda}_{j+d} N \rfloor - 1}
	\delta_{\tilde{\boldsymbol\lambda}_N, t}^2  \\ \nonumber
=~& 	 \left( \mu_r  - \frac{\mu_r \Delta l_r + \mu_{r+1} l_{r+1} + \mu_{r+2} \Delta l_{r+1}}
	{\Delta l_r + l_{r+1} + \Delta l_{r+1}}\right)^2 \Delta l_r\\ \nonumber
&	+ 	\left( \mu_{r+1} - \frac{\mu_r \Delta l_r + \mu_{r+1} l_{r+1} + \mu_{r+2} \Delta l_{r+1}}
	{\Delta l_r + l_{r+1} + \Delta l_{r+1}} \right)^2 l_{r+1}\\ \nonumber
&	+ 	\left( \mu_{r+2} - \frac{\mu_r \Delta l_r + \mu_{r+1} l_{r+1} + \mu_{r+2} \Delta l_{r+1}}
	{\Delta l_r + l_{r+1} + \Delta l_{r+1}} \right)^2 \Delta l_{r+1}\\ \label{eq:deltasq_lambdatilde}
=~& 	 \frac{(\mu_{r+1} - \mu_r)^2 l_{r+1} \Delta l_r
	+(\mu_{r+2} - \mu_r)^2 \Delta l_r  \Delta l_{r+1}
	+(\mu_{r+2} - \mu_{r+1})^2 l_{r+1} \Delta l_{r+1} }{\Delta l_r + l_{r+1} + \Delta l_{r+1}}.
\end{align}
Since $\Delta l_r = O_P\left(N^{\omega-1}\right)$ and $\Delta l_{r+1} = 
O_P\left(N^{\omega-1}\right)$, where $\omega \leq 0$, both 
\eqref{eq:deltasq_lambdahat} and \eqref{eq:deltasq_lambdatilde} converge 
to zero at rate $O_P(N^{\omega-1})$. Hence, the difference of the fourth 
term in \eqref{eq:autocov1} converges to zero at rate $O_P(1/N)$.

The difference in $\hat{\gamma}(h)$ in \eqref{eq:autocov1} under the two 
models $\hat{\boldsymbol\lambda}_N$ and $\tilde{\boldsymbol\lambda}_N$ 
thus satisfies
\[
\hat{\gamma}_{\hat{\boldsymbol\lambda}_N}(h) = \hat{\gamma}_{\tilde{\boldsymbol\lambda}_N}(h)
+ O_P\left( \frac{1}{N} \right),
\]
which holds for all $h \in \{ 0, 1, \ldots, p \}$. By Lemma 
\ref{lemma:sigmasq1}, a similar result holds for the BMDL estimators of 
$\sigma^2$ under the two models $\hat{\boldsymbol\lambda}_N$ and 
$\tilde{\boldsymbol\lambda}_N$:
\begin{equation}
\label{eq:diff_sigmasq_lambdahat_lambdatilde}
\hat{\sigma}^2_{\hat{\boldsymbol\lambda}_N} = 
\hat{\sigma}^2_{\tilde{\boldsymbol\lambda}_N} + O_P\left( \frac{1}{N} \right).
\end{equation}
Note that if $\hat{\lambda}_j$ is to the right of $\lambda^0_r$ (or 
$\hat{\lambda}_{j+d}$ is to the left of $\lambda^0_{r+1}$), then we 
simply let $\Delta l_r = 0$ (or $\Delta l_{r+1} = 0$), so that all above 
derivations, including \eqref{eq:delta_t_lambdahat} and 
\eqref{eq:delta_t_lambdatilde}, and more importantly, 
\eqref{eq:diff_sigmasq_lambdahat_lambdatilde} hold as stated.

The difference between $\text{BMDL}(\hat{\boldsymbol\lambda}_N)$ and 
$\text{BMDL}(\tilde{\boldsymbol\lambda}_N)$ will now be studied. Recall 
that $\hat{\boldsymbol\lambda}_N$ has $d - 1$ more changepoints than 
$\tilde{\boldsymbol\lambda}_N$. By \eqref{eq:det2} and 
\eqref{eq:Gamma_pairdiff}, the BMDL difference is
\begin{align*}
	\text{BMDL}(\hat{\boldsymbol\lambda}_N) 
	-\text{BMDL}(\tilde{\boldsymbol\lambda}_N)
=~&   \frac{N-p}{2}\log \left(\frac{\hat{\sigma}^2_{\hat{\boldsymbol\lambda}_N} }
	{\hat{\sigma}^2_{\tilde{\boldsymbol\lambda}_N} }\right)
	+ \frac{3(d - 1)}{2}  \log (N) + O_P\left(1\right)	\\ \label{eq:diff_MDL2}
=~&    O_P\left(1\right)
	+ \frac{3(d-1)}{2} \log (N) + O_P\left(1\right) \\
=~&      O_P(\log N), 
\end{align*} 
and is positive. Here, the second equality follows from 
\eqref{eq:diff_sigmasq_lambdahat_lambdatilde}. This contradicts that 
$\hat{\boldsymbol\lambda}_N$ minimizes the BMDL. 
\end{proof}

%%%%%%%%%%%%%%%%%%%%%%%%%%%%%%%%%%%%%%%%%%%%%%%%
\subsection{Proof of Theorem \ref{thm:parameter_convergence}}
\label{PROOFthm:parameter_convergence}

\begin{proof} [A proof of Theorem \ref{thm:parameter_convergence}] By Theorem 
\ref{thm:lambda_convergence}, as $N$ tends to infinity, 
$\hat{\boldsymbol\lambda}_N \stackrel{P}{\longrightarrow} 
\boldsymbol\lambda^0$, and hence $\delta^2_{\hat{\boldsymbol\lambda}_N} 
\stackrel{P}{\longrightarrow} 0$. Therefore, by Proposition 
\ref{prop:YW_phi}, the BMDL estimator
\[
\hat{\boldsymbol\phi}_N = 
\left(\boldsymbol\Gamma_p + \delta^2_{\hat{\boldsymbol\lambda}_N} \mathbf{J}_p \right)^{-1}
\left(\boldsymbol\gamma_p + \delta^2_{\hat{\boldsymbol\lambda}_N} \mathbf{1}_p \right)
+ O_P\left(\frac{1}{\sqrt{N}}\right)
\stackrel{P}{\longrightarrow} 
\boldsymbol\Gamma_p^{-1} \boldsymbol\gamma_p = \boldsymbol\phi^0.
\]
By \eqref{eq:f}, when $\delta = 0$, $f(0) = \gamma(0) - 
\boldsymbol\gamma_p'\boldsymbol\Gamma_p^{-1} \boldsymbol\gamma_p = 
\left(\sigma^2\right)^0$, i.e., the true value of $\sigma^2$. Since 
$f(\delta^2)$ is continuous in $\delta^2$, Proposition 
\ref{prop:sigmasq2} shows that as $N \rightarrow \infty$, the BMDL 
estimator
\[
\hat{\sigma}^2_N \stackrel{P}{\longrightarrow} f(0) = \left(\sigma^2\right)^0.
\]

For sufficiently large $N$, since $\hat{\boldsymbol\lambda}_N$ is close 
to the true model $\boldsymbol\lambda^0$, the regime indicator matrix 
$\mathbf{D}$ under $\hat{\boldsymbol\lambda}_N$ is close to its 
counterpart $\mathbf{D}^0$ under the true model. Therefore, 
\eqref{eq:likelihood3} implies that
\begin{equation}
\label{eq:likelihood4}
\widehat{\mathbf{X}}= \widehat{\mathbf{A}} \mathbf{s} + 
\widehat{\mathbf{D}}\boldsymbol\mu + \hat{\mathbf{z}},
\end{equation}
where $\hat{\mathbf{z}} = (\hat{z}_{p+1}, \ldots, 
\hat{z}_N)'$, and $\hat{z}_t = \epsilon_t - \sum_{j=1}^p 
\hat{\phi}_j \epsilon_{t-j}$. Since  
$\hat{\mathbf{z}}$ is a series of white
noises \citep[pp.\ 240]{Brockwell_Davis_1991}, \eqref{eq:likelihood4} can be 
viewed as a linear model with unknown coefficients $(\mathbf{s}, 
\boldsymbol\mu)$.

Following the proof of Lemma \ref{lemma:sigmasq1}, the BMDL estimators 
for $\mathbf{s}$ and $\boldsymbol\mu$ have the following limits:
\begin{align*}
\hat{\mathbf{s}}_N 
&	= (\widehat{\mathbf{A}}'\widehat{\mathbf{B}}
	\widehat{\mathbf{A}})^{-1}
	(\widehat{\mathbf{A}}'\widehat{\mathbf{B}}\widehat{\mathbf{X}})\\
& 	= \left\{ \widehat{\mathbf{A}}'
	\left(\mathbf{I}_{N-p} - \mathcal{P}_{\widehat{\mathbf{D}}}\right)
	\widehat{\mathbf{A}}  \right\}^{-1}
	\left\{ \widehat{\mathbf{A}}'
	\left(\mathbf{I}_{N-p} - \mathcal{P}_{\widehat{\mathbf{D}}}\right)
	\widehat{\mathbf{X}}  \right\} + O_P\left(\frac{1}{N}\right),\\
\hat{\boldsymbol\mu}_N
& 	= \left(\widehat{\mathbf{D}}' 
	\widehat{\mathbf{D}}+\frac{\mathbf{I}_{m}}{\nu}\right)^{-1}
	\widehat{\mathbf{D}}' \left( \widehat{\mathbf{X}} - \widehat{\mathbf{A}}\hat{\mathbf{s}}_N \right)\\
&	= \left(\widehat{\mathbf{D}}' 
	\widehat{\mathbf{D}}\right)^{-1}
	\widehat{\mathbf{D}}' \left( \widehat{\mathbf{X}} - \widehat{\mathbf{A}}\hat{\mathbf{s}}_N \right)
	+ O_P\left(\frac{1}{N}\right).
\end{align*}
After rewriting \eqref{eq:likelihood4} as
\begin{align*}
\widehat{\mathbf{X}}
=	&\left\{\left(\mathbf{I}_{N-p} - \mathcal{P}_{\widehat{\mathbf{D}}}\right)
	\widehat{\mathbf{A}}\right\} \mathbf{s} 
	+ \left\{\widehat{\mathbf{D}}\boldsymbol\mu 
	+ \mathcal{P}_{\widehat{\mathbf{D}}} \widehat{\mathbf{A}}\mathbf{s} \right\}
	+ \hat{\mathbf{z}}\\
=	& 	\left\{\left(\mathbf{I}_{N-p} - \mathcal{P}_{\widehat{\mathbf{D}}}\right)
	\widehat{\mathbf{A}}\right\} \mathbf{s} 
	+ \widehat{\mathbf{D}}\left\{ \boldsymbol\mu +  
	\left(\widehat{\mathbf{D}}' \widehat{\mathbf{D}}\right)^{-1}\widehat{\mathbf{D}}' 
	\widehat{\mathbf{A}} \mathbf{s}\right\} + \hat{\mathbf{z}},
\end{align*}
it is not hard to see that $\hat{\mathbf{s}}_N$ and 
$\hat{\boldsymbol\mu}_N$ are the least square estimators of this linear 
model. Since least square estimators are asymptotically consistent, 
$\hat{\mathbf{s}}_N \stackrel{P}{\longrightarrow} \mathbf{s}^0$ and 
$\hat{\boldsymbol\mu}_N \stackrel{P}{\longrightarrow} \boldsymbol\mu^0$. 
\end{proof}

%%%%%%%%%%%%%%%%%%%%%%%%%%%%%%%%%%%%%%%%%%%%%%%%
\section{Additional Simulations and Real Examples}
\label{sec:appendix_examples}

\subsection{Simulation Examples} Additional figures related to our 
simulation examples in Section \ref{sec:simulation} are included here.

\begin{figure}
\centering
\includegraphics[width = 0.6\textwidth, angle = 270]
{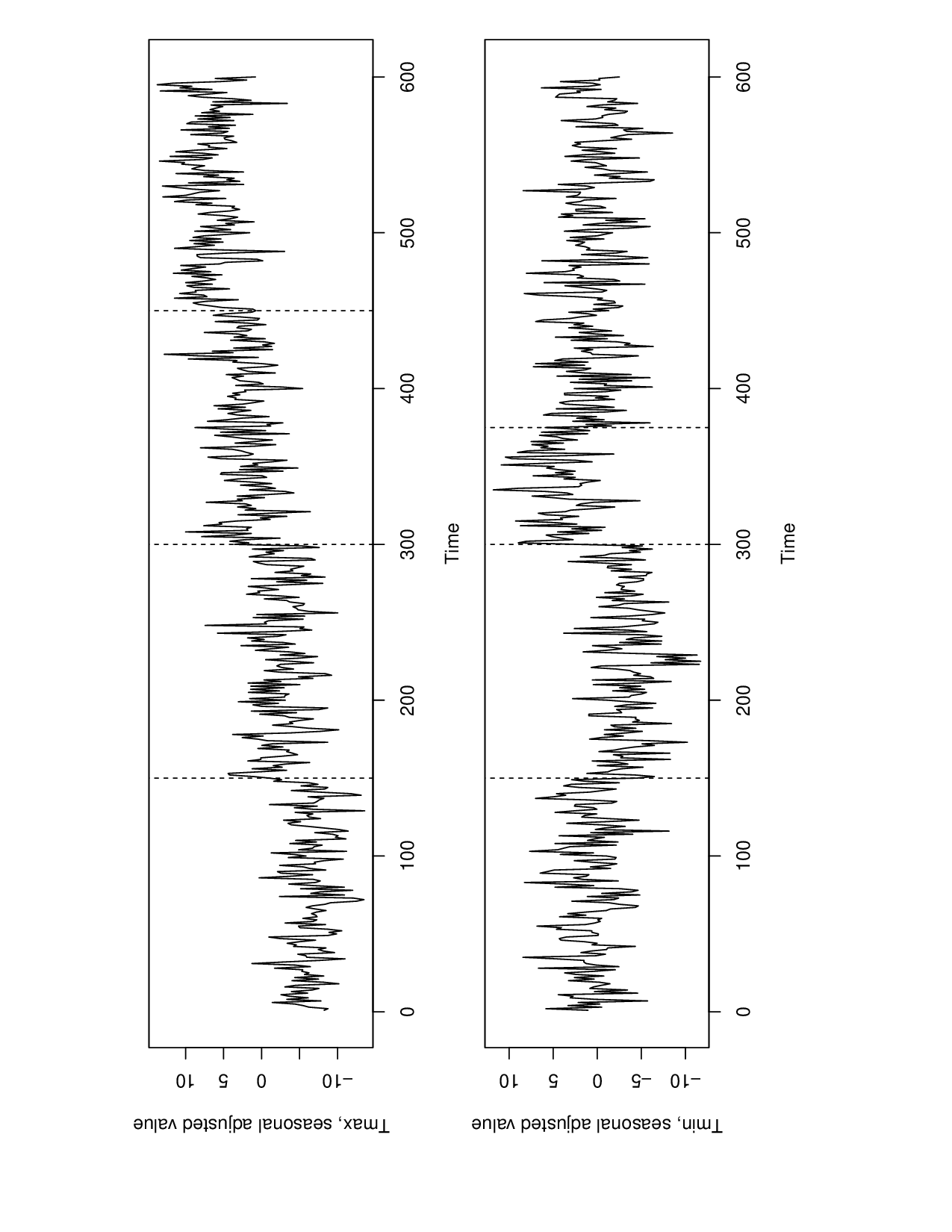}
\caption{\label{fg:simulation_sample2}
The Figure \ref{fg:simulation_sample1} series after subtracting 
sample monthly means. Vertical dashed lines mark true changepoint times.}
\end{figure}

\begin{figure}
\centering
\includegraphics[width = 0.5\textwidth, angle = 270]
{}
\caption{\label{fg:BMDL_penalties}
Model code lengths $\mathcal{L}(\boldsymbol\eta) = -\log\Gamma\left(a + m\right) 
-\log\Gamma\left(b + N -p - m\right)$ between the BMDL and the oBMDL.}
\end{figure}

%%%%%%%%%%%%%%%%%%%%%%%%%%%%%%%%%%%%%%%%%%%%%%%%
\subsection{Tuscaloosa Data Analysis:  Target Minus Reference}
\label{subsec:target_minus_reference}

\begin{figure}
\centering
\includegraphics[width = 0.6\textwidth, angle = 270]
{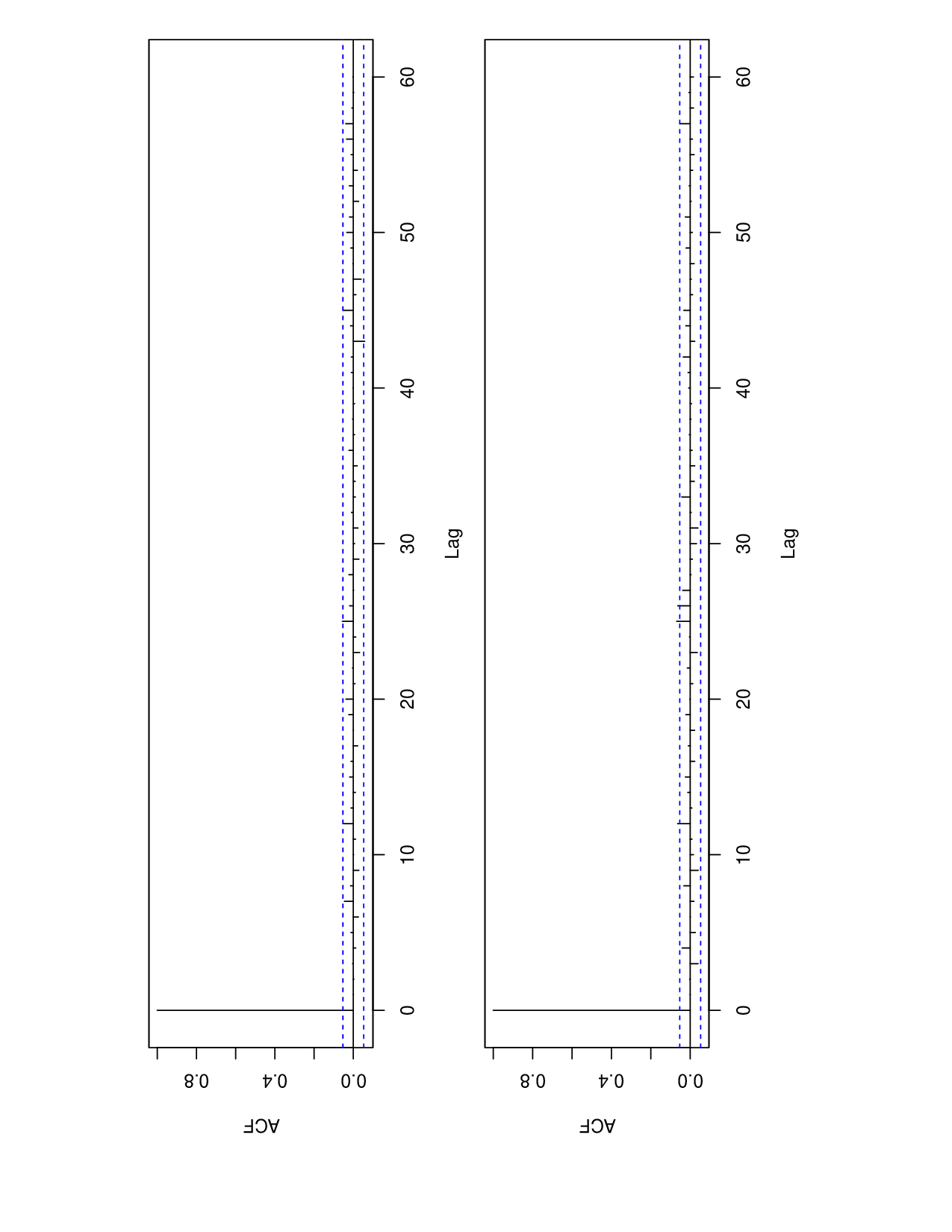}
\caption{\label{fg:Tuscaloosa_acf}
Sample model residual autocorrelations for Tmax (top panel) and Tmin 
(bottom panel), fitted using the univariate BMDL with metadata and $p=2$.}
\end{figure}

A reference series is a record from a station near the target station 
that is subtracted from the target series.  The idea is that two nearby 
stations should experience similar weather; hence, any trends or 
seasonal cycles should be lessened (if not altogether removed) in the 
target minus reference subtraction. Changepoints caused by artificial 
reasons, rather than by real climate changes, are easier to detect 
(visually) in target minus reference comparisons. Following 
\citet{Lu_etal_2010}, our reference series is obtained by averaging 
three nearby stations: Aberdeen, MS; Greensboro, AL; and Selma, AL.  By 
averaging multiple reference series (this is called a composite 
reference), impacts of mean shifts in any of the individual stations in 
the composite reference are lessened.

\begin{figure}
\centering
\includegraphics[width = 0.6\textwidth, angle = 270]
{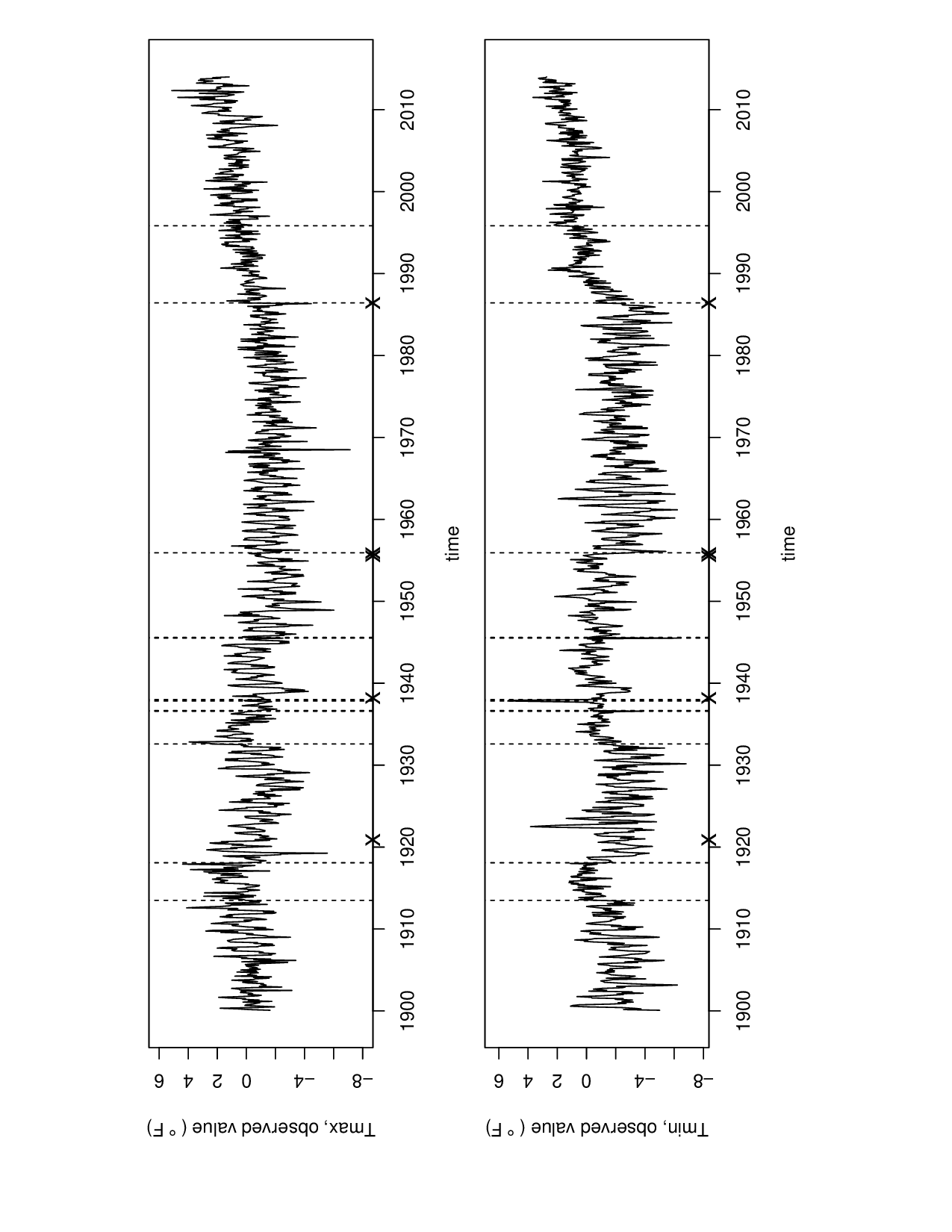}
\caption{\label{fg:Tuscaloosa3} 
Target minus reference Tmax (top panel) and Tmin (bottom panel) 
series. Metadata times for Tuscaloosa are marked with crosses on the 
axis. Vertical dashed lines show estimated changepoint times from our 
methods. } 
\end{figure}

Figure \ref{fg:Tuscaloosa3} shows the optimal changepoint configuration 
for the target minus reference series and contains 12 concurrent 
changes: June 1914, January 1919, July 1933, July 1937, August 1937, 
October 1938, December 1938, June 1946, July 1946, November 1956, May 
1987, and October 1996.  Among them, the 1956 and 1987 changepoints are 
in the metadata; the two changepoints in 1938 are close to the 1939 
station relocation. The changepoints in 1919, 1933, and 1990 are also 
flagged by \citet{Lu_etal_2010}. One of the shifts, November 1956, moves 
the Tmax series warmer and the Tmin series colder.

The October and December 1938 changepoints are likely due to typos in 
the data record. Specifically, the October and November 1938 Tmin values 
in the target minus reference series appear to be abnormally high. While 
the data have been quality checked, some errors persist.  This 
conjecture is made because the three reference stations lie in various 
directions from Tuscaloosa; climatologically, series to the north and 
west of Tuscaloosa should be cooler and those to the south and east 
should be warmer. In this case, Tuscaloosa was significantly warmer than 
all three references.  Similar statements apply to the two ``outlier'' 
changepoints in 1937, and the two changepoints in 1946, where the Tmin 
records for Tuscaloosa are lower than those for all three reference 
stations.  It is interesting that our method picked up outliers.

It is natural to flag more changepoints in the target minus reference 
series than the target series alone.  An ideal reference series should 
have the same trend and seasonal cycles as the target series and be free 
of artificial mean shifts.  This said, we do not assume that the target 
minus reference comparison completely removes the monthly mean cycle; 
indeed, \cite{Liu_etal_2016} shows that this is seldom the case. 
Reference series selection is a problem currently studied by 
climatologists.  As our reference series averages three neighbor 
stations, mean shifts in any of the reference records may induce shifts 
in the target minus reference series. For example, the estimated 
changepoint in 1914 is close to the 1915 metadata time listed in the 
Aberdeen reference.  This said, averaging three neighbors should help 
mitigate the effects of changepoints in any individual reference series.

%%%%%%%%%%%%%%%%%%%%%%%%%%%%%%%%%%%%%%%%%%%%%%%%
%{\small
%\setstretch{0.85}
%\bibliographystyle{asa}
%\bibliography{ChangepointTxTn}
%}

%\begingroup
%\setstretch{0.8}
%\setlength\bibitemsep{0pt}
%\printbibliography
%\endgroup

\end{document}